\pdfoutput=1
\documentclass[10pt]{article}
\usepackage[left=0.75in,right=0.75in,top=0.75in,bottom=0.75in]{geometry}
\usepackage[utf8]{inputenc}
\usepackage{times}
\usepackage{amssymb,amsmath,latexsym,amsthm,amsfonts,mathtools}

\usepackage{mathrsfs}
\usepackage{multirow}
\usepackage{graphicx}
\usepackage{color}
\usepackage{textcomp,siunitx}
\usepackage{float}
\usepackage{subcaption}
\usepackage{booktabs}
\usepackage{enumitem}

\usepackage[hyperref]{xcolor}
\definecolor{ao(english)}{rgb}{0.0, 0.5, 0.0}
\usepackage{hyperref}
\hypersetup{colorlinks, breaklinks, citecolor=blue, linkcolor=ao(english), urlcolor=blue}

\usepackage[nameinlink,noabbrev,sort&compress]{cleveref}
\crefname{assumption}{Assumption}{Assumptions}
\usepackage{setspace}
\usepackage{cite}

\theoremstyle{plain}
\newtheorem{theorem}{Theorem}
\newtheorem{lemma}{Lemma}

\theoremstyle{remark}
\newtheorem{remark}{Remark}

\theoremstyle{definition}
\newtheorem{definition}{Definition}

\newtheorem{problem}{Problem}
\newcommand{\sign}{\mathrm{sign}}

\colorlet{velcol}{blue!80!black}
\colorlet{acccol}{red!80!black}
\colorlet{rotcol}{purple!80!black}
\colorlet{uavbody}{black!85}
\colorlet{uavrotor}{gray!15}
\colorlet{axiscol}{gray!80}
\colorlet{loscol}{black!80}
\colorlet{landcol}{orange!90!black}
\allowdisplaybreaks

\usepackage{scalerel}
\usepackage{tikz}
\usetikzlibrary{automata, shapes, arrows, calc, arrows.meta, fit, positioning, shadows,angles}
\usetikzlibrary{svg.path}
\definecolor{orcidlogocol}{HTML}{A6CE39}
\tikzset{
	orcidlogo/.pic={
		\fill[orcidlogocol] svg{M256,128c0,70.7-57.3,128-128,128C57.3,256,0,198.7,0,128C0,57.3,57.3,0,128,0C198.7,0,256,57.3,256,128z};
		\fill[white] svg{M86.3,186.2H70.9V79.1h15.4v48.4V186.2z}
		svg{M108.9,79.1h41.6c39.6,0,57,28.3,57,53.6c0,27.5-21.5,53.6-56.8,53.6h-41.8V79.1z M124.3,172.4h24.5c34.9,0,42.9-26.5,42.9-39.7c0-21.5-13.7-39.7-43.7-39.7h-23.7V172.4z}
		svg{M88.7,56.8c0,5.5-4.5,10.1-10.1,10.1c-5.6,0-10.1-4.6-10.1-10.1c0-5.6,4.5-10.1,10.1-10.1C84.2,46.7,88.7,51.3,88.7,56.8z};
	}
}
\newcommand\orcidicon[1]{\href{https://orcid.org/#1}{\mbox{\scalerel*{
				\begin{tikzpicture}[yscale=-1,transform shape]
					\pic{orcidlogo};
				\end{tikzpicture}
			}{|}}}}
		
\title{Equivalent-Agent Guidance for Cooperative UAV Payload Transportation}
\author{Saurabh~Kumar\textsuperscript{\orcidicon{0000-0002-8344-5966}}
\thanks{Saurabh Kumar and Shashi Ranjan Kumar are with the Intelligent Systems \& Control Lab, Department of Aerospace Engineering, Indian Institute of Technology Bombay, Powai -- 400076, Mumbai, India. \newline Abhinav Sinha is with the Guidance, Autonomy, Learning, and Control for Intelligent Systems (GALACxIS) Lab, Department of Aerospace Engineering, University of Cincinnati, Cincinnati, OH, 45221,  USA. \newline
E-mails: saurabh.k@aero.iitb.ac.in, srk@aero.iitb.ac.in, abhinav.sinha@uc.edu.}	
\and Shashi~Ranjan~Kumar\textsuperscript{\orcidicon{0000-0001-6446-7281}}
\and Abhinav~Sinha\textsuperscript{\orcidicon{0000-0001-6419-2353}} 
}	
\date{}
\begin{document}
\maketitle
\onehalfspacing
\begin{abstract}
	This paper develops a guidance framework for cooperative transportation of a rigid payload by two uncrewed aerial vehicles (UAVs) to stationary and maneuvering landing platforms. A virtual equivalent-agent representation is first introduced to describe the translational motion of the rigidly coupled UAV–payload system, allowing the transportation problem to be formulated in terms of relative range and line-of-sight dynamics with respect to the landing platform. A geometric analysis establishes the terminal feasibility conditions for payload delivery. In particular, an arbitrary prescribed approach angle can be achieved for a stationary platform, whereas successful delivery to a maneuvering platform with zero relative velocity requires terminal velocity and heading angle synchronization and consequently a zero landing angle. Leveraging this framework, a robust fixed-time sliding mode guidance strategy is developed to regulate both relative range and line-of-sight dynamics. A separate link-orientation controller and control allocation scheme is presented to map virtual equivalent agent commands to the individual UAV's control inputs. The proposed strategy guarantees convergence to the desired landing configuration within a uniformly bounded time, independent of initial engagement geometries, while explicitly accommodating uncertainties arising from target maneuvers.  Numerical simulations demonstrate accurate delivery under different terminal approach angles and platform maneuvers, while processor-in-the-loop implementation on a Raspberry Pi demonstrates that the guidance algorithm satisfies the real-time computational requirements. Nonetheless, a comparative analysis shows that the proposed framework achieves better tracking accuracy and faster sliding surface convergence while requiring significantly less control energy from each UAV.
	
	\medskip
	\noindent \emph{\textbf{Keywords}}--- Cooperative load transportation, nonlinear guidance, uncrewed aerial vehicles, fixed-time control, sliding mode control.
\end{abstract}
	
\section{Introduction}\label{sec:introduction}
Load transportation refers to the task of delivering a payload from a given initial location to a desired destination and constitutes a fundamental problem in autonomous systems mobility \cite{doi:10.2514/1.G006608,doi:10.1016/j.ast.2024.109225}. The rapid advancement of autonomous vehicles has significantly transformed payload transportation by enabling precise, reliable, and repeatable operations even under off-nominal conditions. A wide range of autonomous platforms, including uncrewed ground vehicles (UGVs), uncrewed surface vessels (USVs), and uncrewed aerial vehicles (UAVs), have been utilized for payload transportation, each offering unique advantages that cater to specific applications. While early developments in autonomous load transportation primarily focused on ground-based platforms, recent research has emphasized UAV-based solutions due to their superior maneuverability, accessibility, and operational flexibility. 

Despite these advantages, the payload transportation capability of a single UAV is inherently limited by its lifting capacity, which is determined by actuator power and vehicle size. Increasing the payload beyond this limit degrades flight endurance, efficiency, and safety margins. An immediate way to increase the UAV's payload capacity is to increase its size, which will increase lift and, consequently, its payload capacity. While this concept is effective in the case of ground vehicles, it is less favorable for UAVs due to increased cost, complexity, and reduced agility. As an alternative, cooperative load transportation, where multiple smaller UAVs collaboratively transport a payload, has emerged as a promising solution. Compared to a single large UAV, teams of smaller UAVs offer advantages such as modularity, redundancy, lower cost, and ease of deployment. However, cooperative load transportation introduces significant control challenges arising from strongly coupled nonlinear dynamics \cite{doi:10.1016/j.ast.2025.110042,doi:10.1016/j.ast.2024.108873}, underactuation \cite{doi:10.1016/j.ast.2025.110713,doi:10.1016/j.ast.2021.107139}, and interaction forces between the UAVs and the payload \cite{doi:10.1016/j.ast.2023.108201}. When the payload is suspended \cite{doi:10.1016/j.ast.2017.07.028,doi:10.1016/j.ast.2020.105770} or rigidly attached \cite{doi:10.2514/1.G005098}, its motion introduces additional degrees of freedom, often leading to oscillations, internal force redistribution, and degradation of tracking performance. Consequently, the design of robust and efficient control strategies for cooperative load transportation remains an active area of research.

A wide variety of control architectures have been proposed to address cooperative load transportation, including centralized and decentralized strategies, formation-based approaches, and nonlinear feedback control methods. These architectures aim to ensure coordinated motion of the UAV team while simultaneously regulating payload position and suppressing oscillations. Centralized approaches typically rely on a coordinating unit that computes control inputs for all vehicles and communicates them to the individual agents \cite{doi:10.1007/s10514-010-9205-0}. In contrast, decentralized frameworks aim to improve scalability and robustness by allowing each agent to compute its control input locally while pursuing a shared objective. One such approach was discussed in \cite{doi:10.1007/978-3-642-32723-0_39}, where individual control actions of the agents were calculated based on the grasping point on the payload. From a formation control perspective, decentralized controllers have been developed in \cite{5504175} to regulate internal forces and ensure velocity convergence of UAV teams carrying payloads. Another decentralized paradigm, utilizing concepts from continuum mechanics, known as \emph{cooperative payload lift and manipulation (CALM)}, is discussed in \cite{doi:10.1016/j.ast.2018.09.005}. Adaptive and learning-based decentralized strategies, see, for example \cite{doi:10.1016/j.ast.2024.108960}, have also been proposed to handle uncertainties such as unknown or time-varying payload mass distributions.

Leader–follower and master–slave paradigms constitute another important class of cooperative transportation strategies. The work in \cite{7989678} discussed a master-slave paradigm in which a master UAV, controlled remotely, lifts the payload and moves it in the desired direction, and the slave UAV, which is also attached to the payload, provides compliance to the master's movement by sensing the force applied by the master UAV using an admittance controller. A leader-follower strategy in which the payload serves as the leader whose desired trajectory dictates force commands for the follower vehicles is discussed in \cite{doi:10.2514/1.G004680}. UAV-based manipulation has also been extensively studied in the context of construction and assembly tasks. Early efforts include the use of quadrotors to assemble cubic structures using magnetic connections \cite{doi:10.1007/978-3-642-36279-8_13} and architectures for building foam towers using teams of UAVs \cite{6853477}. Other approaches employ ropes, cables, and rigid components to construct tensile structures \cite{6696853}. Closely related application domains include mid-air refueling and rendezvous problems, in which UAVs are required to track, synchronize with, and maintain relative motion with moving platforms \cite{8089403,doi:10.1016/j.ast.2020.105756}. In \cite{8089403}, an optimal control framework is employed to generate a feasible trajectory for a fixed-wing UAV that must rendezvous with a moving ground vehicle within a prescribed time interval. Similarly, the work in \cite{doi:10.1016/j.ast.2020.105756} formulates an optimal control problem by linearizing the UAV’s nonlinear dynamics and accommodating control constraints through inequality-based formulations. From a different perspective, sliding-mode-based guidance strategies have also been investigated for mid-air refueling scenarios. For instance, the authors in \cite{5530997} propose a pure-pursuit sliding mode guidance law that enables a UAV to robustly follow a maneuvering tanker aircraft. Additionally, capture problems involving cooperative UAV teams have been studied. In particular, the work in \cite{doi:10.2514/1.G004626} presents a collision-cone-based guidance strategy for coordinating multiple UAVs equipped with a net to capture intruding targets.

The authors in \cite{6696850} studied the dynamics of a deformable ring supported by multiple UAVs that were rigidly attached around its perimeter, focusing on controlling the ring’s deformation through cooperative vehicle behavior. The work in \cite{doi:10.1002/rob.20401} presented experimental results on payload lifting using ropes and a team of helicopters, with an emphasis on controlling the helicopters’ motion and attitude while accounting for the additional dynamics introduced by flexible tethers. A related problem was addressed in \cite{doi:10.1007/s10514-010-9205-0}, where three aerial drones manipulated the position and orientation of a disc suspended from cables. The works \cite{7989609,8460529} developed control strategies for manipulating a rigid bar suspended by cables from two UAVs. In particular, \cite{7989609} proposed a vision-based control scheme within a leader–follower framework, while \cite{8460529} focused on stabilizing the rigid bar when the UAVs were heterogeneous and the cable lengths were unequal. The authors in \cite{doi:10.2514/6.2019-3270} introduced a leader–follower formation control approach for two UAVs transporting a suspended payload via flexible beams and designed optimal feedback controllers that compensated for payload-induced effects. The work in \cite{doi:10.1007/s10846-019-01048-4} examined cooperative load transport by multiple quadrotor-type UAVs connected to a rigid payload and proposed an adaptive control method incorporating estimation and consensus to ensure equal load sharing. Similarly, the work in \cite{doi:10.15607/RSS.2013.IX.011} investigated cooperative payload transport by multiple quadrotor UAVs using cables, modeled the payload as both a point mass and a three-dimensional rigid body, and focused on deriving a differentially flat hybrid system to facilitate controller design. A distributed control strategy for a team of UAVs connected to the payload via ropes is discussed in \cite{doi:10.1016/j.ast.2018.10.027}. In \cite{doi:10.1016/j.ast.2020.106284}, a load transportation strategy is proposed for a 2-UAV system, under the assumption that the length of the payload is greater than the horizontal separation between the UAVs. Another strategy, wherein two hovercraft jointly carry a payload, was discussed in \cite{doi:10.1016/j.ast.2025.110713}. The work in \cite{doi:10.1016/j.ast.2024.109078} addressed the load transportation problem for tethered UAVs using the concept of differential graphical games. Note that in all of these studies, the UAVs remained continuously connected to the payload.

In contrast to the previously described scenario, studies had also considered UAVs that were initially free-flying and later grasped a payload \cite{doi:10.1007/s10846-012-9743-0,doi:10.1002/rob.21816,8098676}. The work \cite{doi:10.1007/s10846-012-9743-0} presented the design of sliding-mode-based motion controllers for four multirotor UAVs that simultaneously reached predefined contact points on an object, grasped it, and cooperatively manipulated its position along simple trajectories. They emphasized the importance of modeling the multi-UAV system as a single entity interacting with the payload. The authors in \cite{doi:10.1002/rob.21816} employed Dubins-based algorithms to generate smooth trajectories for a UAV to search for and locate objects intended for pickup. After detecting the object, waypoint-based maneuvers were used to grasp, transport, and release it at a specified location. A similar methodology was adopted in \cite{8098676} for object detection, grasping, and delivery using UAVs.

Despite the substantial progress in cooperative load transportation, most existing approaches formulate the problem primarily in terms of trajectory tracking, formation regulation, payload stabilization, or direct motion control of the individual UAVs. Such formulations are effective for maintaining the desired cooperative configuration, but they do not explicitly address the terminal geometry of payload delivery, particularly when the platform itself is maneuvering. From a guidance perspective, payload delivery can instead be viewed as a terminal engagement problem between the transported payload and the landing platform. This viewpoint raises two fundamental questions: what terminal approach geometries are physically feasible for stationary and maneuvering platforms, and how can the cooperative system be guided to the corresponding terminal state within a convergence time that is independent of the initial engagement geometry? Addressing these questions requires a guidance formulation that captures the coupled motion of the UAVs through the payload geometry while retaining the relative range and line-of-sight variables that directly characterize terminal delivery. In this work, we develop such a formulation by representing the translational motion of the rigidly coupled UAV--payload system through a virtual equivalent agent and casting the payload delivery problem as a nonlinear engagement problem. This perspective enables the terminal feasibility of different landing geometries to be characterized explicitly and provides a systematic basis for the development of fixed-time cooperative guidance laws for both stationary and maneuvering landing platforms.

A virtual equivalent-agent representation is developed for a rigid two-UAV payload system, through which the coupled translational motion of the UAV--payload system is formulated as a reduced-order engagement problem with respect to the landing platform. This formulation establishes a direct guidance-level representation of cooperative payload delivery in terms of relative range, line-of-sight angle, and their associated dynamics.

The terminal delivery conditions are systematically analyzed for stationary and maneuvering landing platforms. It is established that a stationary platform enables an arbitrary prescribed landing angle, whereas exact delivery to a maneuvering platform with zero relative velocity necessarily requires synchronization of velocity and heading, resulting in a zero terminal landing angle. This analysis reveals a fundamental distinction between interception and successful payload landing on a moving platform.

A unified fixed-time guidance framework is developed for regulating the relative range and line-of-sight dynamics of the equivalent-agent--platform engagement. The proposed guidance laws provide convergence-time bounds that are independent of the initial engagement geometry and explicitly account for bounded platform maneuvers. The resulting framework enables the cooperative system to achieve the terminal conditions required for payload delivery to both stationary and maneuvering platforms.

A systematic acceleration-allocation and link-orientation control framework is developed to transform the virtual equivalent-agent guidance commands into commands for the individual UAVs while maintaining the prescribed rigid-link configuration. 

The performance of the proposed framework is evaluated through rigorous numerical simulations, including Monte Carlo studies, comparative performance analysis, and processor-in-the-loop implementation, which demonstrate its tracking performance and real-time computational feasibility.

\section{Problem Formulation}\label{ch8_sec:load_probelm}
Consider a system comprising two UAVs, denoted by $A$ and $B$, collaboratively transporting a rigid payload of length $L$ to a moving landing platform $P$. Each UAV is modeled as a point-mass nonholonomic vehicle operating at a constant altitude. Consequently, the engagement is confined to the horizontal $X$-$Y$ plane, and all motion along the vertical direction is neglected. A schematic representation of the UAV-payload geometry is shown in \Cref{fig:2UAV_system}.
\begin{figure}[!ht]
\centering
\begin{tikzpicture}[>=Latex, scale=1, transform shape]
\tikzset{
uav/.pic={
	\draw[line width=2pt, black!70] (-0.25,-0.25) -- (0.25,0.25);
	\draw[line width=2pt, black!70] (-0.25,0.25)  -- (0.25,-0.25);
	\filldraw[fill=uavrotor, draw=gray, thin] (-0.25,-0.25) circle (0.14);
	\filldraw[fill=uavrotor, draw=gray, thin] (0.25,0.25) circle (0.14);
	\filldraw[fill=uavrotor, draw=gray, thin] (-0.25,0.25) circle (0.14);
	\filldraw[fill=uavrotor, draw=gray, thin] (0.25,-0.25) circle (0.14);
	\filldraw[fill=uavbody, draw=black, thick, rounded corners=1pt] (-0.12,-0.12) rectangle (0.12,0.12);
	\filldraw[fill=red!80, draw=none] (0.08,0) circle (0.04); 
},
platform/.pic={
	\fill[black!30, rounded corners=4pt] (-0.55,-0.55) rectangle (0.75,0.45);
	
	\filldraw[fill=black!85, draw=black, rounded corners=1pt] (-0.45, 0.45) rectangle (-0.1, 0.65);
	\filldraw[fill=black!85, draw=black, rounded corners=1pt] (0.15, 0.45) rectangle (0.5, 0.65);
	\filldraw[fill=black!85, draw=black, rounded corners=1pt] (-0.45, -0.65) rectangle (-0.1, -0.45);
	\filldraw[fill=black!85, draw=black, rounded corners=1pt] (0.15, -0.65) rectangle (0.5, -0.45);
	
	\filldraw[fill=gray!25, draw=black!80, line width=1pt, rounded corners=4pt] (-0.6,-0.5) rectangle (0.7,0.5);
	\filldraw[fill=gray!15, draw=gray!40, rounded corners=2pt] (-0.5,-0.4) rectangle (0.6,0.4);
	
	\filldraw[fill=black!70, draw=black, rounded corners=2pt] (0.4,-0.3) rectangle (0.6,0.3);
	\filldraw[fill=cyan!20, draw=none, rounded corners=1pt] (0.45,-0.2) rectangle (0.55,0.2); 
	
	\filldraw[fill=yellow!90!white, draw=none] (0.65, 0.25) rectangle (0.7, 0.4);
	\filldraw[fill=yellow!90!white, draw=none] (0.65, -0.4) rectangle (0.7, -0.25);
	\filldraw[fill=red!80, draw=none] (-0.6, 0.25) rectangle (-0.55, 0.4);
	\filldraw[fill=red!80, draw=none] (-0.6, -0.4) rectangle (-0.55, -0.25);
	
	\filldraw[fill=black!80, draw=yellow, line width=1.5pt] (-0.05,0) circle (0.35);
	\draw[dashed, yellow, line width=0.8pt] (-0.05,0) circle (0.25);
	\node[text=yellow, rotate=-90] at (-0.05,0) {\Large \textbf{\textsf{L}}};
}
}
	\coordinate (Origin) at (0,0);
	\coordinate (U) at (3, 3);
	\def\xiAngle{30}
	\def\lenUA{2.5} 
	\def\lenUB{3.5} 
	\coordinate (A) at ($(U) + (\xiAngle:\lenUA)$);
	\coordinate (B) at ($(U) + (\xiAngle+180:\lenUB)$);
	\coordinate (P) at (7, 0);
	\draw[->, thick, gray!80] (-1.5,-1) -- (8.5,-1) node[right, text=black] {$X_I$};
	\draw[->, thick, gray!80] (-1.5,-1) -- (-1.5,6) node[above, text=black] {$Y_I$};
	\draw[line width=2.5pt, gray!60] (B) -- (A);
	\draw[dashed, thick, black!60] (B) -- (A);
	\pic[rotate=45] at (A) {uav}; 
	\node[below right=2pt and 12pt] at (A) {$A$}; 
	\pic[rotate=80] at (B) {uav}; 
	\node[below left=12pt and 2pt] at (B) {$B$}; 
	\filldraw[black] (U) circle (2.5pt) node[below=8pt] {$U$};
	\pic[rotate=45] at (P) {platform}; 
	\node[below right=20pt] at (P) {$P$};
	\draw[dashed, thin, gray!80] (U) -- ++(1.8,0) coordinate (Ux);
	\draw[dashed, thin, gray!80] (A) -- ++(1.5,0) coordinate (Ax);
	\draw[dashed, thin, gray!80] (B) -- ++(1.5,0) coordinate (Bx);
	\draw[dashed, thin, gray!80] (P) -- ++(1.5,0) coordinate (Px);
	\coordinate (A_dim) at ($(A) + (\xiAngle-90:0.8)$);
	\coordinate (U_dim) at ($(U) + (\xiAngle-90:0.8)$);
	\coordinate (B_dim) at ($(B) + (\xiAngle-90:0.8)$);
	\draw[<->, thick, black!70] (U_dim) -- (A_dim) node[midway, sloped, below=4pt] {$\sigma L$};
	\draw[<->, thick, black!70] (B_dim) -- (U_dim) node[midway, sloped, below=4pt] {$(1-\sigma) L$};
	\draw[thin, gray] (A) -- (A_dim);
	\draw[thin, gray] (U) -- (U_dim);
	\draw[thin, gray] (B) -- (B_dim);
	\draw[->, thick, velcol] (U) -- ++(65:1.5) coordinate (Vu) node[right=2pt] {$v_u$};
	\draw[->, thick, acccol] (U) -- ++(155:1.5) coordinate (Au) node[above left=-2pt] {$a_u$};
	\pic [draw, thin, gray, angle radius=0.25cm] {right angle = Vu--U--Au};
	\draw[->, thick, velcol] (A) -- ++(45:1.5) coordinate (Va) node[right=2pt] {$v_a$};
	\draw[->, thick, acccol] (A) -- ++(135:1.5) coordinate (Aa) node[above left=-3pt] {$a_a$};
	\pic [draw, thin, gray, angle radius=0.25cm] {right angle = Va--A--Aa};
	\draw[->, thick, velcol] (B) -- ++(80:1.5) coordinate (Vb) node[right=2pt] {$v_b$};
	\draw[->, thick, acccol] (B) -- ++(170:1.5) coordinate (Ab) node[below right = 3pt] {$a_b$};
	\pic [draw, thin, gray, angle radius=0.25cm] {right angle = Vb--B--Ab};
	\draw[->, thick, velcol] (P) -- ++(45:1.8) coordinate (Vp) node[above right=2pt] {$v_p$};
	\draw[->, thick, acccol] (P) -- ++(135:1.4) coordinate (Ap) node[above left=-2pt] {$a_p$};
	\pic [draw, thin, gray, angle radius=0.25cm] {right angle = Vp--P--Ap};
	\pic [draw, ->, angle radius=0.8cm] {angle = Ux--U--A};
	\node at ($(U) + (15:1.1)$) {$\xi$};
	\pic [draw, ->, velcol, angle radius=0.6cm] {angle = Ux--U--Vu};
	\node[velcol] at ($(U) + (48:0.9)$) {$\psi_u$}; 
	\pic [draw, ->, acccol, angle radius=1cm] {angle = Ux--U--Au};
	\node[acccol] at ($(U) + (110:1.2)$) {$\gamma_u$}; 
	\pic [draw, ->, velcol, angle radius=0.6cm] {angle = Ax--A--Va};
	\node[velcol] at ($(A) + (22.5:0.85)$) {$\psi_a$};
	\pic [draw, ->, acccol, angle radius=1.05cm] {angle = Ax--A--Aa};
	\node[acccol] at ($(A) + (100:1.2)$) {$\gamma_a$};
	\pic [draw, ->, velcol, angle radius=0.6cm] {angle = Bx--B--Vb};
	\node[velcol] at ($(B) + (50:0.85)$) {$\psi_b$}; 
	\pic [draw, ->, acccol, angle radius=1cm] {angle = Bx--B--Ab};
	\node[acccol] at ($(B) + (135:1.2)$) {$\gamma_b$};
	\pic [draw, ->, velcol, angle radius=0.9cm] {angle = Px--P--Vp};
	\node[velcol] at ($(P) + (22.5:1.2)$) {$\psi_p$};
	\pic [draw, ->, acccol, angle radius=0.7cm] {angle = Px--P--Ap};
	\node[acccol] at ($(P) + (100:0.95)$) {$\gamma_p$};
	\draw[->, rotcol] ($(U)+(-10:1.2)$) arc (-15:15:1.2) node[right=0pt] {$(\omega, \alpha)$};
	\draw[dash dot, thick, gray!80] (U) -- (P) node[pos=0.5, sloped, above=1pt, text=black] {LOS ($r$)};
\end{tikzpicture}
\caption{Cooperative two-UAV payload transportation and virtual equivalent agent representation.}
\label{fig:2UAV_system}
\end{figure}

To facilitate controller synthesis, the translational motion of the payload is represented by an equivalent agent ($U$) located on the rigid payload. The position of the equivalent agent is defined as a convex combination of the UAV positions. Under the rigid-link constraint, the configuration of the two-UAV--payload system can be parametrized by the equivalent-agent position $r_u$ together with the link orientation $\xi$. The mapping from the admissible physical configuration $(r_a,r_b)$ to $(r_u,\xi)$ is one-to-one and admits a smooth inverse. Thus, the translational motion can be represented through $r_u$, while the rotational degree of freedom is retained separately through $\xi$. This decomposition enables the translational guidance problem to be formulated as an engagement between the equivalent agent $U$ and the landing platform $P$.

To this end, we introduce three non-rotating and non-accelerating reference frames in the two-dimensional Euclidean space: two attached to UAVs $A$ and $B$, and one attached to the equivalent agent $U$. The rigid payload rotates about the axis perpendicular to the plane with angular velocity $\omega$ and angular acceleration $\alpha$. Let $\xi$ denote the angle between the segment $UA$ and the $X$--axis of the equivalent agent frame, measured counterclockwise. The angle subtended by the segment $UB$ is then $(\pi+\xi)$.

Let $j \in \{a,b,u,p \}$ denote the index for UAV $A$, UAV $B$, equivalent agent $U$, platform $P$, respectively.
Let $v_{j}$ and $a_{j}$ denote the magnitude of velocity and acceleration with corresponding direction $\psi_{j}$ and $\gamma_{j}$, respectively. The velocity and acceleration components of each vehicle can be expressed in the component form as
\begin{subequations}
	\begin{align}
		v_{j X} &= v_{j} \cos\psi_{j}, \ v_{j Y} = v_{j} \sin\psi_{j}, \label{eqn:vix_viy}\\
		a_{j X} &= a_{j} \cos\gamma_j, \ a_{j Y} = a_{j} \sin\gamma_{j}. \label{eqn:amix_amiy}
	\end{align}
\end{subequations}

Based on rigid-body kinematics, the velocities and accelerations of UAVs $A$ and $B$ are related to the equivalent agent $U$ as
\begin{subequations}
\begin{align}
	\begin{bmatrix}
		v_{j X}\\
		v_{j Y}\\
		0
	\end{bmatrix}
	&=
	\begin{bmatrix}
		v_{uX}\\
		v_{uY}\\
		0
	\end{bmatrix}
	+
	\begin{bmatrix}
		0\\
		0\\
		\omega
	\end{bmatrix}
	\times r_{j u}, \label{ch8_eq:vl_component_1}\\
	\begin{bmatrix}
		a_{j X}\\
		a_{j Y}\\
		0
	\end{bmatrix}
	&=
	\begin{bmatrix}
		a_{u X}\\
		a_{u Y}\\
		0
	\end{bmatrix}
	+
	\begin{bmatrix}
		0\\
		0\\
		\alpha
	\end{bmatrix}
	\times r_{j u}
	+
	\begin{bmatrix}
		0\\
		0\\
		\omega
	\end{bmatrix}
	\times
	\left(
	\begin{bmatrix}
		0\\
		0\\
		\omega
	\end{bmatrix}
	\times r_{j u}
	\right), \label{ch8_eq:am_component_1}
\end{align}
\end{subequations}
for $j \in \{a,b\}$, where $r_{j u}$ denotes the position of UAV $j$ relative to $U$. The relative position vectors are given by
\begin{align}
	r_{au} &=
	\begin{bmatrix}
		\sigma L \cos\xi\\
		\sigma L \sin\xi\\
		0
	\end{bmatrix},\,
	r_{bu} =
	\begin{bmatrix}
		-(1-\sigma)L \cos{\xi}\\
		-(1-\sigma)L \sin{\xi}\\
		0
	\end{bmatrix}. \label{ch8_eq:rau_rbu}
\end{align}
By substituting the values of $r_{au}$ and $r_{bu}$ from \eqref{ch8_eq:rau_rbu} into \eqref{ch8_eq:vl_component_1}, the components of the velocities can be obtained as 
\begin{subequations}\label{eqn:compvab}
	\begin{align}
		v_{aX}&=v_{uX} - \omega \sigma L \sin{\xi},\
		v_{aY}=v_{uY} + \omega \sigma L \cos{\xi},\\
		v_{bX}&=v_{uX} + \omega \left(1-\sigma\right) L \sin{\xi},\
		v_{bY}=v_{uY} - \omega \left(1-\sigma\right) L \cos{\xi}.
	\end{align} 
\end{subequations}
Similarly, by substituting  \eqref{ch8_eq:rau_rbu} into \eqref{ch8_eq:am_component_1}, one may obtain the components of the acceleration as
\begin{subequations}\label{eqn:compamab}
	\begin{align}
		a_{aX}&=a_{uX} - \alpha \sigma L \sin{\xi} - \omega^2 \sigma L \cos{\xi},\\
		a_{aY}&=a_{uY} + \alpha \sigma L \cos{\xi} - \omega^2 \sigma L \sin{\xi},\\
		a_{bX}&=a_{uX} + \alpha \left(1-\sigma\right) L \sin{\xi} + \omega^2 \left(1-\sigma\right) L \cos{\xi},\\
		a_{bY}&=a_{uY} - \alpha \left(1-\sigma\right) L \cos{\xi} + \omega^2 \left(1-\sigma\right) L \sin{\xi}.
	\end{align}
\end{subequations}
Note that \eqref{eqn:compvab} and \eqref{eqn:compamab} represent the components of the velocities and accelerations of UAVs $A$ and $B$. These components depend on the velocity and acceleration of the equivalent agent, as well as the parameters $\sigma$, $\alpha$, $L$, and $\xi$. The parameters $\sigma$, $\alpha$, and $\xi$ are related to the orientation of the rigid link, the design of which will be discussed later.

Equation \eqref{eqn:compamab} defines a one-to-one correspondence between the equivalent agent acceleration $(a_{uX},a_{uY})$ and the individual UAV accelerations $(a_{aX},a_{aY},a_{bX},a_{bY})$, for a given set of link parameters $(\sigma,L,\xi,\omega,\alpha)$. This enables the control design to be carried out at the equivalent agent level and subsequently mapped to the individual UAVs while preserving the rigid-link constraint.

\begin{figure}[!ht]
\centering
\begin{tikzpicture}[>=Latex, scale=1]
	\coordinate (U) at (0,0);
	\def\thetaAng{40}
	\def\rDist{5}
	\coordinate (P) at (\thetaAng:\rDist);
	\draw[thick, loscol] (U) -- (P) node[midway, above=1pt] {$r$};
	\draw[dashed, ->, axiscol, thick] (U) -- ++(2.5,0) coordinate (Ux) node[right, text=black] {$X_U$};
	\draw[dashed, ->, axiscol, thick] (U) -- ++(0,2.2) coordinate (Uy) node[above, text=black] {$Y_U$};
	\filldraw[black] (U) circle (2pt) node[below left=2pt] {$U$};
	\draw[dashed, ->, axiscol, thick] (P) -- ++(2.5,0) coordinate (Px) node[above, text=black] {$X_P$};
	\draw[dashed, ->, axiscol, thick] (P) -- ++(0,2.2) coordinate (Py) node[below right =2pt, text=black] {$Y_P$};
	\filldraw[black] (P) circle (2pt) node[below right=2pt] {$P$};
	\def\psiU{65}
	\def\gammaU{155} 
	\def\psiP{55}
	\def\gammaP{145} 
	\draw[->, thick, velcol] (U) -- ++(\psiU:1.8) coordinate (Vu) node[above=1pt] {$v_u$};
	\draw[->, thick, acccol] (U) -- ++(\gammaU:1.5) coordinate (Au) node[above left=0pt] {$a_u$};
	\pic [draw, thin, gray, angle radius=0.25cm] {right angle = Vu--U--Au}; 
	\draw[->, thick, velcol] (P) -- ++(\psiP:1.8) coordinate (Vp) node[right=0pt] {$v_p$};
	\draw[->, thick, acccol] (P) -- ++(\gammaP:1.5) coordinate (Ap) node[above left=0pt] {$a_p$};
	\pic [draw, thin, gray, angle radius=0.25cm] {right angle = Vp--P--Ap}; 
	\pic [draw, ->, loscol, angle radius=1cm] {angle = Ux--U--P};
	\node[loscol] at ($(U) + (20:1.2)$) {$\theta$};
	\pic [draw, ->, velcol, angle radius=1.4cm] {angle = Ux--U--Vu};
	\node[velcol] at ($(U) + (25:1.7)$) {$\psi_u$};
	\pic [draw, ->, acccol, angle radius=0.6cm] {angle = Ux--U--Au};
	\node[acccol] at ($(U) + (110:0.85)$) {$\gamma_u$}; 
	\pic [draw, ->, velcol, angle radius=1.1cm] {angle = Px--P--Vp};
	\node[velcol] at ($(P) + (27.5:1.35)$) {$\psi_p$};
	\pic [draw, ->, acccol, angle radius=0.6cm] {angle = Px--P--Ap};
	\node[acccol] at ($(P) + (115:0.85)$) {$\gamma_p$}; 
\end{tikzpicture}
\caption{Engagement geometry between the equivalent agent and the landing platform.}
\label{ch8_fig:Virtual_UAV_Platform}
\end{figure}

The equivalent agent $U$, can maneuver in the two-dimensional space by changing its acceleration magnitude and direction ($a_{u}$ and $\gamma_{u}$), which will results into change in its speed $v_{u}$ and heading angle $\psi_{u}$. Similarly, the landing platform also changes its speed $v_p > 0$ and heading angle $\psi_{p}$ by adjusting the magnitude and direction of its acceleration $a_{p}$ and $\gamma_{p}$, respectively. 

To characterize the equation of motion between $U$ and $P$, we define four reference frames-- the earth-fixed inertial frame $\{X_{I}, Y_{I}\}$, two body-fixed frames $\{X_{U}, Y_{U}\}$ and $\{X_{P}, Y_{P}\}$, and a line-of-sight (LOS) frame $\{X_{L}, Y_{L}\}$. The LOS frame is oriented relative to the Earth-fixed inertial frame by an angle $\theta$, measured counterclockwise from the $X_{I}$ axis. Without loss of generality, we assume that the center of gravity of the equivalent agent $U$ is initially located at the origin of the Earth-fixed frame. Let $r$ denote the relative distance between $U$ and $P$, and $\theta$ denote the LOS angle measured counterclockwise from the inertial $X$-axis. This leads to the following engagement dynamics:
\begin{subequations} \label{eq:dynamics}
	\begin{align}
		\dot{r}=v_{r}=&~v_p\cos\left(\psi_{p}-\theta\right)-v_u\cos\left(\psi_{u}-\theta\right),\label{eq:rdot}\\
		r\dot{\theta}=v_{\theta}=&~v_p\sin\left(\psi_{p}-\theta\right)-v_u\sin\left(\psi_{u}-\theta\right), \label{eq:rthetadot}\\
		\dot{v}_u=&~a_{u}\cos\left(\gamma_{u}-\psi_{u}\right),\label{eq:vudot}\\
		v_{u}\dot{\psi}_u=&~a_{u}\sin\left(\gamma_{u}-\psi_{u}\right),\label{eq:vupsiudot}\\
		\dot{v}_p=&~a_{p}\cos\left(\gamma_{p}-\psi_{p}\right),\label{eq:vpdot}\\
		v_{p}\dot{\psi}_p=&~a_{p}\sin\left(\gamma_{p}-\psi_{p}\right).\label{eq:vppsipdot}
	\end{align}
\end{subequations}
It can be observed from \eqref{eq:vudot}-\eqref{eq:vupsiudot} that the scalar acceleration $a_u$ influences the translational dynamics through the relative orientation between the velocity vector and the acceleration direction, characterized by $(\gamma_u-\psi_u)$. Thus, variations in $a_u$ simultaneously change the speed and heading of the equivalent agent, with their effects governed by the instantaneous relative orientation. Consequently, directly deriving the control inputs becomes challenging. To circumvent this issue, the dynamics are reformulated in terms of acceleration components along suitably chosen directions, thereby yielding an equivalent representation that is affine in the newly defined control inputs. Using the geometrical relation from \Cref{ch8_fig:Virtual_UAV_Platform}, the relations in \eqref{eq:vudot}-\eqref{eq:vppsipdot} can be expressed in their equivalent, control-affine component form as
\begin{subequations}\label{eq:v_psi_dyn_comp}
\begin{align}	\dot{v}_{u}=&~a_{uX}\cos{\psi_{u}}+a_{uY}\sin{\psi_{u}},\label{eq:vudot_comp}\\
v_{u}\dot{\psi}_{u}=&~a_{uY}\cos{\psi_{u}}-a_{uX}\sin{\psi_{u}},\label{eq:vupsiudot_comp} \\
\dot{v}_{p}=&~a_{pX}\cos{\psi_{p}}+a_{pY}\sin{\psi_{p}},\label{eq:vpdot_comp}\\
v_{p}\dot{\psi}_{p}=&~a_{pY}\cos{\psi_{p}}-a_{pX}\sin{\psi_{p}}.\label{eq:vppsipdot_comp}
\end{align}
\end{subequations}
\begin{problem}
Consider the engagement between the equivalent agent and the landing platform, with relative kinematics governed by \eqref{eq:dynamics} and \eqref{eq:v_psi_dyn_comp}. Let the state vector be $\mathcal{C}$, comprising the engagement variables $(r,\ v_r,\ v_\theta)$. Design measurable control inputs $a_{uX}$ and $a_{uY}$ for the equivalent agent such that, for any admissible initial condition $\mathcal{C}(0)$, there exists a fixed time $T_{f}> 0$ for which the closed-loop trajectories remain in the target set $\mathscr{C}=\{\mathcal{C}\ \rvert\ r(t) =0,\ v_{\theta}(t)=0,\ v_{r}(t)=0\}$ $\forall$ $t \geq T_f$. Moreover, the control design must guarantee this fixed-time convergence independently of the initial engagement geometry.
\end{problem}

\begin{remark}
The target set $\mathscr C$ corresponds to a smooth landing configuration. The condition $r=0$ guarantees payload-platform coincidence, while $v_r=0$ and $v_\theta=0$ eliminate normal and tangential relative motion at contact. Consequently, achieving $\mathscr C$ ensures a soft and precise landing of the transported payload.
\end{remark}
\section{Design of the Guidance Strategy for the Two-UAV System} \label{ch8_sec:load_main}
In what follows, we develop guidance strategies for a two-UAV system to deliver a rigid object or payload to a landing platform, which may be either stationary or maneuvering. Before presenting the guidance strategy, we define the landing angle and derive the conditions on it for stationary and moving platforms.
\begin{definition}[Landing angle] \label{def:landing_angle}
	The \emph{landing angle}, shown in \Cref{fig:landing_angle}, is defined as the difference between the heading angle of the equivalent agent $U$, and the heading angle of the landing platform $P$, at the instant of landing, that is, $\theta_{L} = (\psi_{pf}-\psi_{uf})$ or $(\psi_{uf}-\psi_{pf})$.
\end{definition}
\begin{figure}[!ht]
\centering
\begin{tikzpicture}[>=Latex, scale=1]
	\coordinate (U) at (0, 5);
	\coordinate (P) at (3, 0);
	\coordinate (L) at (8, 3.5); 
	\def\psiU{45}
	\def\gammaU{135}   
	\def\psiP{50}
	\def\gammaP{140}   
	\def\psiUf{20}     
	\def\psiPf{60}     
	\draw[thick, black!80] (P) to[out=\psiP, in=210] (L);
	\draw[thick, black!80] (U) to[out=\psiU, in=190] (L);
	\draw[dashed, ->, gray!80, thick] (U) -- ++(3.0,0) coordinate (RefU) node[below, text=black] {$X_U$};
	\draw[dashed, ->, gray!80, thick] (P) -- ++(3.0,0) coordinate (RefP) node[right, text=black] {$X_P$};
	\draw[dashed, ->, gray!80, thick] (L) -- ++(3.0,0) coordinate (RefL) node[right, text=black] {};
	\draw[thin, loscol] (U) -- (P) coordinate[midway] (MidLOS);
	\node[below left=2pt] at (MidLOS) {$r$};
	\coordinate (Vu) at ($(U) + (\psiU:2.2)$);
	\coordinate (Au) at ($(U) + (\gammaU:1.5)$);
	\draw[->, thick, velcol] (U) -- (Vu) node[above right=-2pt] {$v_u$};
	\draw[->, thick, acccol] (U) -- (Au) node[above left=0pt] {$a_u$};
	\pic [draw, thin, gray, angle radius=0.25cm] {right angle = Vu--U--Au};
	\draw[->, loscol, thick] (U) ++(1.4,0) arc (0:-59:1.4) node[pos=0.7, right=2pt] {$-\theta$};
	\draw[->, velcol, thick] (U) ++(1.8,0) arc (0:\psiU:1.8) node[pos=0.6, above right=-2pt] {$\psi_u$};
	\draw[->, acccol, thick] (U) ++(0.9,0) arc (0:\gammaU:0.9) node[pos=0.7, above=2pt] {$\gamma_u$};
	\coordinate (Vp) at ($(P) + (\psiP:2.2)$);
	\coordinate (Ap) at ($(P) + (\gammaP:1.5)$);
	\draw[->, thick, velcol] (P) -- (Vp) node[above right=-2pt] {$v_p$};
	\draw[->, thick, acccol] (P) -- (Ap) node[above left] {$a_p$};
	\pic [draw, thin, gray, angle radius=0.25cm] {right angle = Vp--P--Ap};
	\draw[->, velcol, thick] (P) ++(1.4,0) arc (0:\psiP:1.4) node[pos=0.6,  right=2pt] {$\psi_p$};
	\draw[->, acccol, thick] (P) ++(0.6,0) arc (0:\gammaP:0.6) node[pos=0.7, above=2pt] {$\gamma_p$};
	\coordinate (Vuf) at ($(L) + (\psiUf:2.4)$);
	\coordinate (Vpf) at ($(L) + (\psiPf:2.4)$);
	\draw[->, thick, velcol] (L) -- (Vpf) node[above right=-2pt] {$v_{pf}$};
	\draw[->, thick, velcol] (L) -- (Vuf) node[below right=-2pt] {$v_{uf}$};
	\draw[->, velcol, thick] (L) ++(1.0,0) arc (0:\psiPf:1.0) node[pos=0.6, above right=-2pt] {$\psi_{pf}$};
	\draw[->, velcol, thick] (L) ++(1.4,0) arc (0:\psiUf:1.4) node[pos=0.9, below right=-2pt] {$\psi_{uf}$};
	\draw[<->, landcol, thick] (L) ++(\psiUf:1.9) arc (\psiUf:\psiPf:1.9) node[pos=0.3, above=3pt] {$\theta_L$};
	\filldraw[black] (U) circle (2pt) node[below left=2pt] {$U$};
	\filldraw[black] (P) circle (2pt) node[below=4pt, fill=white, inner sep=1pt] {$P$};
	\filldraw[black] (L) circle (2pt);
	\node[above=30pt of L, align=left] (PoLText) {Point of\\Landing};
	\draw[->, thin, gray] (PoLText) -- ($(L) + (0, 0.10)$);
\end{tikzpicture}
\caption{Illustration of landing angle.}
\label{fig:landing_angle}
\end{figure}

Intuitively, the desired landing angle can be achieved by controlling the LOS angle $\theta$, using the available control inputs. We first consider the scenario in which the UAV delivers the payload to a stationary platform. In this case, the platform velocity $v_p$ and heading angle $\psi_{p}$ are zero. Consequently, the landing angle is given by $\theta_{L}=-\psi_{uf}$, where $\psi_{uf}$ denotes the UAV heading angle at the instant of landing. Since the platform is stationary $a_{p}=0$, one can calculate the desired LOS angle $\theta_{d}$ accordingly. Ensuring convergence of the LOS angle to this desired value during the terminal phase is one of the control objectives. 

To land the UAVs on the platform at a prescribed landing angle $\theta_{L}$, the cooperative system must approach the platform along a \emph{constant-bearing approach trajectory}. This notion is directly analogous to the \emph{collision course} concept in classical intercept guidance, wherein LOS angle remains invariant over time. In intercept problems, a non-rotating LOS guarantees that the interceptor is on a trajectory that leads to eventual impact with the target. In the present load-transport setting, the same geometric condition is leveraged to ensure a well-defined and stable approach direction toward the landing platform. Formally, this requirement is expressed as $\dot{\theta}=0$, which enforces that the relative bearing between the equivalent agent and the platform remains constant throughout the terminal phase. From LOS dynamics \eqref{eq:rthetadot}, the desired landing angle $\theta_{d}$ can be calculated to enforce $\dot{\theta}=0$, and is given as
\begin{equation*}
	\dot{\theta}=0 \implies 0-\dfrac{v_u}{r}\sin\left(-\theta_{L}-\theta_{d}\right)=0,
\end{equation*}
which on solving for $\theta_d$, results into
\begin{equation}\label{eq:thetad}
	\theta_{d}=-\theta_{L}=\psi_{uf}.
\end{equation}
The expression in \eqref{eq:thetad} implies that any value of the landing angle $\theta_{L}$ can be selected, and the UAV can be maneuvered in real time to reach the landing platform at the specified angle. Furthermore, from \eqref{eq:rdot}, we have $\dot{r} = v_r = -v_u$, which indicates that the UAV must decelerate and eventually come to rest at the stationary platform to ensure successful payload delivery.  

Next, we consider the case of landing on a maneuvering platform shown in \Cref{fig:landing_angle}. In this scenario, both the platform velocity $v_p$ and acceleration $a_{p}$ are non-zero. Enforcing the collision course condition $\dot{\theta} = 0$ leads to a relationship between the UAV and platform velocities and headings at the landing instant as
\begin{align}  
	\dot{\theta} = &~ \dfrac{v_{\theta}}{r} = \dfrac{v_p\sin\left(\psi_{p} - \theta\right) - v_u\sin\left(\psi_{u} - \theta\right)}{r} = 0,
\end{align}  
which implies
\begin{align}  
	v_p\sin\left(\psi_{pf} - \theta_{d}\right) = v_u\sin\left(\psi_{uf} - \theta_{d}\right), \label{eq:thetadot_maneuvering}  
\end{align}
where $\psi_{pf}$ denotes the platform heading angle at the instant of landing. From \Cref{def:landing_angle}, the desired landing angle is given by
\begin{equation}\label{eq:theta_L_maneuvering}
	\theta_{L}=\psi_{pf}-\psi_{uf}.
\end{equation}
On using \eqref{eq:thetadot_maneuvering} and \eqref{eq:theta_L_maneuvering}, the desired LOS angle, $\theta_{d}$, can be determined. 

For a stationary platform, achieving $r=0$ is sufficient to complete landing, since persistence on the platform does not require matching any nonzero translational motion of the landing surface. However, for a maneuvering platform, terminal position coincidence alone is insufficient. If $\dot{r}(t_f)\neq 0$, then the UAV system and platform separate immediately after touchdown. Therefore, the landing task necessarily requires both terminal position agreement and terminal radial-velocity matching. Thus, whereas the constant-bearing condition $\dot{\theta}=0$ guarantees a collision-course-type approach geometry, the additional condition $\dot{r}=0$ is what distinguishes a feasible landing on a maneuvering platform from a mere intercept. Enforcing $\dot{r} = 0$ from \eqref{eq:rdot} yields an additional condition relating the velocities and headings of the UAV and the platform as
\begin{align}  
	\dot{r} = v_{r} = &~ v_p\cos\left(\psi_{pf} - \theta_{d}\right) - v_u\cos\left(\psi_{uf} - \theta_{d}\right) = 0 \implies v_p\cos\left(\psi_{pf} - \theta_{d}\right) = v_u\cos\left(\psi_{uf} - \theta_{d}\right). \label{eq:rdot_maneuvering}  
\end{align}  
The only feasible solution satisfying both the LOS and range constraints from \eqref{eq:thetadot_maneuvering} and \eqref{eq:rdot_maneuvering} is $v_u = v_p$ and $\psi_{uf} = \psi_{pf}$, implying that the UAV must exactly match both the speed and heading of the platform. As a result, the only admissible landing angle in this case is $\theta_{L} = 0$. It is worth noting that when $v_u = v_p$ and $\psi_{uf} = \psi_{pf}$, the collision course and range-rate conditions given in \eqref{eq:thetadot_maneuvering} and \eqref{eq:rdot_maneuvering} are satisfied regardless of the final value of the LOS angle, $\theta_{d}$. Therefore, for maneuvering platforms, these conditions constitute sufficient conditions for successful landing and post-landing tracking.

To determine the control inputs for the UAV that achieve the specified objectives, we first derive the dynamics of the relative range and LOS angle with respect to the UAV's steering control inputs. The following lemmas establish the relative degree of the range and LOS angle dynamics with respect to the UAV and platform control inputs.
\begin{lemma}
	The dynamics of the relative range $r$ has a relative degree of two with respect to the control inputs of the equivalent agent $U$ and the platform $P$.
\end{lemma}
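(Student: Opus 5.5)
The plan is to differentiate the range $r$ along the engagement dynamics \eqref{eq:dynamics} until the control inputs $(a_{uX},a_{uY})$ of $U$ and $(a_{pX},a_{pY})$ of $P$ appear explicitly, using the control-affine forms \eqref{eq:v_psi_dyn_comp}. The first derivative is \eqref{eq:rdot}, $\dot r = v_r = v_p\cos(\psi_p-\theta)-v_u\cos(\psi_u-\theta)$. It depends only on the states $(v_u,\psi_u,v_p,\psi_p,\theta)$ and contains no acceleration. Therefore $\partial \dot r/\partial a_{jX}=\partial \dot r/\partial a_{jY}=0$ for $j\in\{u,p\}$, and the relative degree is at least two.

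Next I will compute $\ddot r$ by the chain rule. Differentiating $v_r$ gives
\begin{equation*}
\ddot r = \dot v_p\cos(\psi_p-\theta) - v_p\dot\psi_p\sin(\psi_p-\theta) - \dot v_u\cos(\psi_u-\theta) + v_u\dot\psi_u\sin(\psi_u-\theta) + \dot\theta\, v_\theta .
\end{equation*}
The term $\dot\theta v_\theta$ comes from differentiating $\theta$ inside the cosines, combined with \eqref{eq:rthetadot}, and it equals $v_\theta^2/r$. I will then substitute \eqref{eq:vudot_comp}--\eqref{eq:vppsipdot_comp} for $\dot v_u$, $v_u\dot\psi_u$, $\dot v_p$ and $v_p\dot\psi_p$. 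Collecting terms with the angle-difference identities gives
\begin{equation*}
\ddot r = \frac{v_\theta^2}{r} + a_{pX}\cos\theta + a_{pY}\sin\theta - a_{uX}\cos\theta - a_{uY}\sin\theta .
\end{equation*}
This is simply the projection of the relative acceleration onto the LOS direction. The expression is affine in the inputs and valid for $r\neq 0$.

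The last step is to confirm that the inputs genuinely appear, meaning the input gain vector is nonvanishing. The coefficient row with respect to $(a_{uX},a_{uY})$ is $-[\cos\theta,\ \sin\theta]$, and with respect to $(a_{pX},a_{pY})$ it is $[\cos\theta,\ \sin\theta]$. Each has unit norm for every $\theta$, so neither can vanish. Hence the inputs enter first at the second derivative, and the relative degree is exactly two with respect to both $U$ and $P$.

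The main obstacle is the bookkeeping in the trigonometric simplification. In particular, the $\dot\theta$ contribution must be correctly identified as the drift term $v_\theta^2/r$ rather than something input-dependent, and the singularity at $r=0$ must be handled. I will state the result on the set $r>0$, which is the domain where the engagement dynamics are defined.
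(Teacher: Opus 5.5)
Your proposal is correct and follows essentially the same route as the paper: differentiate \eqref{eq:rdot}, substitute the control-affine forms \eqref{eq:v_psi_dyn_comp}, and collapse the trigonometry to $\ddot r = (a_{pX}-a_{uX})\cos\theta + (a_{pY}-a_{uY})\sin\theta + r\dot\theta^2$; your drift term $v_\theta^2/r$ is exactly the paper's $r\dot\theta^2$. Your explicit checks that $\dot r$ contains no input and that the gain $\pm[\cos\theta,\ \sin\theta]$ has unit norm, and your substitution of $v_u\dot\psi_u$ without dividing by the speeds, are small rigor improvements over what the paper leaves implicit.
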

\begin{proof}
On differentiating the range rate given in \eqref{eq:rdot} with respect to time, one may obtain
\begin{align*}
	\ddot{r}=&~-v_p\sin\left(\psi_{p}-\theta\right)\left(\dot{\psi}_{p}-\dot{\theta}\right)+\cos\left(\psi_{p}-\theta\right) \dot{v}_{p}+v_u\sin\left(\psi_{u}-\theta\right)\left(\dot{\psi_{u}}-\dot{\theta}\right)-\cos\left(\psi_{u}-\theta\right) \dot{v}_{u},
\end{align*}
which can be written using \eqref{eq:vpdot} and \eqref{eq:vudot} as
\begin{align}
	\nonumber \ddot{r}&=-v_p\sin\left(\psi_{p}-\theta\right)\left(\dot{\psi_{p}}-\dot{\theta}\right)+\cos\left(\psi_{p}-\theta\right) \left(a_{pX}\cos{\psi_{p}}+a_{pY}\sin{\psi_{p}}\right)+v_u\sin\left(\psi_{u}-\theta\right)\left(\dot{\psi_{u}}-\dot{\theta}\right)\\
	&~-\cos\left(\psi_{u}-\theta\right) \left(a_{uX}\cos{\psi_{u}}+a_{uY}\sin{\psi_{u}}\right). \label{eq:rddot1}
\end{align}
By expanding and doing some algebraic simplifications, the expression in \eqref{eq:rddot1} becomes
\begin{align*}
	\ddot{r}=&~-v_p \dot{\psi}_{p} \sin\left(\psi_{p}-\theta\right) + v_p  \dot{\theta}\sin\left(\psi_{p}-\theta\right)+a_{pX}\cos{\psi_{p}}\cos\left(\psi_{p}-\theta\right)+a_{pY}\sin{\psi_{p}}\cos\left(\psi_{p}-\theta\right)\\
	&~+v_u\dot{\psi}_{u}\sin\left(\psi_{u}-\theta\right)-v_u \dot{\theta}\sin\left(\psi_{u}-\theta\right) -a_{uX}\cos{\psi_{u}}\cos\left(\psi_{u}-\theta\right)-a_{uY}\sin{\psi_{u}}\cos\left(\psi_{u}-\theta\right),
\end{align*}
which can be simplified using \eqref{eq:vupsiudot} and \eqref{eq:vppsipdot} to
\begin{align}
	\ddot{r}=&~-v_p \sin\left(\psi_{p} -\theta\right) \left(\dfrac{a_{pY}\cos{\psi_{p}}-a_{pX}\sin{\psi_{p}}}{v_p}\right)+ v_p  \dot{\theta}\sin\left(\psi_{p}-\theta\right)-v_u \dot{\theta}\sin\left(\psi_{u}-\theta\right)\nonumber\\
	&~+a_{pX}\cos{\psi_{p}}\cos\left(\psi_{p}-\theta\right)+v_u\sin\left(\psi_{u}-\theta\right)\left(\dfrac{a_{uY}\cos{\psi_{u}}-a_{uX}\sin{\psi_{u}}}{v_u}\right) \nonumber\\
	&~+a_{pY}\sin{\psi_{p}}\cos\left(\psi_{p}-\theta\right)-a_{uX}\cos{\psi_{u}}\cos\left(\psi_{u}-\theta\right)-a_{uY}\sin{\psi_{u}}\cos\left(\psi_{u}-\theta\right).\label{eq:rddot2}
\end{align}
Rearranging terms of \eqref{eq:rddot2} and performing some algebraic simplifications leads us to arrive at
\begin{align}
	\ddot{r}=&~a_{pY}\left(\sin{\psi_{p}}\cos\left(\psi_{p}-\theta\right)-\cos\psi_{p}\sin\left(\psi_{p} -\theta\right)\sin\left(\psi_{p}\right)\right)+a_{pX}\left(\sin{\psi_{p}}\sin\left(\psi_{p} -\theta\right)\right. \nonumber\\
	&\left. +\cos{\psi_{p}}\cos\left(\psi_{p}-\theta\right)\right)-a_{uX}\left(\sin{\psi_{u}} \sin\left(\psi_{u}-\theta\right)+ \cos{\psi_{u}}\cos\left(\psi_{u}-\theta\right) \right)  \nonumber\\
	&~+a_{uY}\left( \cos{\psi_{u}} \sin\left(\psi_{u}-\theta\right)-\sin{\psi_{u}}\cos\left(\psi_{u}-\theta\right)\right) +v_p  \dot{\theta}\sin\left(\psi_{p}-\theta\right)-v_u \dot{\theta}\sin\left(\psi_{u}-\theta\right),\nonumber
\end{align}
which, on some trigonometric simplification, results in
\begin{align*}
	\ddot{r}=a_{pX}\cos{\theta}-a_{uX}\cos{\theta}-a_{uY}\sin{\theta}+ \dot{\theta}\left(v_p  \sin\left(\psi_{p}-\theta\right)-v_u \sin\left(\psi_{u}-\theta\right)\right)+a_{pY}\sin{\theta},
\end{align*}
which can be further simplified using \eqref{eq:rthetadot} to
\begin{align}
	\ddot{r}=&~\left(a_{pY}-a_{uY}\right) \sin{\theta}+ \left(a_{pX}-a_{uX}\right)\cos{\theta}+ r \dot{\theta}^2 =\left(a_{pY}-a_{uY}\right) \sin{\theta}+ \left(a_{pX}-a_{uX}\right)\cos{\theta}+ r \dot{\theta}^2. \label{ch8_eq:r_dynamics_1}
\end{align}
It readily follows from \eqref{ch8_eq:r_dynamics_1} that the relative range has a relative degree of two with respect to the lateral acceleration components of the UAV and the platform. This completes the proof.
\end{proof}
\begin{lemma}\label{lem:theta}
	The dynamics of the LOS angle has a relative degree of two with respect to the UAV's and platform control inputs.
\end{lemma}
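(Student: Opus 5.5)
Mirroring the proof of the preceding lemma for the range, I will differentiate the LOS kinematics \eqref{eq:rthetadot} once more and show that the controls $(a_{uX},a_{uY})$ and $(a_{pX},a_{pY})$ first appear explicitly in $\ddot{\theta}$. The first derivative $\dot{\theta}=v_\theta/r$ depends only on the states $(r,\theta,v_u,\psi_u,v_p,\psi_p)$ and contains no acceleration terms. This shows that the relative degree is at least two.

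\textbf{Step 1.} I will differentiate the identity $r\dot\theta = v_p\sin(\psi_p-\theta)-v_u\sin(\psi_u-\theta)$ with respect to time. This gives
\begin{equation*}
\dot r\dot\theta + r\ddot\theta = \dot v_p\sin(\psi_p-\theta)+v_p\cos(\psi_p-\theta)(\dot\psi_p-\dot\theta)-\dot v_u\sin(\psi_u-\theta)-v_u\cos(\psi_u-\theta)(\dot\psi_u-\dot\theta).
\end{equation*}

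\textbf{Step 2.} I will substitute the control-affine forms \eqref{eq:vudot_comp}--\eqref{eq:vppsipdot_comp} for $\dot v_j$ and $v_j\dot\psi_j$. The terms multiplying $\dot\theta$ collect as $-\dot\theta\,[v_p\cos(\psi_p-\theta)-v_u\cos(\psi_u-\theta)]=-\dot\theta\,\dot r$ by \eqref{eq:rdot}.

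\textbf{Step 3.} I will collect the coefficients of each acceleration component and simplify them trigonometrically. For $a_{uX}$ the coefficient is $-[\cos\psi_u\sin(\psi_u-\theta)-\sin\psi_u\cos(\psi_u-\theta)]=\sin(-\theta)\cdot(-1)\cdot(-1)$. Carefully, $\cos\psi_u\sin(\psi_u-\theta)-\sin\psi_u\cos(\psi_u-\theta)=\sin(-\theta)=-\sin\theta$, so the coefficient of $a_{uX}$ is $\sin\theta$. By the same type of sum-angle identities, the coefficient of $a_{uY}$ is $-\cos\theta$, and the platform terms carry the opposite signs. This yields
\begin{equation*}
\ddot\theta = \frac{(a_{pY}-a_{uY})\cos\theta-(a_{pX}-a_{uX})\sin\theta-2\dot r\dot\theta}{r}.
\end{equation*}

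\textbf{Step 4.} I will check that the input gain is nonzero. The gain vector $(\sin\theta,-\cos\theta)/r$ multiplying $(a_{uX},a_{uY})$ has norm $1/r\neq0$ for $r>0$. Hence the inputs appear at the second derivative and not earlier, so the relative degree is exactly two for all engagement geometries with $r\neq0$.

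\textbf{Main obstacle.} The bookkeeping of signs in Step 3 is the delicate part, together with correctly combining the two $\dot\theta$-dependent groups into $-2\dot r\dot\theta$. One group comes from $\dot r\dot\theta$ on the left-hand side; the other comes from the $-\dot\theta$ terms inside the derivative. I would verify the final expression against the known polar-coordinate identity $r\ddot\theta+2\dot r\dot\theta=a_{\perp}$, the relative acceleration component perpendicular to the LOS. This serves as a consistency check, parallel to how \eqref{ch8_eq:r_dynamics_1} matches $\ddot r - r\dot\theta^2 = a_{\parallel}$. The singularity at $r=0$ would be noted as the only exception.
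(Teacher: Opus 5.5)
Your proposal is correct and follows essentially the same route as the paper: differentiate $r\dot\theta=v_\theta$, substitute the component forms \eqref{eq:v_psi_dyn_comp}, merge the two $\dot\theta$-groups into $-2\dot r\dot\theta$ via \eqref{eq:rdot}, and collapse the coefficients with angle-difference identities to reach \eqref{eq:los_dynamics_1}. Your explicit checks that $\dot\theta$ is input-free and that the input gain has norm $1/r\neq 0$ for $r>0$ are worthwhile additions the paper leaves implicit. Do delete the stray equality $=\sin(-\theta)\cdot(-1)\cdot(-1)$ in Step 3: as written it equals $-\sin\theta$, not the correct coefficient $\sin\theta$ that you then derive.
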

\begin{proof}
We differentiate both sides of \eqref{eq:rthetadot} with respect to time to get
\begin{align*}
	\dot{r}\dot{\theta} + r\ddot{\theta}&= \dot{v}_p\sin(\psi_p - \theta) + v_p\cos(\psi_p - \theta)(\dot{\psi}_p - \dot{\theta}) -\left[\dot{v}_u\sin(\psi_u - \theta) + v_u\cos(\psi_u - \theta)(\dot{\psi}_u - \dot{\theta})\right],
\end{align*}
implying
\begin{align}
	r\ddot{\theta} = &\left[\dot{v}_p\sin(\psi_p - \theta) + v_p\dot{\psi}_p\cos(\psi_p - \theta)\right] - \left[\dot{v}_u\sin(\psi_u - \theta) + v_u\dot{\psi}_u\cos(\psi_u - \theta)\right] \nonumber \\
	&- \dot{\theta}\left[v_p\cos(\psi_p - \theta) - v_u\cos(\psi_u - \theta)\right] - \dot{r}\dot{\theta}\label{eq:theta_ddot_1},
\end{align}
By substituting the value of $\dot{v}_p$, $\dot{\psi}_p$, $\dot{v}_u$, and $\dot{\psi}_u$ from \eqref{eq:v_psi_dyn_comp} into \eqref{eq:theta_ddot_1}, one may obtain
\begin{align}
	r\ddot{\theta} &= (a_{pX}\cos{\psi_{p}}+a_{pY}\sin{\psi_{p}})\sin(\psi_p - \theta) + (a_{pY}\cos{\psi_{p}}-a_{pX}\sin{\psi_{p}})\cos(\psi_p - \theta) \nonumber\\
	&-\left[(a_{uX}\cos{\psi_{u}}+a_{uY}\sin{\psi_{u}})\sin(\psi_u - \theta) + (a_{uY}\cos{\psi_{u}}-a_{uX}\sin{\psi_{u}})\cos(\psi_u - \theta)\right] \nonumber\\
	&- \dot{\theta}\left[v_p\cos(\psi_p - \theta) - v_u\cos(\psi_u - \theta)\right] - \dot{r}\dot{\theta}. \label{eq:theta_ddot_2}
\end{align}
Collecting the acceleration components terms, $a_{pX}$ and $a_{pY}$, the expression in \eqref{eq:theta_ddot_2} becomes
\begin{align}
	r\ddot{\theta} &= a_{pY}\left(\cos\psi_p\cos(\psi_p - \theta) + \sin\psi_p\sin(\psi_p - \theta)\right)  - a_{pX}\left(\sin\psi_p\cos(\psi_p - \theta) - \cos\psi_p\sin(\psi_p - \theta)\right) \nonumber\\
	&-\left[a_{uY}(\cos{\psi_{u}}\cos(\psi_u - \theta) + \sin{\psi_{u}}\sin(\psi_u - \theta))- a_{uX}(\sin{\psi_{u}}\cos(\psi_u - \theta) - \cos{\psi_{u}}\sin(\psi_u - \theta))\right] \nonumber\\
	&- \dot{\theta}\left[v_p\cos(\psi_p - \theta) - v_u\cos(\psi_u - \theta)\right] -  \dot{r}\dot{\theta}.\label{eq:theta_ddot_3}
\end{align}
Using the trigonometric relations $\cos(P-Q) = \cos P \cos Q + \sin{P}\sin Q$, $\sin(P-Q) = \sin P \cos Q - \cos P \sin Q$, and \eqref{eq:rdot}, the expression in \eqref{eq:theta_ddot_3} simplifies to
\begin{align*}
	r\ddot{\theta} =&~ a_{pY}\cos(\psi_p - (\psi_p - \theta)) - a_{pX}\sin(\psi_p - (\psi_p - \theta)) - \left[a_{uY}\cos(\psi_u - (\psi_u - \theta)) \right.\\
	&\left.- a_{uX}\sin(\psi_u - (\psi_u - \theta))\right] - 2\dot{r}\dot{\theta}= a_{pY}\cos\theta - a_{pX}\sin\theta - \left(a_{uY}\cos(\theta) - a_{uX}\sin(\theta)\right) - 2\dot{r}\dot{\theta},
\end{align*}
which can be written as
\begin{align}
	\ddot{\theta} =&~ \frac{ (a_{pY} - a_{uY})\cos\theta - (a_{pX} - a_{uX})\sin\theta - 2\dot{r}\dot{\theta} }{r}, \nonumber  \\
	=&~ -\frac{2\dot{r}\dot{\theta}}{r} + \frac{ (a_{pY}\cos\theta - a_{pX}\sin\theta) - (a_{uY}\cos\theta - a_{uX}\sin\theta) }{r}.\label{eq:los_dynamics_1}
\end{align}
Thus, the dynamics of $\theta$ have a relative degree of two with respect to UAV's and platform control inputs.
\end{proof}

The dynamics of the relative range and the LOS angle can also be expressed as
\begin{subequations}\label{eq:r_theta_dynamics}
	\begin{align}
		\ddot{r}=&~  r \dot{\theta}^2 + \left( a_{p r} - a_{u r} \right), \label{eq:r_dynamics}   \\
		\ddot{\theta}=&~-\dfrac{2\dot{r}\dot{\theta}}{r} + \dfrac{1}{r} \left( a_{p \theta}-a_{u  \theta} \right), \label{eq:los_dynamics}
	\end{align}
\end{subequations}
where a change of coordinates is introduced as
\begin{align}
	\begin{bmatrix}
		a_{j r} \\
		a_{j \theta}
	\end{bmatrix}
	=
	\begin{bmatrix}
		\cos{\theta} & \sin{\theta}\\
		-\sin{\theta} & \cos{\theta}
	\end{bmatrix}
	\begin{bmatrix}
		a_{jX} \\
		a_{jY}
	\end{bmatrix}, \label{ch8_eq:am_transformation}
\end{align}
for $j=p$, $u$. One can note that the matrix \eqref{ch8_eq:am_transformation} is invertible for all values of LOS angle $\theta$. 
\begin{remark}\label{rem:los_singularity}
	The LOS angle dynamics \eqref{eq:los_dynamics} are well-defined only for $r>0$. This is not restrictive, since regulation of the LOS angle is required only during the approach phase. Once $r=0$, the payload is in contact with the platform, and the LOS angle becomes physically irrelevant.
\end{remark}
\subsection{Payload Delivery on a Stationary Platform}
We now derive the guidance strategies for the UAV system to deliver the payload on a stationary platform, that is, $a_{pX}=a_{pY}=v_{p}=0$. For such a scenario, the relative range and LOS angle dynamics can be written from \eqref{eq:r_theta_dynamics} as
\begin{subequations}\label{eq:r_theta_dynamics_stat}
	\begin{align}
		\ddot{r}=&~ r \dot{\theta}^2  - a_{u r} , \label{eq:r_dynamics_stat}   \\
		\ddot{\theta}=&~-\dfrac{2\dot{r}\dot{\theta}}{r}  - \dfrac{a_{u  \theta}}{r}.\label{eq:los_dynamics_stat}
	\end{align}
\end{subequations}
Having said that, the UAV can be steered to land the payload on the stationary platform at any arbitrary predefined landing angle. Thus, for a given $\theta_{L}$ our objective is to design $a_{ur}$ and $a_{u\theta}$ such that the conditions $r=\dot{r}=0$ and $\theta=\theta_{d}$ are satisfied. 

It is a fundamental requirement to steer the UAV towards the landing platform in order to deliver the payload onto it. Thus, we next endeavor to design the guidance law $a_{ur}$ such that the target set $\mathscr{R}=\{\mathcal{R} \; \rvert \; r=\dot{r}=0 \; \forall\; t\; \geq \; t_1 \}$, where $\mathcal{C}$ is a set of relevant states and $t_1 \in \mathbb{R}_{+}$ is some fixed time instant, is achieved. Toward this objective, we choose a sliding manifold motivated by the work in \cite{10.1016/j.automatica.2021.110009} as
\begin{equation} \label{eq:sr}
	\mathcal{S}_{1} = \dot{r} + \beta_1  r^{\dfrac{\lambda_1 {r}^2}{1+\mu_1 {r}^2}},
\end{equation}
where $\beta_{1}>0$, $\mu_1>0$, $\lambda_1>0$ such that $\Upsilon_1 \coloneqq \dfrac{\lambda_1}{1+\mu_1} > 1$ are design parameters. 
\begin{remark}
	The function $\chi: r \to r^{\dfrac{\lambda_1 {r}^2}{1+\mu_1 {r}^2}} = \exp\left({\dfrac{\lambda_1 {r}^2 }{1+\mu_1 {r}^2}}\ln(r)\right)$ is continuous at $r=0$ with $\chi(0)=0$ and thus the second term of sliding manifold \eqref{eq:sr} is locally bounded.
\end{remark}
The following lemma aids in designing the UAV's guidance command $a_{ur}$.
\begin{lemma} \label{lem:sr}
	The sliding manifold given in \eqref{eq:sr} has a relative degree of one with respect to the UAV's control input $a_{ur}$.
\end{lemma}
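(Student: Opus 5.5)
The plan is to differentiate $\mathcal{S}_1$ in \eqref{eq:sr} once along trajectories and check that $a_{ur}$ appears explicitly with a coefficient that never vanishes. Write the second term as $\chi(r)=\exp\big(g(r)\ln r\big)$ with $g(r)=\lambda_1 r^2/(1+\mu_1 r^2)$. For $r>0$, the chain rule gives
\begin{equation*}
\dot{\mathcal{S}}_1=\ddot r+\beta_1\,\chi(r)\left(g'(r)\ln r+\frac{g(r)}{r}\right)\dot r,\qquad g'(r)=\frac{2\lambda_1 r}{(1+\mu_1 r^2)^2}.
\end{equation*}

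Next, substitute the stationary-platform range dynamics \eqref{eq:r_dynamics_stat}, $\ddot r=r\dot\theta^2-a_{ur}$, which were obtained in the first lemma and its rewriting \eqref{eq:r_theta_dynamics}. This yields
\begin{equation*}
\dot{\mathcal{S}}_1=r\dot\theta^2+\beta_1\,\chi(r)\left(\frac{2\lambda_1 r\ln r}{(1+\mu_1 r^2)^2}+\frac{\lambda_1 r}{1+\mu_1 r^2}\right)\dot r-a_{ur}.
\end{equation*}
Here $a_{ur}$ enters affinely with constant coefficient $-1\neq 0$. All the remaining terms depend only on the states $(r,\dot r,\dot\theta)$ and not on the control. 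Hence the first derivative of $\mathcal{S}_1$ already contains the input, and the relative degree is one. The same conclusion holds for a maneuvering platform, where the extra term $a_{pr}$ is a drift or disturbance and does not affect the coefficient of $a_{ur}$.

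The main obstacle is the behaviour of the state-dependent drift term as $r\to0^+$, because of the factor $\ln r$. I would bound it directly. Since $g(r)\to0$, one has $\chi(r)\le 1$ for small $r$. Moreover, $r\ln r\to 0$ and $r\to0$. So the bracket times $\chi$ tends to zero, and the drift is locally bounded and continuous, extending continuously to $r=0$ by the preceding remark on $\chi$. This ensures that the relative-degree-one structure is not merely formal near the landing point, and that $a_{ur}$ can later be chosen to cancel the drift.
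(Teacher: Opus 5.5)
Your proof is correct and follows the paper's argument essentially verbatim: you differentiate $r^{g(r)}$ by the chain rule with $g'(r)=2\lambda_1 r/(1+\mu_1 r^2)^2$, substitute $\ddot r = r\dot\theta^2 - a_{ur}$, and read off the nonvanishing coefficient $-1$ on $a_{ur}$, while your boundedness discussion plays the role of the paper's subsequent remark on the logarithmic term. One small caution: your own bound ($\chi\le 1$ on $(0,1]$ together with $r\ln r\to 0$) already shows the drift coefficient vanishes as $r\to 0^+$, so drop the appeal to the earlier remark, which asserts $\chi(0)=0$ even though in fact $\chi(r)=\exp(g(r)\ln r)\to 1$ as $r\to 0^+$.
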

\begin{proof}
One may obtain the derivative of $\mathcal{S}_{1}$ by differentiating \eqref{eq:sr} with respect to time as
\begin{align}
	\dot{\mathcal{S}}_1 &= \frac{d}{dt} \left( \dot{r} + \beta_1 r^{\dfrac{\lambda_1 {r}^2}{1+\mu_1 {r}^2}} \right) = \ddot{r} + \beta_1 \frac{d}{dt} \left( r^{\dfrac{\lambda_1 {r}^2}{1+\mu_1 {r}^2}} \right). \label{eq:s1_dot_1}
\end{align}
To calculate the derivative of the second term, we define a function of the form $f(r) = r^{g(r)}$, where the exponent is $g(r) = \dfrac{\lambda_1 r^2}{1+\mu_1 r^2}$. We use logarithmic differentiation to find its derivative with respect to $r$. Let $y = r^{g(r)}$. Taking the natural logarithm of both sides results in $\ln(y) = \ln\left(r^{g(r)}\right) = g(r) \ln(r)$.
Differentiating implicitly with respect to $r$ using the product rule on the right-hand side, one may obtain
\begin{align}
	\frac{1}{y} \frac{dy}{dr} &= g'(r) \ln(r) + g(r) \frac{1}{r} \implies
	\frac{dy}{dr} = y \left( g'(r) \ln(r) + \frac{g(r)}{r} \right) = r^{g(r)} \left( g'(r) \ln(r) + \frac{g(r)}{r} \right). \label{eq:dydr_1}
\end{align}
Next, we find the derivative of the exponent, $g'(r)$, using the quotient rule as
\begin{align*}
	g'(r) &= \frac{d}{dr} \left( \frac{\lambda_1 r^2}{1+\mu_1 r^2} \right)= \frac{(2 \lambda_1 r)(1+\mu_1 r^2) - (\lambda_1 r^2)(2 \mu_1 r)}{(1+\mu_1 r^2)^2}= \frac{2 \lambda_1 r + 2 \lambda_1 \mu_1 r^3 - 2 \lambda_1 \mu_1 r^3}{(1+\mu_1 r^2)^2} \\
	&~= \frac{2 \lambda_1 r}{(1+\mu_1 r^2)^2}.
\end{align*}
Now, by substituting the values of $g(r)$ and $g'(r)$ into \eqref{eq:dydr_1}, the expression of $\dfrac{dy}{dr}$ becomes
\begin{align}
	\frac{dy}{dr} &= r^{\frac{\lambda_1 r^2}{1+\mu_1 r^2}} \left( \frac{2 \lambda_1 r \ln(r)}{(1+\mu_1 r^2)^2} + \frac{1}{r}  \frac{\lambda_1 r^2}{1+\mu_1 r^2} \right) = r^{\frac{\lambda_1 r^2}{1+\mu_1 r^2}} \left( \frac{2 \lambda_1 r \ln(r)}{(1+\mu_1 r^2)^2} + \frac{\lambda_1 r}{1+\mu_1 r^2} \right). \label{eq:dydr}
\end{align}
Using the chain rule, the time derivative of the second term of \eqref{eq:s1_dot_1} is $\dfrac{d}{dt} \left( r^{g(r)} \right) = \dfrac{dy}{dr} \dfrac{dr}{dt} = \dfrac{dy}{dr} \dot{r}$. By substituting the value of $\dfrac{dy}{dr}$ from \eqref{eq:dydr} into \eqref{eq:s1_dot_1}, one may obtain the value of $\dot{\mathcal{S}}_{1}$ as
\begin{equation}
	\dot{\mathcal{S}}_1 = \ddot{r} + \beta_1 \dot{r} r^{\frac{\lambda_1 r^2}{1+\mu_1 r^2}} \left( \frac{2 \lambda_1 r \ln(r)}{(1+\mu_1 r^2)^2} + \frac{\lambda_1 r}{1+\mu_1 r^2} \right),
\end{equation}
which, after some algebraic simplifications, results in
\begin{align} 
	\dot{\mathcal{S}}_{1} &= \ddot{r} + \beta_1 \dot{r} r^{\frac{\lambda_1 r^2}{1+\mu_1 r^2}}\left(\frac{2 \lambda_1 r \ln(r) + \lambda_1 r + \lambda_1 \mu_1 r^3}{(1+\mu_1 r^2)^2}\right) =\ddot{r}+ \beta_1 \dot{r} r^{\frac{\lambda_1 r^2}{1+\mu_1 r^2}} \left(\frac{\lambda_1 r (2 \ln(r) + 1 + \mu_1 r^2)}{(1+\mu_1 r^2)^2}\right) \nonumber\\
	&=\ddot{r} + \frac{\beta_{1} \lambda_1  r^{{\frac{\lambda_1 r^{2}}{\mu_{1} r^{2}+1}}+1}\left(2 \ln ( r)+\mu_{1} r^{2}+1\right)}{\left(\mu_{1} r^{2}+1\right)^{2}}\dot{r} \label{ch8_eq:dot_sr_1}.
\end{align}
On substituting the value of $\ddot{r}$ from \eqref{eq:r_dynamics} into \eqref{ch8_eq:dot_sr}, we get
\begin{equation}\label{ch8_eq:dot_sr}
	\dot{\mathcal{S}}_1 =    r \dot{\theta}^2  - a_{u r} + \frac{\beta_{1} \lambda_1  r^{{\dfrac{\lambda_1 r^{2}}{\mu_{1} r^{2}+1}}+1}\left(2 \ln ( r)+\mu_{1} r^{2}+1\right)}{\left(\mu_{1} r^{2}+1\right)^{2}}\dot{r} =r \dot{\theta}^2  + \dot{r} \mathcal{F}(r) - a_{u r} , 
\end{equation}
where
\begin{equation}\label{eq:fr}
	\mathcal{F}(r) =  \dfrac{\beta_1 \lambda_1  r}{1+\mu_1 {r}^2}\left(\dfrac{2 \ln( r)}{1+\mu_1 {r}^2} + 1\right)  r^{\dfrac{\lambda_1 {r}^2}{1+\mu_1 {r}^2}}. 
\end{equation}
It follows from \eqref{ch8_eq:dot_sr} that $\mathcal{S}_{1}$ has a relative degree of one with respect to the control input $a_{u r}$. 
\end{proof}
\begin{remark}\label{rem:log_bound}
	Although the sliding manifold \eqref{eq:sr} contains logarithmic terms, the function $r^{\frac{\lambda_1 r^2}{1+\mu_1 r^2}}\ln(r)$
	is bounded in a neighborhood of the origin. Consequently, all terms appearing in $\dot{\mathcal{S}}_1$ remain locally bounded, ensuring well-posed closed-loop dynamics.
\end{remark}
We now present the design of the UAV's guidance command $a_{ur}$ in the following theorem. 
\begin{theorem}\label{thm:stat_amr}
	Consider the relative range dynamics as in \eqref{eq:r_dynamics_stat} with the sliding manifold chosen as per \eqref{eq:sr}. If the UAV's guidance command $a_{ur}$ is designed as
	\begin{align}
		a_{ur} = r \dot{\theta}^2 +  \dot{r} \mathcal{F}(r) +\mathcal{K}_{1} \lvert \mathcal{S}_{1}\rvert^{\dfrac{\lambda_2 \mathcal{S}_{1}^2}{1+\mu_2 \mathcal{S}_{1}^2}}\sign( \mathcal{S}_{1}), \label{ch8_eq:amr}
	\end{align}
	with $\mathcal{K}_1>0$, $\lambda_2>0$, $\mu_2>0$, such that $\Upsilon_2 \coloneqq \dfrac{\lambda_2}{1+\mu_2} > 1$, then, the transported payload is brought to the stationary landing platform with zero residual closing speed in the radial direction within a fixed-time $\mathcal{T}_{r}$, upper bounded by
	\begin{align}
		\mathcal{T}_{r} \leq  \dfrac{1}{\mathcal{K}_{1}\left(\Upsilon_2-1\right)}+\dfrac{1}{\mathcal{K}_{1} \mathcal{M}_{2}}+\dfrac{1}{\beta_{1}\left(\Upsilon_1-1\right)}+ \dfrac{1}{\beta_{1} \mathcal{M}_{1}} \label{eq:Tr},
	\end{align}
	where $\mathcal{M}_1=e^{\frac{-\lambda_{1}}{2 e}}$ and $\mathcal{M}_2=e^{\frac{-\lambda_{2}}{2 e}}$, regardless of the initial UAV-platform configuration. 
\end{theorem}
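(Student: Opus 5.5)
The proof splits into two phases: the reaching phase, in which $\mathcal{S}_1$ is driven to zero, and the sliding phase, in which $r$ and $\dot r$ vanish on the manifold. The total time bound is the sum of the two phase bounds.

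\emph{Reaching phase.} I substitute the command \eqref{ch8_eq:amr} into $\dot{\mathcal{S}}_1$ from \eqref{ch8_eq:dot_sr} (Lemma~\ref{lem:sr}). The terms $r\dot\theta^2$ and $\dot r\mathcal{F}(r)$ cancel exactly, which gives the closed-loop scalar dynamics
\begin{equation*}
\dot{\mathcal{S}}_1=-\mathcal{K}_1\lvert\mathcal{S}_1\rvert^{\frac{\lambda_2\mathcal{S}_1^2}{1+\mu_2\mathcal{S}_1^2}}\sign(\mathcal{S}_1).
\end{equation*}
I take $V=\lvert\mathcal{S}_1\rvert$, or equivalently $V=\tfrac12\mathcal{S}_1^2$, so that $\dot V=-\mathcal{K}_1 V^{h(V)}$ with a state-dependent exponent $h(V)=\lambda_2V^2/(1+\mu_2V^2)$. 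I then integrate $dV/V^{h(V)}$ from $V(0)$ down to $0$, splitting at $V=1$.

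For $V\ge1$: since $V^2/(1+\mu_2V^2)\ge 1/(1+\mu_2)$, we have $h(V)\ge\Upsilon_2>1$ and $V^{h(V)}\ge V^{\Upsilon_2}$. So the time to reach $V=1$ is at most $\int_1^\infty dV/(\mathcal{K}_1V^{\Upsilon_2})=1/(\mathcal{K}_1(\Upsilon_2-1))$, uniformly in $V(0)$.

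For $0<V<1$: $V^{h(V)}=\exp(h(V)\ln V)$ and $h(V)\le\lambda_2V^2$. Because $\ln V<0$, this gives $V^{h(V)}\ge \exp(\lambda_2V^2\ln V)$. The function $V^2\ln V$ attains its minimum $-1/(2e)$ at $V=e^{-1/2}$, so $V^{h(V)}\ge e^{-\lambda_2/(2e)}=\mathcal{M}_2$. Hence the time spent in $(0,1)$ is at most $1/(\mathcal{K}_1\mathcal{M}_2)$. Together these give the first two terms of \eqref{eq:Tr}.

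\emph{Sliding phase.} Once $\mathcal{S}_1\equiv0$, the surface \eqref{eq:sr} gives $\dot r=-\beta_1 r^{\lambda_1r^2/(1+\mu_1r^2)}$ for $r\ge0$. This is the same scalar structure with gain $\beta_1$ and exponent parameters $(\lambda_1,\mu_1)$. The identical two-piece estimate bounds the convergence time of $r$ by $1/(\beta_1(\Upsilon_1-1))+1/(\beta_1\mathcal{M}_1)$.

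At $r=0$, the continuity of $\chi$ (the preceding remark) together with $\mathcal{S}_1=0$ gives $\dot r=0$. The manifold stays invariant afterwards, because the closed-loop $\dot{\mathcal{S}}_1$ vanishes at $\mathcal{S}_1=0$. So $r=\dot r=0$ holds for all later times, and summing the phase bounds yields \eqref{eq:Tr} independent of initial conditions.

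\emph{Main obstacle.} The delicate step is the bound on $(0,1)$, where the exponent $h\to0$ and the vector field does not look like a standard finite-time power law. It requires the lower bound $V^{h(V)}\ge\mathcal{M}$, obtained through the minimization of $V^2\ln V$. A second point needs care: the sliding phase must respect the sign convention. Here $r\ge0$ is a distance, so $\dot r<0$ drives it toward $0$ monotonically. I must also argue that $r$ cannot reach zero during the reaching phase in a way that breaks the argument, or else note that reaching $r=0$ earlier only shortens the time. Well-posedness near $r=0$ is covered by Remark~\ref{rem:log_bound}.
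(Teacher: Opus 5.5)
Your reaching-phase argument is correct and is essentially the paper's. You use the same split at $\lvert\mathcal{S}_1\rvert=1$ and the same two bounds, $V^{h(V)}\ge V^{\Upsilon_2}$ and $V^{h(V)}\ge\mathcal{M}_2$ (the latter via $\min_{(0,1)}V^2\ln V=-1/(2e)$). The only difference is that you integrate $\dot V=-\mathcal{K}_1V^{h(V)}$ for $V=\lvert\mathcal{S}_1\rvert$ directly, whereas the paper works with $\mathcal{V}_1=\mathcal{S}_1^2$ and cites comparison lemmas. Note that only $V=\lvert\mathcal{S}_1\rvert$ satisfies that scalar equation; $\tfrac12\mathcal{S}_1^2$ is not ``equivalent'' in that form, though nothing hinges on it.

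The genuine gap is your final step, ``continuity of $\chi$ together with $\mathcal{S}_1=0$ gives $\dot r=0$.''
\begin{itemize}
\item Since $\chi(r)=\exp\bigl(\tfrac{\lambda_1 r^2}{1+\mu_1 r^2}\ln r\bigr)$ and $r^2\ln r\to0$, in fact $\chi(r)\to1$, not $0$, as $r\to0^+$. The remark you rely on is false.
\item Your own estimate $\chi(r)\ge\mathcal{M}_1$ on $(0,1)$ already shows this. It is precisely the inequality that produces the term $1/(\beta_1\mathcal{M}_1)$, and it says $\chi$ is bounded away from zero.
\item Hence on the manifold $\dot r=-\beta_1\chi(r)\le-\beta_1\mathcal{M}_1$ right up to touchdown, with $\dot r\to-\beta_1$.
\item In closed loop, $\ddot r=-\mathcal{F}(r)\dot r-\mathcal{K}_1\lvert\mathcal{S}_1\rvert^{\lambda_2\mathcal{S}_1^2/(1+\mu_2\mathcal{S}_1^2)}\sign(\mathcal{S}_1)$ is bounded, since $\mathcal{F}(r)\to0$ as $r\to0^+$. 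So $\dot r$ is continuous in time, and the payload arrives at $r=0$ with closing speed $\beta_1$.
\item If one sets $\chi(0)=0$, then $\mathcal{S}_1$ jumps to $-\beta_1$ at touchdown, and your invariance argument fails exactly there. With the continuous extension $\chi(0)=1$, the manifold itself prescribes $\dot r=-\beta_1$ at $r=0$.
\end{itemize}
Either way, ``$r=\dot r=0$ for all later times'' does not follow. What your argument establishes is fixed-time reaching of $\mathcal{S}_1=0$ and arrival at $r=0$ within the stated bound, but with nonzero impact speed. The paper's proof asserts convergence to $(0,0)$ on the same basis, so it has the same gap, and the statement's ``zero residual closing speed'' does not follow from this surface design. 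A soft landing requires reduced dynamics that vanish at the origin. One example is a terminal-type term $\beta r^{q}$ with $0<q<1$ in the surface, which changes both the $\dot{\mathcal{S}}_1$ computation and the time bound.

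Separately, ``reaching $r=0$ earlier only shortens the time'' does not dispose of contact during the reaching phase. There $\mathcal{S}_1\neq0$, so generically $\dot r\neq0$ at contact, the $\ln r$ and LOS terms are singular, and nothing keeps $r$ at zero. You need either an argument that $r>0$ on the reaching interval or an explicit assumption; the paper is silent on this as well.
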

\begin{proof}
Consider a radially unbounded quadratic Lyapunov function candidate as $\mathcal{V}_{1}=\mathcal{S}_{1}^2$, which on differentiating with respect to time results in $\dot{\mathcal{V}}_{1} = 2\mathcal{S}_{1} \dot{\mathcal{S}}_{1}$. By substituting the value of $\dot{\mathcal{S}}_{1}$ from \eqref{ch8_eq:dot_sr}, one may obtain $\dot{\mathcal{V}}_{1}$ as
\begin{equation}\label{ch8_eq:dot_v_2}
	\dot{\mathcal{V}}_1 =    2\mathcal{S}_{1} \left(r \dot{\theta}^2  - a_{u r} + \frac{\beta_{1} \lambda_1  r^{{\dfrac{\lambda_1 r^{2}}{\mu_{1} r^{2}+1}}+1}\left(2 \ln ( r)+\mu_{1} r^{2}+1\right)}{\left(\mu_{1} r^{2}+1\right)^{2}}\dot{r}\right). 
\end{equation}
With the guidance command given in \eqref{ch8_eq:amr}, the expression in \eqref{ch8_eq:dot_v_2} becomes
\begin{equation*}
	\dot{\mathcal{V}}_{1} =   2 \mathcal{S}_{1} \left[r \dot{\theta}^2  - \left(r \dot{\theta}^2 +\mathcal{K}_{1} | \mathcal{S}_{1}|^{\dfrac{\lambda_2 \mathcal{S}_{1}^2}{1+\mu_2 \mathcal{S}_{1}^2}} \sign( \mathcal{S}_{1})+  \dot{r} \mathcal{F}(r)\right)+ \dfrac{\beta_{1} \lambda_1  r^{{\dfrac{\lambda_1 r^{2}}{\mu_{1} r^{2}+1}}+1}\left(2 \ln ( r)+\mu_{1} r^{2}+1\right)}{\left(\mu_{1} r^{2}+1\right)^{2}}\dot{r}\right] , 
\end{equation*}
which, after some algebraic simplifications, results in
\begin{equation}\label{ch8_eq:dot_v_3}
	\dot{\mathcal{V}}_{1} =   2 \mathcal{S}_{1} \left( -\mathcal{K}_{1}| \mathcal{S}_{1}|^{\dfrac{\lambda_2 \mathcal{S}_{1}^2}{1+\mu_2 \mathcal{S}_{1}^2}} \sign( \mathcal{S}_{1}) \right)= -2\mathcal{K}_{1}| \mathcal{S}_{1}|^{\left(\dfrac{\lambda_2 \mathcal{S}_{1}^2}{1+\mu_2 \mathcal{S}_{1}^2}+1\right)}. 
\end{equation}
Now, consider the case when $\mathcal{V}_{1} \geq 1$. Then, one has $\frac{\lambda_2 \mathcal{S}_{1}^2}{1+\mu_2 \mathcal{S}_{1}^2}+1 \geq \frac{\lambda_2}{1+\mu_2}+1>2$. Also, for $\mathcal{V}_{1} \geq 1$,  $\mathcal{S}_{1} \geq 1$ together with the fact $\Upsilon_{2}= \frac{\lambda_2}{1+\mu_2} >1$, implies
\begin{equation}\label{ch8_eq:dot_v_4}
	\dot{\mathcal{V}}_{1} =  -2\mathcal{K}_{1}  \lvert \mathcal{S}_{1}\rvert^{\left(\dfrac{\lambda_2 \mathcal{S}_{1}^2}{1+\mu_2 \mathcal{S}_{1}^2}+1\right)}\leq - 2\mathcal{K}_{1}  \lvert \mathcal{S}_{1}\rvert^{\Upsilon_{2}+1} \leq -2\mathcal{K}_{1}  \mathcal{V}_{1}^{\dfrac{\Upsilon_{2}+1}{2}}. 
\end{equation}
It immediately follows from \cite[Lemma 1]{6104367} that all the solutions starting from $\{ \mathcal{V}_{1} \geq 1\}$ reach to the set $\{ \mathcal{V}_{1} \leq 1\}$ within a fixed time $\mathcal{T}_{1} \leq \dfrac{1}{\mathcal{K}_{1}\left(\Upsilon_{2}-1\right)}$ as $\mathcal{K}_{1} > 0$ and $\dfrac{\Upsilon_{2}+1}{2}>1$.

Next consider the case when $\mathcal{V}_{1} \leq  1$. For this case, we have $\mathcal{S}_{1} \leq 1$, so the expression of $\dot{\mathcal{V}}_{1}$ given in \eqref{ch8_eq:dot_v_3} can be written as
\begin{equation}\label{ch8_eq:dot_v_5}
	\dot{\mathcal{V}}_{1}  = - 2 \mathcal{K}_{1} \lvert \mathcal{S}_{1}\rvert \lvert \mathcal{S}_{1}\rvert^{\dfrac{\lambda_2 \mathcal{S}_{1}^2}{1+\mu_2 \mathcal{S}_{1}^2}}.
\end{equation}
As $\mathcal{S}_{1} \leq 1$, we  have  $1+\mu_2 \mathcal{S}_{1}^2 \geq 1$, which implies $\min\left( \lvert \mathcal{S}_{1}\rvert^{\frac{\lambda_2 \mathcal{S}_{1}^2}{1+\mu_2 \mathcal{S}_{1}^2}} \right) \geq \min \left(  \lvert \mathcal{S}_{1} \rvert^{\lambda_2  \mathcal{S}_{1} ^2} \right) = e^{\frac{-\lambda_2}{2e}}$, and \eqref{ch8_eq:dot_v_5} can be written as 
\begin{equation}\label{ch8_eq:dot_v_6}
	\dot{\mathcal{V}}_{1} \leq  -     2\mathcal{K}_{1} \lvert \mathcal{S}_{1}\rvert e^{\frac{-\lambda_2}{2e}} \leq -2\mathcal{K}_{1} e^{\frac{-\lambda_2}{2e}} \sqrt{\mathcal{V}_{1}}.
\end{equation}
Using the results from \cite[Theorem 4]{doi:10.1137/S0363012997321358} implies that the all solutions staring from $\{ \mathcal{V}_{1} \leq 1\}$ converge to zero within a fixed time $\mathcal{T}_2=\dfrac{1}{\mathcal{K}_{1} e^{\frac{-\lambda_2}{2e}}}$.

It follows from \eqref{ch8_eq:dot_v_4} and \eqref{ch8_eq:dot_v_6} that all the solutions starting from $\{ \mathcal{V}_{1} \geq 0 \}$ reach to the origin with in a fixed time $\mathcal{T}(s_{r0}) \leq \mathcal{T}_1 + \mathcal{T}_2 \leq \dfrac{1}{\mathcal{K}_{1}\left(\Upsilon_{2}-1\right)}+\dfrac{1}{\mathcal{K}_{1} e^{\frac{-\lambda_{2}}{2 e}}}$. In other words, the state trajectories starting from an arbitrary initial condition $\mathcal{S}_{1}(0)=s_{r0}$ reach the sliding manifold $\{\mathcal{S}_{1}=0\}$ with the fixed time $\mathcal{T}(s_{r0})$.

Once the trajectories reach the sliding manifold \eqref{eq:sr}, the reduced order dynamics is given by
\begin{equation}
	\dot{r} =- \beta_{1}  r^{\dfrac{\lambda_1 {r}^2}{1+\mu_1 {r}^2}}.
\end{equation}
By following a similar procedure, one can show that $r(t)$ starting from an arbitrary initial condition $r(0)=r_{0}$ converges to zero with in fixed time $\mathcal{T}(r_0) \leq \dfrac{1}{\beta_{1} \left(\Upsilon_{1}-1\right)}+\dfrac{1}{\beta_{1} e^{\frac{-\lambda_{1}}{2 e}}}$. Finally, the closed-loop system \eqref{eq:r_dynamics_stat}, \eqref{eq:sr}, and \eqref{ch8_eq:amr} reaches to the equilibrium point ($0,0$) within a fixed-time satisfying $\mathcal{T}_r =\mathcal{T}(s_{r0}) +\mathcal{T}(r_0)$. In other words, the relative range and its rate go to zero within a fixed time $\mathcal{T}_r$. This completes the proof.
\end{proof}
\begin{remark}
	\Cref{thm:stat_amr} guarantees fixed-time convergence of the relative range dynamics independent of the initial engagement geometry. This property is particularly suitable for cooperative payload delivery, where initial conditions may vary significantly across missions.
\end{remark}

The guidance command presented in \Cref{thm:stat_amr} only ensures that the UAV system will reach the landing platform within a user-defined time. However, it cannot guarantee that it will deliver the payload at the predefined landing angle $\theta_L$ (or reach the landing platform at the specified angle). As discussed in the sequel, the UAV system can deliver the payload to a stationary platform at any arbitrary landing angle, $\theta_L \in [0,2\pi]$. 

It is also important to note that, for a given landing angle $\theta_L$, one can always obtain the desired LOS angle $\theta_d$ using the relation \eqref{eq:thetad}. Since the relation in \eqref{eq:thetad} is bijective in nature, one can regulate the LOS angle $\theta$ to a desired value based on the mapping between the LOS and the landing angle. By doing so, we can ensure that the UAV system always delivers the object to the landing platform at the prespecified angle. Thus, we now design the guidance law $a_{u\theta}$ such that a target set $\mathscr{J}=\{\mathcal{J} \rvert\; \theta=\theta_d,\; \dot{\theta}=0 \; \forall \; t \geq t_2 \}$, where $\mathcal{J}$ is a set of relevant states and $t_2$ is some fixed time instant, is achieved. Toward this objective, we define the LOS angle error as $\varrho \coloneqq \theta - \theta_d$ and choose another sliding manifold as
\begin{equation} \label{ch8_eq:s_theta}
	\mathcal{S}_2 = \dot{\varrho}  + \beta_2 |\varrho|^{\dfrac{\lambda_3 {\varrho}^2}{1+\mu_3 {\varrho}^2}} \sign(\varrho),
\end{equation}
where $\beta_2>0$, $\mu_3>0$, $\lambda_3>0$, and $\dfrac{\lambda_3}{1+\mu_3} > 1$ are design parameters. 
\begin{remark}
	Note that the function $\Gamma: \varrho \to \lvert \varrho \rvert ^{\dfrac{\lambda_3 {\varrho}^2}{1+\mu_3 {\varrho}^2}} = \exp\left({\dfrac{\lambda_3 {\varrho}^2}{1+\mu_3 {\varrho}^2}\ln(\lvert \varrho \rvert) }\right)$ is continuous at $\varrho =0$ with $\Gamma(0)=0$ and thus the second term of sliding surface \eqref{ch8_eq:s_theta} is locally bounded.
\end{remark}
By ensuring sliding mode on this manifold $\{\mathcal{S}_{2}=0\}$, the UAV system can maintain the desired LOS angle from the platform. We now present the dynamics of the sliding manifold through the following lemma.
\begin{lemma}
	The dynamics of the sliding manifold $ \mathcal{S}_{2}$ has a relative degree of one with respect to the UAV's guidance command $a_{u\theta}$.
\end{lemma}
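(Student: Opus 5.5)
We follow the same route as in the proof of \Cref{lem:sr}: differentiate $\mathcal{S}_2$ in \eqref{ch8_eq:s_theta} once in time and show that $a_{u\theta}$ enters the result linearly with a coefficient that never vanishes.

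First, since $\theta_d$ is a constant fixed by the prescribed landing angle through \eqref{eq:thetad}, we have $\dot{\varrho}=\dot\theta$ and $\ddot{\varrho}=\ddot\theta$. Hence
\[
\dot{\mathcal{S}}_2=\ddot\theta+\beta_2\frac{d}{dt}\Big(|\varrho|^{h(\varrho)}\sign(\varrho)\Big),\qquad h(\varrho)=\frac{\lambda_3\varrho^2}{1+\mu_3\varrho^2}.
\]

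Second, we compute the derivative of the nonlinear term for $\varrho\neq0$. Write $|\varrho|^{h(\varrho)}=\exp\big(h(\varrho)\ln|\varrho|\big)$ and note that $\frac{d}{dt}|\varrho|=\sign(\varrho)\dot\varrho$. Using $h'(\varrho)=\frac{2\lambda_3\varrho}{(1+\mu_3\varrho^2)^2}$, as computed for $g$ in \Cref{lem:sr}, the same logarithmic differentiation gives
\[
\frac{d}{dt}\Big(|\varrho|^{h}\sign(\varrho)\Big)=\dot\varrho\,|\varrho|^{h}\left(\frac{2\lambda_3\varrho\ln|\varrho|}{(1+\mu_3\varrho^2)^2}\sign(\varrho)+\frac{h(\varrho)}{|\varrho|}\right).
\]
The factor $\sign(\varrho)^2=1$ drops out. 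The bracket then simplifies to $\frac{\lambda_3|\varrho|\,(2\ln|\varrho|+1+\mu_3\varrho^2)}{(1+\mu_3\varrho^2)^2}$, which yields a function
\[
\mathcal{G}(\varrho)=\frac{\beta_2\lambda_3|\varrho|}{1+\mu_3\varrho^2}\left(\frac{2\ln|\varrho|}{1+\mu_3\varrho^2}+1\right)|\varrho|^{h(\varrho)},
\]
the exact analogue of $\mathcal{F}$ in \eqref{eq:fr}.

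Third, we substitute $\ddot\theta$ from \eqref{eq:los_dynamics_stat}, or from \eqref{eq:los_dynamics} in the general case, to obtain
\[
\dot{\mathcal{S}}_2=-\frac{2\dot r\dot\theta}{r}+\dot\theta\,\mathcal{G}(\varrho)-\frac{a_{u\theta}}{r},
\]
plus $a_{p\theta}/r$ when the platform maneuvers. The coefficient of $a_{u\theta}$ is $-1/r$, which is nonzero for all $r>0$. This is precisely the domain on which the LOS dynamics are defined, per \Cref{rem:los_singularity}. Relative degree one follows.

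The main obstacle is justifying the derivative at $\varrho=0$, where $\ln|\varrho|$ is singular. We would handle it as in \Cref{rem:log_bound}. Because $|\varrho|^{h(\varrho)}\to1$ and $|\varrho|\ln|\varrho|\to0$ as $\varrho\to0$, the function $\mathcal{G}(\varrho)\to0$. It therefore extends continuously by $\mathcal{G}(0)=0$, and $\dot{\mathcal{S}}_2$ is locally bounded and well defined along trajectories. This is enough for the relative-degree claim, since the input coefficient $-1/r$ is unaffected by the singularity.
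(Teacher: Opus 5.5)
Your main computation is correct and essentially reproduces the paper's proof. The paper also works only on $\varrho\neq0$. It rewrites the nonlinear term as $\varrho|\varrho|^{g(\varrho)-1}$ and applies the product rule with logarithmic differentiation; you instead treat $\sign(\varrho)$ as locally constant, which is a cosmetic difference. It arrives at the same function you call $\mathcal{G}(\varrho)$, substitutes $\ddot\theta$ from \eqref{eq:los_dynamics_stat} with $\ddot\theta_d=0$, and reads off the input coefficient $-1/r$.

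Your final paragraph, however, does not hold. The limit you compute yourself, $|\varrho|^{h(\varrho)}\to1$, gives $\beta_2|\varrho|^{h(\varrho)}\sign(\varrho)\to\pm\beta_2$ as $\varrho\to0^{\pm}$. Near the origin this term behaves like $\beta_2\sign(\varrho)$ and jumps by $2\beta_2$, so $\mathcal{S}_2$ itself is discontinuous in $\varrho$ at $\varrho=0$. The paper's remark claiming continuity with $\Gamma(0)=0$ is therefore not a valid fallback either.

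The fact that $\mathcal{G}(\varrho)\to0$ only says the one-sided slopes flatten out, which is fully compatible with a jump. It gives no continuous extension of $\dot{\mathcal{S}}_2$. Whenever $\varrho(t)$ crosses zero, the one-sided limits of $\mathcal{S}_2(t)$ differ by $2\beta_2$, because $\dot\varrho$ is continuous. So $\mathcal{S}_2$ has no classical time derivative there, and your claim that $\dot{\mathcal{S}}_2$ is ``well defined along trajectories'' is false at those instants.

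What you, and the paper, actually establish is relative degree one with input gain $-1/r$ on the region $\{r>0,\ \varrho\neq0\}$. Behavior at $\varrho=0$ must be treated in a discontinuous (Filippov or sliding-mode) sense, not by continuously extending $\mathcal{G}$. Keep your computation and replace the last paragraph with that restriction.
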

\begin{proof}
We begin by differentiating \eqref{ch8_eq:s_theta} with respect to time, which yields
\begin{align}
	\dot{\mathcal{S}}_2 &= \ddot{\varrho} + \beta_2 \frac{d}{dt} \left( |\varrho|^{\dfrac{\lambda_3 {\varrho}^2}{1+\mu_3 {\varrho}^2}} \sign(\varrho) \right). \label{eq:s_theta_1}
\end{align}
For $\varrho \neq 0$, we can write $\sign(\varrho) = \dfrac{\varrho}{|\varrho|}$. Let the exponent be $g(\varrho) \coloneqq \dfrac{\lambda_3 \varrho^2}{1+\mu_3 \varrho^2}$. The term in brackets of \eqref{eq:s_theta_1} can be written as
\begin{align}
	|\varrho|^{g(\varrho)} \sign(\varrho) = |\varrho|^{g(\varrho)} \frac{\varrho}{|\varrho|} = \varrho |\varrho|^{g(\varrho) - 1} \coloneqq F(\varrho).
\end{align}
Using the chain rule, the time derivative of $F(\varrho)$ is given by $\dfrac{dF(\varrho)}{dt} = \dfrac{dF(\varrho)}{d\varrho} \dot{\varrho}$. We now find $\dfrac{dF}{d\varrho}$ using the product rule on $F(\varrho) = \varrho|\varrho|^{g(\varrho) - 1}$ as
\begin{align}
	\frac{dF(\varrho)}{d\varrho} = |\varrho|^{g(\varrho) - 1} + \varrho \frac{d}{d\varrho}\left(|\varrho|^{g(\varrho) - 1}\right). \label{eq:f_dot}
\end{align}
To find the derivative of the second term in the right-hand side of \eqref{eq:f_dot}, we define $H(\varrho) \coloneqq |\varrho|^{g(\varrho) - 1}$, and use logarithmic differentiation as
\begin{align*}
	\ln|H(\varrho)| &= (g(\varrho) - 1) \ln|\varrho| \implies \frac{1}{H(\varrho)}\frac{dH}{d\varrho} = g'(\varrho)\ln|\varrho| + (g(\varrho) - 1)\frac{1}{\varrho},
\end{align*}
which implies
\begin{align}
	\frac{dH}{d\varrho} &= H(\varrho) \left( g'(\varrho)\ln|\varrho| + \frac{g(\varrho) - 1}{\varrho} \right) = |\varrho|^{g(\varrho) - 1} \left( g'(\varrho)\ln|\varrho| + \frac{g(\varrho) - 1}{\varrho} \right). \label{eq:h_dot}
\end{align}
By substituting the value of  $\dfrac{dH}{d\varrho}$ from \eqref{eq:h_dot}  into \eqref{eq:f_dot}, one may get
\begin{align*}
	\frac{dF}{d\varrho} &= |\varrho|^{g(\varrho) - 1} + \varrho \left[ |\varrho|^{g(\varrho) - 1} \left( g'(\varrho)\ln|\varrho| + \frac{g(\varrho) - 1}{\varrho} \right) \right]\\
	&= |\varrho|^{g(\varrho) - 1} \left[ 1 + \varrho g'(\varrho)\ln|\varrho| + (g(\varrho) - 1) \right]= |\varrho|^{g(\varrho) - 1} \left[ g(\varrho) + \varrho g'(\varrho)\ln|\varrho| \right],
\end{align*}
which on substituting the value of $g(\varrho)$ and using the fact that $g'(\varrho) = \dfrac{2\lambda_3 \varrho}{(1+\mu_3 \varrho^2)^2}$, one gets
\begin{align}
	\frac{dF}{d\varrho} &= |\varrho|^{g(\varrho) - 1} \left[ \frac{\lambda_3 \varrho^2}{1+\mu_3 \varrho^2} + \varrho \left( \frac{2\lambda_3 \varrho}{(1+\mu_3 \varrho^2)^2} \right) \ln|\varrho| \right] = |\varrho|^{g(\varrho) - 1} \left[ \frac{\lambda_3 \varrho^2 (1+\mu_3 \varrho^2)}{(1+\mu_3 \varrho^2)^2} + \frac{2\lambda_3 \varrho^2 \ln|\varrho|}{(1+\mu_3 \varrho^2)^2} \right], \nonumber \\
	&= |\varrho|^{g(\varrho) - 1} \frac{\lambda_3 \varrho^2 (1 + \mu_3 \varrho^2 + 2\ln|\varrho|)}{(1+\mu_3 \varrho^2)^2} =|\varrho|^{g(\varrho) + 1} \frac{\lambda_3 (1 + \mu_3 \varrho^2 + 2\ln|\varrho|)}{(1+\mu_3 \varrho^2)^2}. \label{eq:f_fot_final}
\end{align}
Now, using \eqref{eq:f_fot_final}, the expression in \eqref{eq:s_theta_1} becomes
\begin{equation*}
	\dot{\mathcal{S}}_{2} = \ddot{\varrho} + \beta_2 \dot{\varrho} |\varrho|^{\frac{\lambda_3 \varrho^2}{1+\mu_3 \varrho^2}+1} \frac{\lambda_3 (1 + 2\ln|\varrho| + \mu_3 \varrho^2)}{(1+\mu_3 \varrho^2)^2},
\end{equation*}
which after substituting the value of $\ddot{\varrho}$ from \eqref{eq:los_dynamics_stat} and using the fact that $\ddot{\theta}_d=0$, lead us to arrive at
\begin{equation}
	\dot{\mathcal{S}}_{2}=	-\dfrac{2\dot{r}\dot{\theta}}{r} - \dfrac{a_{u  \theta}}{r} +  \frac{\lambda_3\beta_2 |\varrho|^{\frac{\lambda_3 \varrho^2}{1+\mu_3 \varrho^2}+1} (1 + 2\ln|\varrho| + \mu_3 \varrho^2)}{(1+\mu_3 \varrho^2)^2} \dot{\varrho} =-\dfrac{2\dot{r}\dot{\theta}}{r} - \dfrac{a_{u  \theta}}{r} + \mathcal{F}(\varrho) \dot{\varrho}, \label{eq:s_theta_dynamics_final}
\end{equation}
where
\begin{equation*}
	\mathcal{F}(\varrho) =  \dfrac{ \beta_2\lambda_3 |\varrho|}{1+\mu_3 {\varrho}^2}|\varrho|^{\dfrac{\lambda_3 {\varrho}^2}{1+\mu_3 {\varrho}^2}}\left(\dfrac{2 \ln(|\varrho|)}{1+\mu_3 {\varrho}^2} + 1\right).
\end{equation*}
It immediately follows from \eqref{eq:s_theta_dynamics_final} that the dynamics of the sliding manifold has a relative degree of one with respect to the UAV's control input $a_{u  \theta}$.
\end{proof}
We now present the guidance command to achieve the desired LOS angle in the following theorem.
\begin{theorem}\label{thm:stat_amtheta}
Consider the LOS angle dynamics given in \eqref{eq:los_dynamics} and the sliding manifold chosen as \eqref{ch8_eq:s_theta}. If the UAV's guidance command, $a_{u\theta}$, is designed as
\begin{equation} \label{eq:am_u_theta}
	a_{u\theta} = -{2\dot{r}\dot{\theta}} + r\mathcal{F}(\varrho)\dot{\varrho} +\mathcal{K}_{2} r \sign(\mathcal{S}_{2})|\mathcal{S}_{2}|^{\dfrac{\lambda_4 {\mathcal{S}_2}^2}{1+\mu_4 {\mathcal{S}_2}^2}} ,
\end{equation}
with $\mathcal{K}_{2}>0$, $\lambda_4>0$, $\mu_4>0$ such that $\Upsilon_4 \coloneqq \dfrac{\lambda_4}{1+\mu_4} > 1$, then, the UAV  converges to the desired LOS angle ($\theta_d$) within a fixed time, upper bounded by
\begin{equation} \label{ch8_eq:T_theta}
	\mathcal{T}_{\theta} \leq  \dfrac{1}{\mathcal{K}_{2}\left(\Upsilon_4-1\right)}+\dfrac{1}{\mathcal{K}_{2} \mathcal{M}_{4} }+\dfrac{1}{\beta_2\left(\Upsilon_3-1\right)}+ \dfrac{1}{\beta_2 \mathcal{M}_{3}}; \; \Upsilon_3 \coloneqq \dfrac{\lambda_3}{1+\mu_3} > 1. 
\end{equation}
where $\mathcal{M}_3 = e^{\frac{-\lambda_{3}}{2 e}}$  and  $\mathcal{M}_4 = e^{\frac{-\lambda_{4}}{2 e}}$ are constants, regardless of initial UAV-platform engagement.
\end{theorem}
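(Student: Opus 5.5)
The argument follows the same two-stage pattern as the proof of \Cref{thm:stat_amr}: a reaching phase in which $\mathcal{S}_2 \to 0$ in fixed time, followed by a sliding phase in which $\varrho \to 0$ in fixed time. Throughout we work on the approach phase where $r>0$, so that \eqref{eq:los_dynamics} and the control law \eqref{eq:am_u_theta} are well defined, as discussed in \Cref{rem:los_singularity}.

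\textbf{Reaching phase.} Take the Lyapunov function $\mathcal{V}_2=\mathcal{S}_2^2$. By the preceding lemma and \eqref{eq:s_theta_dynamics_final},
\[
\dot{\mathcal{V}}_2 = 2\mathcal{S}_2\left(-\dfrac{2\dot r\dot\theta}{r}-\dfrac{a_{u\theta}}{r}+\mathcal{F}(\varrho)\dot\varrho\right).
\]
The law \eqref{eq:am_u_theta} is built so that, once it is divided by $r$, it matches the drift terms $-2\dot r\dot\theta/r$ and $\mathcal{F}(\varrho)\dot\varrho$ in this expression; the factor $r$ in its last term is what removes the $1/r$ gain. I expect a sign bookkeeping issue here: with $a_{u\theta}$ entering as $-a_{u\theta}/r$, the cancellation requires the drift terms to appear in the law with the signs $-2\dot r\dot\theta$ and $+r\mathcal{F}(\varrho)\dot\varrho$ under the convention used in \eqref{eq:s_theta_dynamics_final}. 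I will check this carefully, and if necessary interpret the law so that exact cancellation holds. After cancellation,
\[
\dot{\mathcal{V}}_2=-2\mathcal{K}_2|\mathcal{S}_2|^{\frac{\lambda_4\mathcal{S}_2^2}{1+\mu_4\mathcal{S}_2^2}+1}.
\]
I then split into two cases exactly as in \Cref{thm:stat_amr}.
\begin{itemize}
\item On $\{\mathcal{V}_2\ge1\}$ the exponent is at least $\Upsilon_4+1$, so $\dot{\mathcal{V}}_2\le-2\mathcal{K}_2\mathcal{V}_2^{(\Upsilon_4+1)/2}$. By \cite[Lemma 1]{6104367}, trajectories enter $\{\mathcal{V}_2\le1\}$ within time $1/(\mathcal{K}_2(\Upsilon_4-1))$.
\item On $\{\mathcal{V}_2\le1\}$, use $\min_{0<x\le1}x^{\lambda_4x^2}=e^{-\lambda_4/(2e)}=\mathcal{M}_4$. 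This gives $\dot{\mathcal{V}}_2\le-2\mathcal{K}_2\mathcal{M}_4\sqrt{\mathcal{V}_2}$, so by \cite[Theorem 4]{doi:10.1137/S0363012997321358} $\mathcal{S}_2$ reaches zero within a further $1/(\mathcal{K}_2\mathcal{M}_4)$.
\end{itemize}

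\textbf{Sliding phase.} On $\{\mathcal{S}_2=0\}$ the reduced dynamics are
\[
\dot\varrho=-\beta_2|\varrho|^{\frac{\lambda_3\varrho^2}{1+\mu_3\varrho^2}}\sign(\varrho).
\]
With $W=\varrho^2$ the same two-case argument applies, with $\Upsilon_3$ and $\mathcal{M}_3$ in place of $\Upsilon_4$ and $\mathcal{M}_4$. This gives convergence $\varrho\to0$, and hence $\dot\varrho\to0$ as well, within $1/(\beta_2(\Upsilon_3-1))+1/(\beta_2\mathcal{M}_3)$. Because $\theta_d$ is constant, $\theta\to\theta_d$ and $\dot\theta\to0$. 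Adding the two phase bounds yields \eqref{ch8_eq:T_theta}, and the bound is independent of the initial conditions.

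\textbf{Main obstacle.} The delicate step is well-posedness, not the Lyapunov estimates. The law and \eqref{eq:los_dynamics} involve $1/r$, and \Cref{thm:stat_amr} drives $r$ to zero. So I must argue that either the LOS convergence time is shorter than the range convergence time, or LOS regulation is only required while $r>0$. I plan to handle this with a gain condition on $\mathcal{K}_2$ and $\beta_2$ relative to \eqref{eq:Tr} such that $\mathcal{T}_\theta<\mathcal{T}_r$, appealing to \Cref{rem:los_singularity}. I also need the boundedness of the logarithmic term $\mathcal{F}(\varrho)$ near $\varrho=0$, which follows as in \Cref{rem:log_bound}.
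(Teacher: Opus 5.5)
Your reaching phase is the paper's argument: $\mathcal{V}_2=\mathcal{S}_2^2$, exact cancellation of the drift, and the two-case estimate giving $1/(\mathcal{K}_2(\Upsilon_4-1))+1/(\mathcal{K}_2\mathcal{M}_4)$. Your sign worry is unfounded. With the law exactly as written, $-a_{u\theta}/r$ contributes
\[
\frac{2\dot r\dot\theta}{r}-\mathcal{F}(\varrho)\dot\varrho-\mathcal{K}_2\sign(\mathcal{S}_2)|\mathcal{S}_2|^{\frac{\lambda_4\mathcal{S}_2^2}{1+\mu_4\mathcal{S}_2^2}},
\]
which cancels the drift in \eqref{eq:s_theta_dynamics_final} exactly. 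So $\dot{\mathcal{V}}_2$ holds with equality and no reinterpretation is needed.

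\textbf{Gap in the sliding phase.} The step that fails is ``$\varrho\to0$, and hence $\dot\varrho\to0$ as well.'' Since $\varrho^2\ln|\varrho|\to0$,
\[
|\varrho|^{\frac{\lambda_3\varrho^2}{1+\mu_3\varrho^2}}=\exp\left(\frac{\lambda_3\varrho^2\ln|\varrho|}{1+\mu_3\varrho^2}\right)\longrightarrow 1 \quad (\varrho\to 0).
\]
So the reduced dynamics $\dot\varrho=-\beta_2|\varrho|^{(\cdot)}\sign(\varrho)$ arrive at $\varrho=0$ with $|\dot\varrho|=\beta_2$, not $0$. Your own bound $\min_{0<x\le 1}x^{\lambda x^2}=e^{-\lambda/(2e)}>0$ already shows this nonlinearity does not vanish near the origin.

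Under the closed loop, $\ddot\varrho=-\mathcal{F}(\varrho)\dot\varrho-\mathcal{K}_2\sign(\mathcal{S}_2)|\mathcal{S}_2|^{(\cdot)}$ is bounded, because $\mathcal{F}(\varrho)\to 0$. Hence $\dot\varrho$ is continuous and cannot drop to zero at the hitting instant. $\varrho$ therefore crosses zero transversally, $|\mathcal{S}_2|$ jumps to $2\beta_2$, and the reaching phase restarts.

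What your two-phase argument actually proves is only that $\theta$ first hits $\theta_d$ within the bound \eqref{ch8_eq:T_theta}. It does not show that $(\varrho,\dot\varrho)$ reaches and stays at $(0,0)$, which is what $\mathscr{J}$ and your claim $\dot\theta\to0$ require. The paper's proof has the same hole. It defers to ``a similar procedure'' and relies on the remark after \eqref{ch8_eq:s_theta} asserting $\Gamma(0)=0$, but in fact $\lim_{\varrho\to0}\Gamma(\varrho)=1$. A genuine convergence proof needs a sliding-manifold nonlinearity that vanishes at $\varrho=0$.

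\textbf{Well-posedness fix.} Your proposed gain condition $\mathcal{T}_\theta<\mathcal{T}_r$ does not work. It compares two \emph{upper} bounds, so it says nothing about which event actually occurs first. Moreover, the first time $r$ vanishes has no uniform positive lower bound over initial conditions: a small $r(0)$ or a large closing speed makes it arbitrarily short. What you can honestly claim is the conditional statement the paper makes implicitly through \Cref{rem:los_singularity}: the estimates hold on any interval on which $r>0$. This works because the factor $r$ in the law makes the closed-loop $\mathcal{S}_2$ dynamics independent of $r$ there.
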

\begin{proof}
	Consider another positive definite, radially unbounded Lyapunov function candidate as $\mathcal{V}_{2}=\mathcal{S}_{2}^2$, which, on differentiating with respect to time along the state trajectories and using \eqref{eq:s_theta_dynamics_final}, yields
	\begin{equation}\label{ch8_eq:dot_w_1}
		\dot{\mathcal{V}}_{2} =  2\mathcal{S}_{2} \dot{\mathcal{S}}_{2}=  2\mathcal{S}_{2}  \left( \ddot{\varrho}+ \dfrac{\beta_2 \lambda_3 |\varrho|^{\dfrac{\lambda_3 \varrho^{2}}{\mu_{3} \varrho^{2}+1}+1}\left(2 \ln (|\varrho|)+\mu_{3} \varrho^{2}+1\right)}{\left(\mu_{3} \varrho^{2}+1\right)^{2}}\dot{\varrho}\right). 
	\end{equation}
	By substituting the value of $\ddot{\varrho}$ from \eqref{eq:los_dynamics} into \eqref{ch8_eq:dot_w_1} and using the fact that $a_{m\theta}^{p}=0$, one may obtain
	\begin{equation*}
		\dot{\mathcal{V}}_{2} =  2\mathcal{S}_{2}  \left(-\dfrac{2\dot{r}\dot{\theta}}{r} - \dfrac{a_{u  \theta}}{r}+ \dfrac{\beta_2 \lambda_3 |\varrho|^{\dfrac{\lambda_3 \varrho^{2}}{\mu_{3} \varrho^{2}+1}+1}\left(2 \ln (|\varrho|)+\mu_{3} \varrho^{2}+1\right)}{\left(\mu_{3} \varrho^{2}+1\right)^{2}}\dot{\varrho}\right),
	\end{equation*}
	which, upon substituting the guidance command given in \eqref{eq:am_u_theta} leads us to arrive at
	\begin{equation}
		\dot{\mathcal{V}}_{2} \leq -2\mathcal{K}_{2}|\mathcal{S}_{2}|^{\dfrac{\lambda_{4} \mathcal{S}_{2}^{2}}{1+\mu_{4} \mathcal{S}_{2}^{2}}+1}.
	\end{equation}
	By following a similar procedure to \Cref{thm:stat_amr}, one can show that the UAV will attain the desired LOS angle, $\theta_d$, within a fixed time $\mathcal{T}_{\theta}$. This completes the proof.
\end{proof}

\subsection{Payload Delivery on a Maneuvering Platform}
In this subsection, we design the guidance strategies for the UAV system to deliver a payload to a maneuvering platform. As discussed earlier, the only condition for landing or delivering an object to a maneuvering platform is $\theta_{L} = 0$. To successfully deliver the payload to the maneuvering platform, the UAV's control inputs must ensure that the relative range and its rate become zero, and the LOS angle does not rotate. Specifically, this means $r =\dot{r} = 0$ and $\dot{\theta} = 0$. 

To enforce $r=0$ and $\dot{r}=0$, we again adopt the same sliding manifold for the relative-range dynamics as in the stationary-platform case, as defined in \eqref{eq:sr}. However, the guidance command derived for the stationary case is not applicable to maneuvering platforms. This is because, during the derivation of the stationary-platform guidance law, the platform velocity $v_p$ and the platform radial acceleration $a_{mr}^{p}$ were assumed to be zero. 

The relative-range dynamics for the UAV-platform system is given by
\begin{equation} \label{eq:r_dyn_maneuver}
	\ddot{r}= r \dot{\theta}^2 + \left( a_{p r} - a_{u r} \right),
\end{equation}
where $a_{p r}$ denotes the acceleration of the landing platform. In general, the platform maneuver may not be known a priori and is therefore treated as an external disturbance acting on the system. However, since practical platforms are driven by the actuators, we assume that their maneuvering capability is bounded by a known positive constant, that is, $\lvert a_{p r} \rvert \leq a_{p r}^{\max}$. In the following theorem, we propose a guidance command specifically tailored for delivering a payload to maneuvering platforms using a two-UAV system.
\begin{theorem} \label{thm:man_amr}
	Consider the relative range dynamics as in \eqref{eq:r_dynamics} and the sliding manifold as in \eqref{eq:sr}. If we design the guidance command $a_{ur}$ as
	\begin{equation} \label{ch8_eq:amr_prime}
		a_{ur} = r \dot{\theta}^2 +\mathcal{K}_{3} | \mathcal{S}_{1}|^{\dfrac{\lambda_2 \mathcal{S}_{1}^2}{1+\mu_2 \mathcal{S}_{1}^2}} \sign( \mathcal{S}_{1})+  \mathcal{F}(r)\dot{r} , 
	\end{equation}
	with $\mathcal{K}_{3} > e^{\tfrac{\lambda_2}{2e}}  a_{pr}^{\max}$, then the UAV rendezvous with the landing platform within a fixed time $\mathcal{T}_{r}^{\prime}$ given by
	\begin{align}
		\mathcal{T}_{r}^{\prime} \leq  \dfrac{1}{\mathcal{K}_{3}\left(\Upsilon_2-1\right)}+\dfrac{1}{\mathcal{K}_{3} \mathcal{M}_{2}}+\dfrac{1}{\beta_{1}\left(\Upsilon_1-1\right)}+ \dfrac{1}{\beta_{1} \mathcal{M}_{1}} \label{eq:Tr_prime},
	\end{align}
	where $\mathcal{F}(r)$ and other constants are the same as defined in \eqref{eq:fr}, regardless of initial relative separation $r(0)$.
\end{theorem}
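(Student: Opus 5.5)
We follow the proof of \Cref{thm:stat_amr}, this time keeping the platform radial acceleration $a_{pr}$ as a bounded matched disturbance. Take $\mathcal{V}_1=\mathcal{S}_1^2$. By the computation in \Cref{lem:sr}, but now starting from \eqref{eq:r_dyn_maneuver} instead of \eqref{eq:r_dynamics_stat}, we get $\dot{\mathcal{S}}_1 = r\dot\theta^2 + \dot r\mathcal{F}(r) + a_{pr} - a_{ur}$. Substituting \eqref{ch8_eq:amr_prime} cancels the known terms $r\dot\theta^2$ and $\dot r\mathcal{F}(r)$ exactly. What remains is
\begin{equation*}
\dot{\mathcal{V}}_1 = 2\mathcal{S}_1 a_{pr} - 2\mathcal{K}_3|\mathcal{S}_1|^{\frac{\lambda_2\mathcal{S}_1^2}{1+\mu_2\mathcal{S}_1^2}+1}
\le -2|\mathcal{S}_1|\Big(\mathcal{K}_3|\mathcal{S}_1|^{\frac{\lambda_2\mathcal{S}_1^2}{1+\mu_2\mathcal{S}_1^2}} - a_{pr}^{\max}\Big).
\end{equation*}

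The key step is to show the bracket stays uniformly positive, and we split on $\mathcal{V}_1$ as in \Cref{thm:stat_amr}.

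\emph{Region $\mathcal{V}_1\le 1$.} Here the exponent satisfies $\frac{\lambda_2\mathcal{S}_1^2}{1+\mu_2\mathcal{S}_1^2}\le \lambda_2\mathcal{S}_1^2$ and $|\mathcal{S}_1|\le 1$. Using the same minimisation as before ($x^{\lambda_2 x^2}\ge e^{-\lambda_2/(2e)}$ on $(0,1]$), we get $|\mathcal{S}_1|^{(\cdot)}\ge \mathcal{M}_2$. The gain condition $\mathcal{K}_3> e^{\lambda_2/(2e)}a_{pr}^{\max}$ is exactly $\mathcal{K}_3\mathcal{M}_2> a_{pr}^{\max}$. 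Hence $\dot{\mathcal{V}}_1\le -2(\mathcal{K}_3\mathcal{M}_2-a_{pr}^{\max})\sqrt{\mathcal{V}_1}$, a finite-time inequality of the type used in \cite[Theorem 4]{doi:10.1137/S0363012997321358}.

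\emph{Region $\mathcal{V}_1\ge 1$.} Here $|\mathcal{S}_1|^{(\cdot)}\ge|\mathcal{S}_1|^{\Upsilon_2}\ge 1$, so $a_{pr}^{\max}\le a_{pr}^{\max}|\mathcal{S}_1|^{(\cdot)}$. This gives $\dot{\mathcal{V}}_1\le -2(\mathcal{K}_3-a_{pr}^{\max})\mathcal{V}_1^{(\Upsilon_2+1)/2}$, and since $\mathcal{M}_2\le 1$ the gain condition also forces $\mathcal{K}_3>a_{pr}^{\max}$. Then \cite[Lemma 1]{6104367} bounds the time to enter $\{\mathcal{V}_1\le1\}$ independently of $\mathcal{S}_1(0)$.

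Adding the two phase times gives a fixed reaching time for $\{\mathcal{S}_1=0\}$. The manifold is the same as in \eqref{eq:sr}, so once on it the reduced dynamics is $\dot r=-\beta_1 r^{\lambda_1r^2/(1+\mu_1r^2)}$. This is disturbance-free, and the argument of \Cref{thm:stat_amr} gives $r\to0$ within $\frac{1}{\beta_1(\Upsilon_1-1)}+\frac{1}{\beta_1\mathcal{M}_1}$. On the manifold $\dot r=0$ whenever $r=0$, so the sliding mode keeps $r=\dot r=0$ afterwards; this is the rendezvous, and the total time is independent of $r(0)$.

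\textbf{Main obstacle: the constant in the final bound.} The argument above naturally produces effective gains $\mathcal{K}_3-a_{pr}^{\max}$ and $\mathcal{K}_3\mathcal{M}_2-a_{pr}^{\max}$ in place of $\mathcal{K}_3$ and $\mathcal{K}_3\mathcal{M}_2$. The bound \eqref{eq:Tr_prime} as stated, with $\mathcal{K}_3$ alone, does not follow directly. To recover it I would first try to absorb the disturbance using the gain margin; if that fails, I would state the bound with the effective gains and note that it reduces to \eqref{eq:Tr_prime} as $a_{pr}^{\max}\to0$.

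A secondary technical point is that near $\mathcal{S}_1=0$ the switching term is discontinuous. I would interpret solutions in the Filippov sense. On the sliding set the equivalent control then satisfies $|\mathcal{K}_3\cdot\text{equiv}|\ge a_{pr}^{\max}$, because $|\mathcal{S}_1|^{g}\to1$ as $\mathcal{S}_1\to0$, so the set $\{\mathcal{S}_1=0\}$ is invariant despite the platform maneuver.
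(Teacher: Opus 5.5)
Your route is the paper's own: the same $\mathcal{V}_1=\mathcal{S}_1^2$, the same split at $\mathcal{V}_1=1$, the same two cited results, and the same reduced dynamics on $\mathcal{S}_1=0$. The gap is the step meant to deliver the rendezvous: ``on the manifold $\dot r=0$ whenever $r=0$, so the sliding mode keeps $r=\dot r=0$ afterwards.'' You note that $|\mathcal{S}_1|^{g}\to 1$ as $\mathcal{S}_1\to 0$, with $g=\lambda_2\mathcal{S}_1^2/(1+\mu_2\mathcal{S}_1^2)$. The same holds for $\chi(r)=r^{\lambda_1 r^2/(1+\mu_1 r^2)}$, because $r^2\ln r\to 0$. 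The bound that gives you finite-time convergence of $r$, namely $\chi(r)\ge\mathcal{M}_1$ on $(0,1]$, already shows this. Along $\mathcal{S}_1=0$ the closing speed obeys $|\dot r|=\beta_1\chi(r)\ge\beta_1\mathcal{M}_1$ up to contact, and $\dot r\to-\beta_1$ as $r\downarrow 0$. The value $\chi(0)=0$ claimed in the remark after \eqref{eq:sr} is only a one-point convention, and it makes $\mathcal{S}_1$ discontinuous at $r=0$; the paper's proof relies on it implicitly as well. In closed loop, $\ddot r=a_{pr}-\mathcal{K}_3|\mathcal{S}_1|^{g}\sign(\mathcal{S}_1)-\mathcal{F}(r)\dot r$ stays bounded near $r=0$ because $\mathcal{F}(r)\to 0$. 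Hence $\dot r$ is continuous in time and equals $-\beta_1$ at contact. What the argument proves is a fixed-time \emph{intercept} with residual closing speed $\beta_1$, not $r=\dot r=0$. Nothing shows that $r=\dot r=0$ persists afterwards, and this is exactly the intercept-versus-landing distinction the paper itself draws. Tightening the Lyapunov estimates cannot fix this. You need a manifold whose range term vanishes continuously at $r=0$, e.g.\ $\dot r+\beta_1(r^{p}+r^{q})$ with $0<p<1<q$, with the usual terminal-sliding-mode care for the singular term $r^{p-1}\dot r$ in $\dot{\mathcal{S}}_1$.

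Your concern about the constant is well founded, and the paper's proof does not resolve it either. It derives exactly your two inequalities, with $\mathcal{K}_3-a_{pr}^{\max}$ and $\mathcal{K}_3\mathcal{M}_2-a_{pr}^{\max}$. It then simply writes the phase times as $1/(\mathcal{K}_3(\Upsilon_2-1))$ and $1/(\mathcal{K}_3\mathcal{M}_2)$. Absorbing the disturbance through the gain margin will not work. With $a_{pr}\equiv a_{pr}^{\max}$ and $\mathcal{S}_1(0)=1$, the reaching dynamics are exactly $\dot{\mathcal{S}}_1=-(\mathcal{K}_3\mathcal{S}_1^{g}-a_{pr}^{\max})$. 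Take, for instance, $\lambda_2=4$, $\mu_2=1$ and $a_{pr}^{\max}=0.47\,\mathcal{K}_3$, which is admissible since $e^{\lambda_2/(2e)}\approx 2.09$. The reaching time is then about $5.2/\mathcal{K}_3$, whereas $\frac{1}{\mathcal{K}_3(\Upsilon_2-1)}+\frac{1}{\mathcal{K}_3\mathcal{M}_2}\approx 3.1/\mathcal{K}_3$. Your fallback is the bound the argument actually supports: put the effective gains $\mathcal{K}_3-a_{pr}^{\max}$ and $\mathcal{K}_3\mathcal{M}_2-a_{pr}^{\max}$ into the first two terms of \eqref{eq:Tr_prime}.
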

\begin{proof}
	Consider the Lyapunov function candidate $\mathcal{V}_{1} = \mathcal{S}_{1}^2$. The time derivative of $\mathcal{V}_{1}$ along the system trajectories is $\dot{\mathcal{V}}_{1} = 2 \mathcal{S}_{1} \dot{\mathcal{S}}_{1}$. Using the value of $\dot{\mathcal{S}}_{1}$ from \eqref{ch8_eq:dot_sr} while accounting for the platform maneuver, we obtain
	\begin{align}
		\dot{\mathcal{V}}_1 =   2\mathcal{S}_{1} \left(r \dot{\theta}^2 +a_{p r} - a_{u r} + \frac{\beta_{1} \lambda_1  r^{{\dfrac{\lambda_1 r^{2}}{\mu_{1} r^{2}+1}}+1}\left(2 \ln ( r)+\mu_{1} r^{2}+1\right)}{\left(\mu_{1} r^{2}+1\right)^{2}}\dot{r}\right). \label{ch8_eq:dot_v_2_prime}
	\end{align}
	By substituting the value of $a_{u r}$ from \eqref{ch8_eq:amr_prime} into \eqref{ch8_eq:dot_v_2_prime} 
	\begin{align*}
		\dot{\mathcal{V}}_{1} =   2 \mathcal{S}_{1} \left[r \dot{\theta}^2 +a_{p r} - \left(r \dot{\theta}^2 +\mathcal{K}_{3} | \mathcal{S}_{1}|^{\frac{\lambda_2 \mathcal{S}_{1}^2}{1+\mu_2 \mathcal{S}_{1}^2}} \sign( \mathcal{S}_{1})+  \mathcal{F}(r)\dot{r}\right)  + \dfrac{\beta_{1} \lambda_1  r^{{\frac{\lambda_1 r^{2}}{\mu_{1} r^{2}+1}}+1}\left(2 \ln ( r)+\mu_{1} r^{2}+1\right)}{\left(\mu_{1} r^{2}+1\right)^{2}}\dot{r}\right], 
	\end{align*}
	which, after some algebraic simplifications and using the fact that the platform maneuver is bounded by $\lvert a_{mr}^{p} \rvert \leq a_{pr}^{\max}$, results in
	\begin{align}
		\dot{\mathcal{V}}_{1} =   2 \mathcal{S}_{1} \left( -\mathcal{K}_{3}| \mathcal{S}_{1}|^{\dfrac{\lambda_2 \mathcal{S}_{1}^2}{1+\mu_2 \mathcal{S}_{1}^2}} \sign( \mathcal{S}_{1}) +a_{p r} \right) = -2\mathcal{K}_{3}| \mathcal{S}_{1}|^{\dfrac{\lambda_2 \mathcal{S}_{1}^2}{1+\mu_2 \mathcal{S}_{1}^2}+1} \sign( \mathcal{S}_{1}) + 2 a_{p r} \mathcal{S}_{1}\label{ch8_eq:dot_v_3_1}
	\end{align}
	For the case, when $\mathcal{V}_{1} \geq 1$, we have 
	$\frac{\lambda_2 \mathcal{S}_{1}^2}{1+\mu_2 \mathcal{S}_{1}^2}+1 \geq \frac{\lambda_2}{1+\mu_2}+1>2$ and  $\mathcal{S}_{1} \geq 1$. Since $\mathcal{S}_{1} \geq 1$ and $\Upsilon_{2}=\frac{\lambda_2}{1+\mu_2}>1$, implies
	\begin{equation}\label{ch8_eq:dot_v_3_2}
		\dot{\mathcal{V}}_{1} \leq -2 |\mathcal{S}_{1}| \left( \mathcal{K}_{3}| \mathcal{S}_{1}|^{\dfrac{\lambda_2 \mathcal{S}_{1}^2}{1+\mu_2 \mathcal{S}_{1}^2}}  - a_{p r}^{\max} \right) \leq - 2 \left(\mathcal{K}_{3}- a_{p r}^{\max} \right)  \lvert \mathcal{S}_{1}\rvert^{\Upsilon_{2}+1} \leq -2\left(\mathcal{K}_{3}- a_{p r}^{\max} \right) \mathcal{V}_{1}^{\dfrac{\Upsilon_{2}+1}{2}}. 
	\end{equation}
	As $\mathcal{K}_{3}> a_{p r}^{\max}$, again invoking \cite[Lemma 1]{6104367} implies that all the solutions starting from $\{ \mathcal{V}_{1} \geq 1\}$ reach to the set $\{ \mathcal{V}_{1} \leq 1\}$ within a fixed time $\mathcal{T}_{1}^{\prime} \leq \dfrac{1}{\mathcal{K}_{3}\left(\Upsilon_{2}-1\right)}$.
	
	For the case when $\mathcal{V}_{1} \leq  1$, we have $\mathcal{S}_{1} \leq 1$, so the expression of $\dot{\mathcal{V}}_{1}$ given in \eqref{ch8_eq:dot_v_3_1} can be written as
	\begin{equation}\label{ch8_eq:dot_v_3_3}
		\dot{\mathcal{V}}_{1}  = - 2 \mathcal{K}_{3} \lvert \mathcal{S}_{1}\rvert \lvert \mathcal{S}_{1}\rvert^{\dfrac{\lambda_2 \mathcal{S}_{1}^2}{1+\mu_2 \mathcal{S}_{1}^2}} + 2 a_{p r} \mathcal{S}_{1}.
	\end{equation}
	Again notice that $1+\mu_2 \mathcal{S}_{1}^2 \geq 1$, which implies $\min\left( \lvert \mathcal{S}_{1}\rvert^{\frac{\lambda_2 \mathcal{S}_{1}^2}{1+\mu_2 \mathcal{S}_{1}^2}} \right) \geq \min \left(  \lvert \mathcal{S}_{1} \rvert^{\lambda_2  \mathcal{S}_{1} ^2} \right) = e^{\frac{-\lambda_2}{2e}}$, and \eqref{ch8_eq:dot_v_3_3} can be written as 
	\begin{equation}\label{ch8_eq:dot_v_3_4}
		\dot{\mathcal{V}}_{1} \leq  - 2\left(\mathcal{K}_{3}e^{\frac{-\lambda_2}{2e}} -a_{p r}^{\max} \right) \lvert \mathcal{S}_{1}\rvert  \leq -2\left(\mathcal{K}_{3}e^{\frac{-\lambda_2}{2e}} -a_{p r}^{\max} \right) \sqrt{\mathcal{V}_{1}}.
	\end{equation}
	As $\mathcal{K}_{3}e^{\frac{-\lambda_2}{2e}} -a_{p r}^{\max} >0 $, again using the results from \cite[Theorem 4]{doi:10.1137/S0363012997321358} implies that the all solutions staring from $\{ \mathcal{V}_{1} \leq 1\}$ converge to zero within a fixed time $\mathcal{T}_{2}^{\prime}=\dfrac{1}{\mathcal{K}_{3} e^{\frac{-\lambda_2}{2e}}}$. Thus, it follows from \eqref{ch8_eq:dot_v_3_2} and \eqref{ch8_eq:dot_v_3_4}  that all the solutions starting from $\{ \mathcal{V}_{1} \geq 0 \}$ reach to the sliding manifold $\mathcal{S}_{1} =0$ with in a fixed time $\mathcal{T}^{\prime}(s_{0}) \leq \mathcal{T}_1^{\prime} + \mathcal{T}_2^{\prime} \leq \dfrac{1}{\mathcal{K}_{3}\left(\Upsilon_{2}-1\right)}+\dfrac{1}{\mathcal{K}_{3} e^{\frac{-\lambda_{2}}{2 e}}}$ and $\mathcal{S}_{1} \equiv 0$ $\forall$ $t \geq \mathcal{T}^{\prime}(s_{0})$.
	
	Once the system trajectories reach the sliding manifold \eqref{eq:sr}, the reduced-order dynamics can be expressed as
	\begin{equation}
		\dot{r} = - \beta_{1} r^{\dfrac{\lambda_1 r^2}{1+\mu_1 r^2}}.
	\end{equation}
	Following a procedure analogous to the previous analysis, it can be shown that the state $r(t)$, starting from an arbitrary initial condition $r(0)=r_0$, converges to the origin within a fixed time bounded by $\mathcal{T}(r_0) \leq \frac{1}{\beta_{1} \left(\Upsilon_{1}-1\right)} + \frac{1}{\beta_{1} e^{-\frac{\lambda_{1}}{2 e}}}$.
	Consequently, the closed-loop system described by \eqref{eq:r_dyn_maneuver}, \eqref{eq:sr}, and \eqref{ch8_eq:amr_prime} reaches the equilibrium point $(0,0)$ in fixed time. The overall convergence time is given by $\mathcal{T}_{r}^{\prime} = \mathcal{T}^{\prime}(s_0) + \mathcal{T}(r_0)$, 
	where $\mathcal{T}^{\prime}(s_0)$ denotes the reaching time to the sliding manifold and $\mathcal{T}(r_0)$ corresponds to the convergence time along the manifold. Therefore, both the relative range and its rate converge to zero within the prescribed fixed time $\mathcal{T}_r^{\prime} $. Consequently, the UAV lands on the manuring platform with zero radial velocity. This completes the proof.
\end{proof}

Note that we choose the same sliding manifold for both maneuvering and stationary platforms to nullify the relative separation between the equivalent agent and the landing platform. This choice is motivated by the fact that the objective of driving the relative range and its rate to zero remains the same regardless of platform movements. We, however, choose a different sliding manifold for the LOS angle, defined as $\mathcal{S}_{3} = \dot{\theta}$. This distinction arises because the UAV system can only deliver the payload with a landing angle of zero. Therefore, unlike the previous case, there is no requirement to regulate the LOS angle to a specific value. Instead, it suffices to ensure that the LOS does not rotate, that is, $\dot{\theta} = 0$. In other words, the LOS angle must remain constant.  The dynamics of the LOS angle is given by \eqref{eq:los_dynamics} as
\begin{equation*}
	\ddot{\theta}=-\dfrac{2\dot{r}\dot{\theta}}{r} + \dfrac{1}{r} \left( a_{p \theta}-a_{u  \theta} \right),
\end{equation*}
where the term $ a_{p \theta}$ denotes the acceleration of the maneuvering platform, which is assumed to be upper bounded by a known positive constant, that is, $ \lvert a_{p \theta} \rvert \leq  a_{p \theta}^{\max} $. Next, our objective is to design the UAV's guidance command, $a_{u  \theta}$, such that a target set $\mathscr{D}=\{\mathcal{D} \rvert\; \dot{\theta}=0 \; \forall \; t \geq t_3 \}$, where $\mathcal{J}$ is a set of relevant states and $t_3$ is some fixed time instant, is achieved. The essence of the design is presented in the following theorem.
\begin{theorem}\label{thm:man_amtheta}
	Consider the LOS angle dynamics given in \eqref{eq:los_dynamics} and the sliding manifold chosen as $\mathcal{S}_{3} = \dot{\theta}$. If the guidance command $a_{u\theta}$ is designed as
	\begin{align}
		a_{u\theta} = - 2\dot{r}\dot{\theta}
		+ \mathcal{K}_4 r |\dot{\theta}|^{\dfrac{\lambda_5 \dot{\theta}^2}{1+\mu_5 \dot{\theta}^2}}
		\sign(\dot{\theta}), \label{ch8_eq:am_theta_prime}
	\end{align}
	where $\mathcal{K}_{4}>\sup_{t\geq 0}\left(\frac{a_{p\theta}^{\max}}{r}\right) e^{\frac{\lambda_5}{2e}}$ and $\lambda_5,\mu_5>0$ are chosen such that
	$\Upsilon_5 \coloneqq \dfrac{\lambda_5}{1+\mu_5} > 1$, then the LOS joining the equivalent agent and the landing platform becomes non-maneuvering, i.e., $\dot{\theta}=0$, within a fixed time bounded by
	\begin{align}
		\mathcal{T}_{\theta}^{\prime}
		\leq
		\dfrac{1}{\mathcal{K}_{4}(\Upsilon_5-1)}
		+
		\dfrac{1}{\mathcal{K}_{4} e^{-\frac{\lambda_{5}}{2 e}}}. \label{ch8_eq:T_theta_prime}
	\end{align}
\end{theorem}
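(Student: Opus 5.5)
We follow the template of \Cref{thm:man_amr}, with the Lyapunov function $\mathcal{V}_3=\mathcal{S}_3^2=\dot{\theta}^2$. Differentiating along \eqref{eq:los_dynamics} gives
\begin{equation*}
\dot{\mathcal{V}}_3 = 2\dot{\theta}\left(-\frac{2\dot r\dot\theta}{r}+\frac{a_{p\theta}-a_{u\theta}}{r}\right).
\end{equation*}
Substituting \eqref{ch8_eq:am_theta_prime}, the term $-2\dot r\dot\theta/r$ cancels exactly against the first part of $a_{u\theta}/r$. The factor $r$ multiplying the switching term also cancels the $1/r$. This leaves
\begin{equation*}
\dot{\mathcal{V}}_3 = -2\mathcal{K}_4|\dot\theta|^{\frac{\lambda_5\dot\theta^2}{1+\mu_5\dot\theta^2}+1}+\frac{2a_{p\theta}}{r}\dot\theta .
\end{equation*}
We bound the disturbance term by $2\bar d|\dot\theta|$, where $\bar d\coloneqq\sup_{t\ge0}a_{p\theta}^{\max}/r$. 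This is finite by the hypothesis on $\mathcal{K}_4$, and by \Cref{rem:los_singularity} we work on the interval where $r>0$.

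Next we split into the two regions, as in the earlier proofs.

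\textbf{Region $\mathcal{V}_3\ge1$.} Here $|\dot\theta|\ge1$, so the exponent satisfies $\frac{\lambda_5\dot\theta^2}{1+\mu_5\dot\theta^2}\ge\Upsilon_5>1$. Hence $\dot{\mathcal{V}}_3\le-2(\mathcal{K}_4-\bar d)\mathcal{V}_3^{(\Upsilon_5+1)/2}$. Since $e^{\lambda_5/(2e)}>1$, the gain condition gives $\mathcal{K}_4>\bar d$. Then \cite[Lemma 1]{6104367} shows the set $\{\mathcal{V}_3\le1\}$ is reached in a time of order $1/(\mathcal{K}_4(\Upsilon_5-1))$.

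\textbf{Region $\mathcal{V}_3\le1$.} We use the bound $|\dot\theta|^{\lambda_5\dot\theta^2/(1+\mu_5\dot\theta^2)}\ge\min_{x\in[0,1]}x^{\lambda_5x^2}=e^{-\lambda_5/(2e)}$, the same bound used in \Cref{thm:stat_amr}. This yields $\dot{\mathcal{V}}_3\le-2(\mathcal{K}_4e^{-\lambda_5/(2e)}-\bar d)\sqrt{\mathcal{V}_3}$. The gain hypothesis makes the coefficient positive, and \cite[Theorem 4]{doi:10.1137/S0363012997321358} gives finite-time convergence to $\dot\theta=0$. Adding the two phase times gives a bound of the form \eqref{ch8_eq:T_theta_prime}, independent of $\dot\theta(0)$.

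Once $\mathcal{S}_3=0$ is reached, it is invariant, because the same inequality forbids leaving the manifold. Hence $\dot\theta\equiv0$ thereafter.

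The main obstacle is bookkeeping of the constants. The rigorous reaching times come out as $1/((\mathcal{K}_4-\bar d)(\Upsilon_5-1))$ and $1/(\mathcal{K}_4e^{-\lambda_5/(2e)}-\bar d)$. These are larger than the stated ones, which match the nominal disturbance-free expressions. To recover \eqref{ch8_eq:T_theta_prime} exactly, we would follow the paper's convention in \Cref{thm:man_amr} and quote the nominal bound. Alternatively, we would note that the estimate holds with $\mathcal{K}_4$ replaced by the effective margin, and either absorb that margin into the gain condition or state the bound up to that margin.

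A secondary issue is the $1/r$ amplification of the platform maneuver as $r\to0$. This is why the hypothesis uses $\sup_t a_{p\theta}^{\max}/r$. We rely on the range loop of \Cref{thm:man_amr} being tuned so that LOS convergence occurs while $r$ is still bounded away from zero, or else we restrict the claim to the approach phase $r>0$.
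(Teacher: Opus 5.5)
Your argument is the paper's proof. It uses the same Lyapunov function $\mathcal{V}_3=\dot\theta^2$, the same exact cancellation of the $2\dot r\dot\theta$ term and of the factor $r$, and the same split at $\mathcal{V}_3=1$. Above the split it uses $|\dot\theta|^{g(\dot\theta)}\ge|\dot\theta|^{\Upsilon_5}$, and below it uses $|\dot\theta|^{g(\dot\theta)}\ge e^{-\lambda_5/(2e)}$, where $g(x)=\lambda_5x^2/(1+\mu_5x^2)$; it then invokes the same two cited results. Your $\dot{\mathcal{V}}_3$ also repairs two typos in the paper: $\theta$ is written for $\dot\theta$, and the factor $2$ on the disturbance term is missing. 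Your bookkeeping remark is correct, and the defect it exposes is in the paper's proof too. The paper concludes ``by a similar line of arguments as in \Cref{thm:man_amr}''. That proof derives the coefficients $\mathcal{K}_3-a_{pr}^{\max}$ and $\mathcal{K}_3e^{-\lambda_2/(2e)}-a_{pr}^{\max}$, but then reports the disturbance-free times.

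So drop the option of quoting the nominal bound: \eqref{ch8_eq:T_theta_prime} is not merely unproved, it is false in general. The closed loop is $\ddot\theta=-\mathcal{K}_4|\dot\theta|^{g(\dot\theta)}\sign(\dot\theta)+a_{p\theta}/r$. Take constant $r$ and constant $a_{p\theta}=a_{p\theta}^{\max}$, so that $a_{p\theta}/r\equiv\bar d$. Then the time from $\dot\theta(0)=1$ to $\dot\theta=0$ is exactly $\int_0^1\mathrm{d}x/(\mathcal{K}_4x^{g(x)}-\bar d)$.

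As $\mu_5\to0$, $\min_{[0,1]}x^{g(x)}\to e^{-\lambda_5/(2e)}$. Letting $\bar d$ approach its admissible limit $\mathcal{K}_4e^{-\lambda_5/(2e)}$ therefore makes this integral arbitrarily large. Meanwhile the right-hand side of \eqref{ch8_eq:T_theta_prime} stays near $\frac{1}{\mathcal{K}_4(\lambda_5-1)}+\frac{e^{\lambda_5/(2e)}}{\mathcal{K}_4}$. For instance, $\lambda_5=2$, $\mu_5=0.01$, $\mathcal{K}_4=1$ and $\bar d\uparrow e^{-1/e}$ give a settling time of roughly $80$, against a stated bound below $2.5$.

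What the argument actually proves is your margin-corrected bound $\frac{1}{(\mathcal{K}_4-\bar d)(\Upsilon_5-1)}+\frac{1}{\mathcal{K}_4e^{-\lambda_5/(2e)}-\bar d}$. Alternatively, \eqref{ch8_eq:T_theta_prime} holds verbatim if the switching gain in \eqref{ch8_eq:am_theta_prime} is replaced by $\mathcal{K}_4+\bar d\,e^{\lambda_5/(2e)}$ with any $\mathcal{K}_4>0$. The two effective coefficients then become at least $\mathcal{K}_4$ and exactly $\mathcal{K}_4e^{-\lambda_5/(2e)}$.

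Your $1/r$ caveat is also right, and the paper likewise leaves it unaddressed. Because the range loop drives $r\to0$, $\sup_t a_{p\theta}^{\max}/r$ is finite only on an interval where $r$ is bounded away from zero. The theorem has to be read on such an interval.
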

\begin{proof}
	Consider a Lyapunov function candidate as $\mathcal{V}_{3}=\dot{\theta}^2$. The time derivative of the Lyapunov function candidate along the state trajectories can be obtained as $\dot{\mathcal{V}}_{3}=  2 \dot{\theta} \ddot{\theta}$, which on substituting $\ddot{\theta}$ from \eqref{eq:los_dynamics}, results in
	\begin{align}
		\dot{\mathcal{V}}_{3} =  2\dot{\theta} \left[-\dfrac{2\dot{r}\dot{\theta}}{r} + \dfrac{1}{r} \left( a_{p \theta}-a_{u  \theta} \right) \right]. \label{ch8_eq:w_dot_prime_1}
	\end{align}
	With the guidance command given in \eqref{ch8_eq:am_theta_prime}, the expression in \eqref{ch8_eq:w_dot_prime_1} becomes
	\begin{align*}
		\dot{\mathcal{V}}_{3} =  2\dot{\theta} \left[ -\dfrac{2\dot{r}\dot{\theta}}{r}+ \dfrac{a_{p \theta}}{r} - \dfrac{1}{r} \left(-2\dot{r}\dot{\theta} +\mathcal{K}_4 r |\dot{\theta}|^{\dfrac{\lambda_5 {\dot{\theta}}^2}{1+\mu_5 {\dot{\theta}}^2}} \sign(\dot{\theta})\right)   \right] ,   
	\end{align*} 
	which, after some algebraic simplifications, results in
	\begin{align}
		\dot{\mathcal{V}}_{3} = -2\mathcal{K}_4 \dot{\theta}  |\theta|^{\dfrac{\lambda_5 {\theta}^2}{1+\mu_5 {\theta^2}}} \sign(\theta) + \dfrac{a_{p \theta} \dot{\theta}}{r}. \label{ch8_eq:w_dot_prime_2}
	\end{align}
	
	Now consider the case when $\mathcal{V}_{3}=\dot{\theta}^2 \geq 1$. This implies $\dfrac{\lambda_5 {\dot{\theta}}^2}{1+\mu_5 {\dot{\theta}^2}}+1\geq \dfrac{\lambda_5}{1+\mu_5}+1>2$. Since $\lvert \dot{\theta} \rvert \geq 1$ and $\dfrac{\lambda_5}{1+\mu_5} > 1$, one may write \eqref{ch8_eq:w_dot_prime_2} as
	\begin{align}
		\dot{\mathcal{V}}_{3} =  -2 \left(\mathcal{K}_{4}  - \dfrac{a_{p\theta}^{\max}}{r}\right) \lvert \dot{\theta} \rvert^{\Upsilon_{5}+1} \leq -2 \left(\mathcal{K}_{4}  - \dfrac{a_{p\theta}^{\max}}{r}\right) \mathcal{V}_{3}^{\dfrac{\Upsilon_{5}+1}{2}}. \label{ch8_eq:w_dot_prime_5}
	\end{align}
	For the case when $\mathcal{V}_{3}=\dot{\theta}^2 \leq 1$, one may write \eqref{ch8_eq:w_dot_prime_2} as
	\begin{align}
		\dot{\mathcal{V}}_{3} =    - 2 \mathcal{K}_{4} |\dot{\theta}|  |\theta|^{\dfrac{\lambda_5 {\dot{\theta}}^2}{1+\mu_5 {\dot{\theta}^2}}} + \dfrac{a_{p \theta} \dot{\theta}}{r}. \label{ch8_eq:w_dot_prime_6}
	\end{align}
	Also, for $\mathcal{V}_{3}\leq 1$, $1+\mu_5\dot{\theta}^2 \geq 1$ and $\lvert \theta \rvert \leq 1$, which together implies $\min \left(|\dot{\theta}|^{\dfrac{\lambda_5 {\dot{\theta}}^2} {1+\mu_5 {\dot{\theta}^2}}}\right) \geq $ $\min \left( \lvert \dot{\theta} \rvert^{\lambda_5 \dot{\theta}^2}\right) = e^{\dfrac{-\lambda_5}{2e}}$. Using this relation, we write the expression in \eqref{ch8_eq:w_dot_prime_6} as
	\begin{align}
		\dot{\mathcal{V}}_{3} \leq  -2 \left( \mathcal{K}_4  e^{\frac{-\lambda_5}{2e}} - \dfrac{a_{p\theta}^{\max}}{r} \right) |\theta| \leq  -2 \left( \mathcal{K}_4  e^{\frac{-\lambda_5}{2e}} - \dfrac{a_{p\theta}^{\max}}{r} \right) \mathcal{V}_3^{\frac{1}{2}} . \label{ch8_eq:w_dot_prime_7}
	\end{align}
	Since $\mathcal{K}_{4}>\sup_{t\geq 0}\left(\frac{a_{p\theta}^{\max}}{r}\right) e^{\frac{\lambda_5}{2e}}$, by a similar line of arguments as in \Cref{thm:man_amr}, it follows from \eqref{ch8_eq:w_dot_prime_5} and \eqref{ch8_eq:w_dot_prime_7} that the LOS rate $\dot{\theta}$ nullifies to zero within a fixed time $\mathcal{T}_{\theta}^{\prime}$.
\end{proof}

It is important to note that the guidance laws presented in \Cref{thm:stat_amr,thm:stat_amtheta,thm:man_amr,thm:man_amtheta} are in the transformed coordinate system and the mapping is given in \eqref{ch8_eq:am_transformation}. As this is a bijective mapping, one can calculate the components of control inputs required by the equivalent agent $U$ by inverting \eqref{ch8_eq:am_transformation}.

\subsection{Link Orientation Control}\label{sec:link_orientation}
As discussed earlier, the link can rotate about the $z$-axis. Therefore, in addition to achieving accurate payload delivery, it is essential to regulate the orientation of the rigid link connecting the UAVs. Uncontrolled link rotation during payload delivery may lead to misalignment and degrade accuracy. Thus, we design a control law to stabilize the link orientation to a desired value within a fixed time. Let $\xi \in \mathbb{R}$ denote the angular orientation of the rigid link about the $z$-axis, measured with respect to an inertial frame. The rotational dynamics of the link are modeled as
\begin{equation} \label{eq:orinet_dyan}
	\dot{\xi} = \omega, \; \dot{\omega} = \alpha,
\end{equation}
where $\omega$ and $\alpha$ denote the angular velocity and angular acceleration (control input), respectively. We define the orientation error as $e_{\xi} \coloneqq \xi - \xi_d$, where $\xi_d$ is the desired orientation. To ensure fixed-time convergence, we introduce the following sliding manifold
\begin{align}\label{eq:s_xi}
	\mathcal{S}_{\xi} = \omega + \beta_3 |e_{\xi}|^{\frac{\lambda_6 e_{\xi}^2}{1+\mu_6 e_{\xi}^2}} \sign(e_{\xi}),
\end{align}
where $\beta_3>0$, $\lambda_6>0$, $\mu_6>0$ are design parameters satisfying $\Upsilon_6 \coloneqq \frac{\lambda_6}{1+\mu_6} > 1$. One can obtain the derivative of the sliding manifold $\mathcal{S}_{\xi}$ by following a similar procedure as in \Cref{lem:sr} as
\begin{align}\label{eq:s_xi_dot}
	\dot{\mathcal{S}}_{\xi} = \alpha + \mathcal{F}(e_{\xi}) \dot{e}_{\xi},
\end{align}
where $\mathcal{F}(e_{\xi})$ is a bounded nonlinear function, given by
\begin{equation} \label{eq:f_e_xi}
	\mathcal{F}(e_{\xi}) =
	\dfrac{\beta_{3} \lambda_3 
		\left( 2\ln|e_{\xi}| + \mu_{6} e_{\xi}^{2} + 1 \right)}
	{\left(1+\mu_{6} e_{\xi}^{2}\right)^{2}} e_{\xi}^{\left(\dfrac{\lambda_6 e_{\xi}^{2}}{1+\mu_{6} e_{\xi}^{2}}\right)+1}.
\end{equation}

\begin{theorem}
	Consider the link orientation dynamics given by \eqref{eq:orinet_dyan} and the sliding manifold chosen as \eqref{eq:s_xi}. If the control input is designed as
	\begin{equation}\label{eq:alpha}
		\alpha = - \mathcal{F}(e_{\xi}) \omega 
		- \mathcal{K}_5 |\mathcal{S}_{\xi}|^{\frac{\lambda_7 \mathcal{S}_{\xi}^2}{1+\mu_7 \mathcal{S}_{\xi}^2}} \sign(\mathcal{S}_{\xi}),
	\end{equation}
	where $\mathcal{K}_5>0$, $\lambda_7>0$, $\mu_7>0$ satisfy $\Upsilon_7 \coloneqq \frac{\lambda_7}{1+\mu_7} > 1$, then the link orientation angle $\xi$ will converge to its desired value $\xi_{d}$ within a fixed time, upper bounded by
	\begin{equation}
		\mathcal{T}_{\xi} \leq  
		\dfrac{1}{\mathcal{K}_5(\Upsilon_7-1)} 
		+ \dfrac{1}{\mathcal{K}_5 \mathcal{M}_5}
		+ \dfrac{1}{\beta_3(\Upsilon_6-1)} 
		+ \dfrac{1}{\beta_3 \mathcal{M}_6},
	\end{equation}
	where $\mathcal{M}_5=e^{-\frac{\lambda_7}{2e}}$ and $\mathcal{M}_6=e^{-\frac{\lambda_6}{2e}}$.
\end{theorem}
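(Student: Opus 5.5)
The proof follows the two-phase argument of \Cref{thm:stat_amr}: a fixed-time reaching phase onto $\{\mathcal{S}_{\xi}=0\}$, followed by fixed-time sliding of $e_{\xi}$ to zero. I take $\xi_d$ constant, so $\dot e_{\xi}=\omega$. The first step is to substitute \eqref{eq:alpha} into \eqref{eq:s_xi_dot}. Since $\dot e_{\xi}=\omega$, the term $\mathcal{F}(e_{\xi})\omega$ cancels exactly, leaving the closed-loop surface dynamics
\begin{equation*}
	\dot{\mathcal{S}}_{\xi} = -\mathcal{K}_5 |\mathcal{S}_{\xi}|^{\frac{\lambda_7 \mathcal{S}_{\xi}^2}{1+\mu_7 \mathcal{S}_{\xi}^2}} \sign(\mathcal{S}_{\xi}).
\end{equation*}
Here I must check that the $\mathcal{F}(e_{\xi})$ in \eqref{eq:f_e_xi} really is the derivative of the second term of \eqref{eq:s_xi} with respect to $e_{\xi}$. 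As printed, it carries a $\lambda_3$ that should be $\lambda_6$, and the power should be of $|e_{\xi}|$. I will use the correct expression, obtained as in the proof for $\mathcal{S}_2$, and note that by \Cref{rem:log_bound} it is locally bounded near $e_{\xi}=0$. This makes the closed loop well-posed.

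For the reaching phase I take $\mathcal{V}_{\xi}=\mathcal{S}_{\xi}^2$, which gives $\dot{\mathcal{V}}_{\xi}=-2\mathcal{K}_5|\mathcal{S}_{\xi}|^{\frac{\lambda_7\mathcal{S}_{\xi}^2}{1+\mu_7\mathcal{S}_{\xi}^2}+1}$. I split into two regions.
\begin{itemize}
	\item On $\{\mathcal{V}_{\xi}\ge 1\}$ the exponent is at least $\Upsilon_7$, so $\dot{\mathcal{V}}_{\xi}\le -2\mathcal{K}_5\mathcal{V}_{\xi}^{(\Upsilon_7+1)/2}$. By \cite[Lemma 1]{6104367}, trajectories enter $\{\mathcal{V}_{\xi}\le 1\}$ within $1/(\mathcal{K}_5(\Upsilon_7-1))$.
	\item On $\{\mathcal{V}_{\xi}\le 1\}$ I use the bound $|s|^{\lambda s^2/(1+\mu s^2)}\ge |s|^{\lambda s^2}\ge e^{-\lambda/(2e)}$ for $|s|\le 1$. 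This gives $\dot{\mathcal{V}}_{\xi}\le -2\mathcal{K}_5\mathcal{M}_5\sqrt{\mathcal{V}_{\xi}}$. By \cite[Theorem 4]{doi:10.1137/S0363012997321358}, $\mathcal{S}_{\xi}$ reaches zero within $1/(\mathcal{K}_5\mathcal{M}_5)$.
\end{itemize}
Because the closed-loop dynamics of $\mathcal{S}_{\xi}$ do not depend on $e_{\xi}$, the set $\{\mathcal{S}_{\xi}=0\}$ is invariant afterwards.

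On the manifold the reduced dynamics are $\dot e_{\xi}=-\beta_3|e_{\xi}|^{\frac{\lambda_6e_{\xi}^2}{1+\mu_6e_{\xi}^2}}\sign(e_{\xi})$. Repeating the same two-region argument with $\mathcal{V}=e_{\xi}^2$ gives convergence of $e_{\xi}$ to zero within $1/(\beta_3(\Upsilon_6-1))+1/(\beta_3\mathcal{M}_6)$. On the manifold, $e_{\xi}=0$ then forces $\omega=0$. Adding the reaching and sliding times gives the stated $\mathcal{T}_{\xi}$, independent of the initial conditions.

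The main obstacle is technical rather than structural. The argument needs the log-type derivative $\mathcal{F}(e_{\xi})$ to stay bounded, so that the control is well defined and solutions exist, and it needs to be stated in the correct form with $|e_{\xi}|$ and $\lambda_6$. A smaller point is that the case split must be phrased in terms of $|\mathcal{S}_{\xi}|\ge 1$ and $|e_{\xi}|\ge 1$ (the paper writes $\mathcal{S}_{1}\ge 1$ where $|\mathcal{S}_{1}|\ge 1$ is meant). I would handle both explicitly and otherwise invoke the earlier fixed-time lemmas verbatim.
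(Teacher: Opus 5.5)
Your outline follows the paper's proof: the paper only records $\dot{\mathcal V}_\xi=-2\mathcal K_5\lvert\mathcal S_\xi\rvert^{(\cdot)+1}$ and defers to \Cref{thm:stat_amr}. Your corrections ($\lambda_6$ for $\lambda_3$, $\lvert e_\xi\rvert$, $\lvert\mathcal S_\xi\rvert\ge 1$) are right. However, the sliding-phase step fails, and not for the reason you flag. With the corrected formula, $\mathcal F(e_\xi)\to 0$ as $e_\xi\to 0$, so its boundedness is harmless.

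The real problem is that the surface term does not vanish at the origin. Put $p_6(e)\coloneqq\lambda_6e^2/(1+\mu_6e^2)$. Then $p_6(e)\ln\lvert e\rvert\to 0$, hence $\lvert e\rvert^{p_6(e)}\to 1$ as $e\to 0$. This is exactly the fact behind the bound $\lvert s\rvert^{\lambda s^2/(1+\mu s^2)}\ge e^{-\lambda/(2e)}$ that you use, and it contradicts the paper's remarks claiming $\Gamma(0)=0$. Consequently:
\begin{itemize}
\item $\mathcal S_\xi=\omega+\beta_3\lvert e_\xi\rvert^{p_6(e_\xi)}\sign(e_\xi)$ jumps by $2\beta_3$ across $e_\xi=0$.
\item $\{\mathcal S_\xi=0\}$ consists of two branches: $\omega=-\beta_3\lvert e_\xi\rvert^{p_6(e_\xi)}$ for $e_\xi>0$, and $\omega=\beta_3\lvert e_\xi\rvert^{p_6(e_\xi)}$ for $e_\xi<0$.
\item On these branches $\lvert\omega\rvert\ge\beta_3\mathcal M_6$ whenever $0<\lvert e_\xi\rvert\le 1$, and they end at $(0,\mp\beta_3)$, not at the origin.
\end{itemize}

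Since $\omega$ is a state with bounded derivative $\alpha$, it cannot jump. A trajectory sliding in along the branch $e_\xi>0$ therefore reaches $e_\xi=0$ with $\omega=-\beta_3$ and crosses into $e_\xi<0$. There $\mathcal S_\xi\approx-2\beta_3$, so it has left the manifold. This makes two of your claims false: the invariance of $\{\mathcal S_\xi=0\}$, and ``on the manifold, $e_\xi=0$ forces $\omega=0$''. The scalar reduced law $\dot e_\xi=-\beta_3\lvert e_\xi\rvert^{p_6(e_\xi)}\sign(e_\xi)$ settles only because it lets $\dot e_\xi$ jump from $\mp\beta_3$ to $0$. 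The double integrator $\dot e_\xi=\omega$, $\dot\omega=\alpha$ forbids that jump.

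The reaching phase is affected as well. Every sign change of $e_\xi$ raises $\mathcal V_\xi$ by $4\beta_3\lvert\omega\rvert$. So the comparison lemmas cannot be chained, and the two time bounds cannot simply be added.

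In fact, near the origin the closed loop behaves like the relay oscillator $\ddot e_\xi\approx-\mathcal K_5\beta_3^{p_7(\beta_3)}\sign(e_\xi)$, with $p_7(s)\coloneqq\lambda_7s^2/(1+\mu_7s^2)$. It is perturbed only by terms of order $\lvert\omega\rvert$ and smaller, which for the paper's gains are even anti-damping. Such a system does not reach the origin in finite time. So the fixed-time conclusion fails for this surface design, not merely its proof.

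The paper's one-line appeal to \Cref{thm:stat_amr} hides the same defect, so there is nothing there you can cite. A correct result needs a redesigned surface whose $e_\xi$-term vanishes continuously at $0$, e.g.\ with an exponent bounded below by some $a\in(0,1)$ near the origin. It also needs a nonsingular treatment of the then-unbounded term $\mathcal F(e_\xi)\omega$.
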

\begin{proof}
	Consider the Lyapunov function $\mathcal{V}_{\xi} = \mathcal{S}_{\xi}^2$. The time derivative of $\mathcal{V}_{\xi}$ can be obtained using \eqref{eq:s_xi_dot} as
	\begin{equation}
		\dot{\mathcal{V}}_{\xi} = 2 \mathcal{S}_{\xi} \dot{\mathcal{S}}_{\xi} =  2 \mathcal{S}_{\xi} \left( \omega + \beta_3 |e_{\xi}|^{\frac{\lambda_6 e_{\xi}^2}{1+\mu_6 e_{\xi}^2}} \sign(e_{\xi}) \right),
	\end{equation}
	which on, substituting the value of control input $\alpha$ from \eqref{eq:alpha} results in
	\begin{equation}
		\dot{\mathcal{V}}_{\xi}
		= -2\mathcal{K}_5 |\mathcal{S}_{\xi}|^{\frac{\lambda_7 \mathcal{S}_{\xi}^2}{1+\mu_7 \mathcal{S}_{\xi}^2}+1}.
	\end{equation}
	Using arguments similar to those in \Cref{thm:stat_amr}, one can show the error $e_{\xi}$ converges to zero within a fixed-time $\mathcal{T}_{\xi}$. Consequently, the link will be in the desired direction within the same time $\mathcal{T}_{\xi}$. This completes the proof.
\end{proof}

\section{Performance Evaluations} \label{ch8_sec:load_simulation}
In this section, we evaluate the efficacy of the proposed guidance strategies (\Cref{thm:stat_amr,thm:stat_amtheta,thm:man_amr,thm:man_amtheta}) for scenarios where the UAV system is required to deliver a payload to both stationary and maneuvering platforms. To respect the physical limitations of the actuators, we adopt the strategy used in \cite{doi:10.2514/1.G005098}, with $k_M$, $k_\alpha$, and $k_\omega$ chosen as $0.8$, $0.15$, and $0.045$, respectively. The maximum allowable speed for each vehicle is constrained to $25 \ \si{ m/s}$, and the maximum acceleration in each direction is chosen as $25 \ \si{m/s^2}$. The length of the rigid payload is $L = 2\ \si{m}$, and the parameter $\sigma$ is set to $0.5$, meaning the equivalent agent is located at an equal distance from both UAVs $A$ and $B$. The desired link orientation angle is specified as $90^\circ$, with link controller gains chosen as $\beta_3=0.2$, $\lambda_6=3.5$, $\mu_6=2.4$, $\lambda_7=2.1$,  $\mu_7=1.0$, $\mathcal{K}_{5}=2.5$. For better visualization, animations for our results are available at \href{https://www.youtube.com/playlist?list=PLGPUUU7qKUSA}{https://www.youtube.com/playlist?list=PLGPUUU7qKUSA}.
\subsection{Payload delivery on a stationary platform}
We first evaluate the performance of the proposed guidance strategy (\Cref{thm:stat_amr,thm:stat_amtheta})  when delivering the payload to a stationary platform from various approach directions. The landing platform is located at the origin ($0,0\ \si{m}$). On the other hand, the equivalent agent is located at a radial distance of $100\ \si{m}$ and at a LOS angle of $120^\circ$ with respect to the platform, with an initial heading angle of $135^\circ$. The desired landing angle ($\theta_{L}$) is chosen form the set $\{-45^{\circ},\ -90^{\circ},\ -135^{\circ},\ -180^{\circ}\}$. The design parameters are set to $\beta_1=0.1$, $\lambda_1=3.5$, $\mu_1=2.4$, $\lambda_2=2.1$,  $\mu_2=1.0$, $\mathcal{K}_{1}=0.5$, $\mathcal{K}_2=5.0$.

Under the proposed guidance laws (presented in \Cref{thm:stat_amr,thm:stat_amtheta}), we show the performance through \Cref{fig:stat}. One may notice from \Cref{fig:stat_path} that the UAV delivers the payload at the desired landing angles for all four cases. It can be observed from \Cref{fig:stat_amu} that the control input demand is relatively higher during the phase when sliding mode is not imposed on the sliding manifolds $\mathcal{S}_{i}$ for $i=1,2$. However, once sliding mode is enforced on the chosen manifolds, the control demand reduces significantly and ultimately converges to zero (see \Cref{fig:stat_surface}). The individual accelerations of UAV $A$ and UAV $B$ are depicted in \Cref{fig:stat_ama,fig:stat_amb}. \Cref{fig:stat_r_psi_u} demonstrates that shortly after the sliding mode occurs, the heading angle converges to its desired value ($-\theta_{L}$) and the relative range successfully nullifies to zero. Consequently, the UAV delivers the payload to the stationary platform from the desired direction. \Cref{fig:stat_vu} shows that the equivalent agent initially increases its linear speed to nullify the distance to the platform, before smoothly decelerating to rest precisely at the moment of delivery. The corresponding linear velocities allocated to the individual UAVs are shown in \Cref{fig:stat_va,fig:stat_vb}, and a similar behavior can be observed. The performance of the link controller is shown in \Cref{fig:Stat_Link_States}. Finally, the performance of the link orientation controller is shown in \Cref{fig:Stat_Link_States}. It can be observed that the angle $\zeta$ converges to the desired orientation of the link, which is $90^\circ$, within a fixed time for all cases. Note that the link orientation is set to be perpendicular to the heading angle. However, it can be chosen based on mission requirements.
\begin{figure}[!ht]
	\centering
	\begin{subfigure}{0.33\linewidth}
		\centering
		\includegraphics[width=\linewidth]{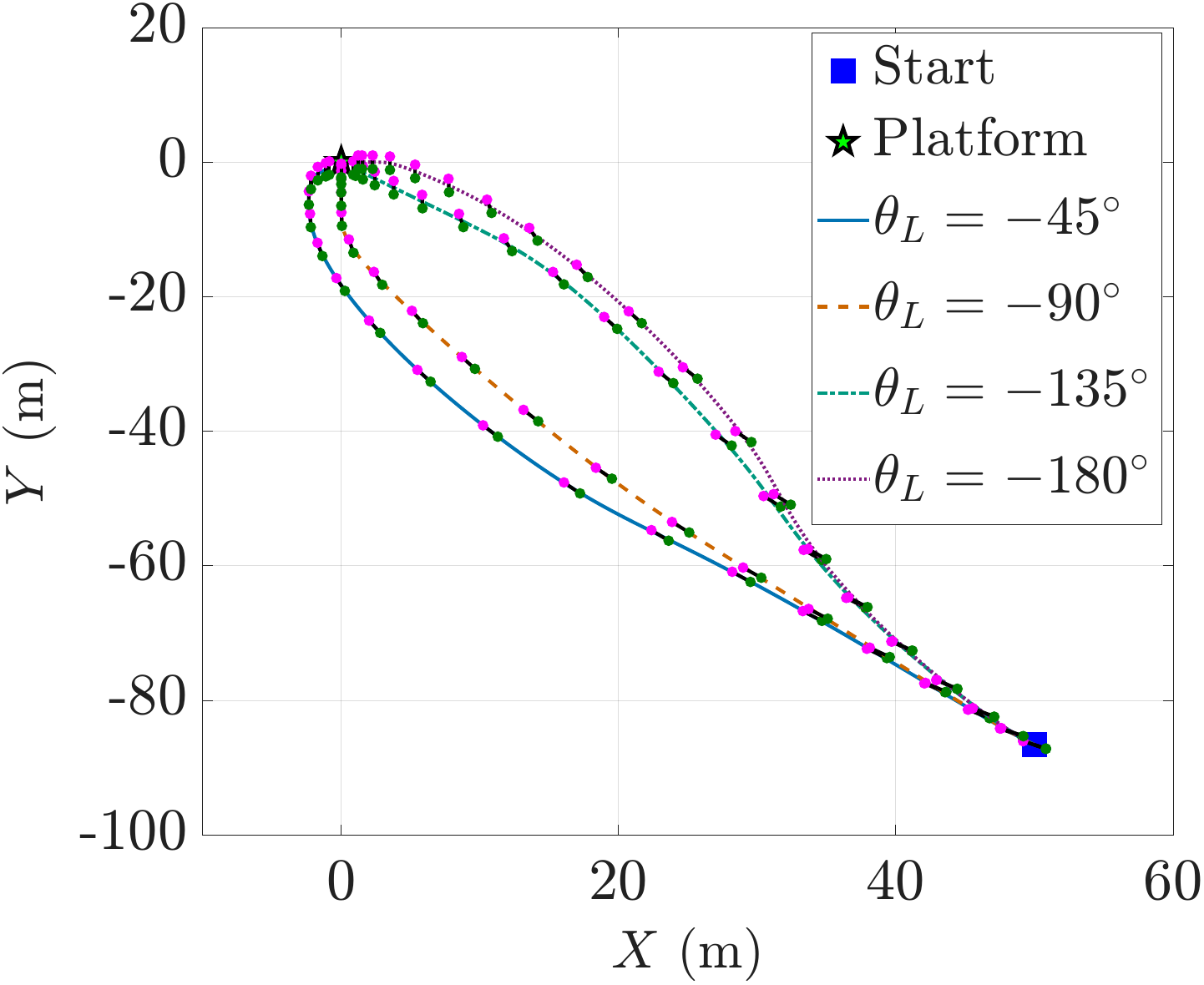}
		\caption{Trajectory.}
		\label{fig:stat_path}
	\end{subfigure}%
	\begin{subfigure}{0.33\linewidth}
		\centering
		\includegraphics[width=\linewidth]{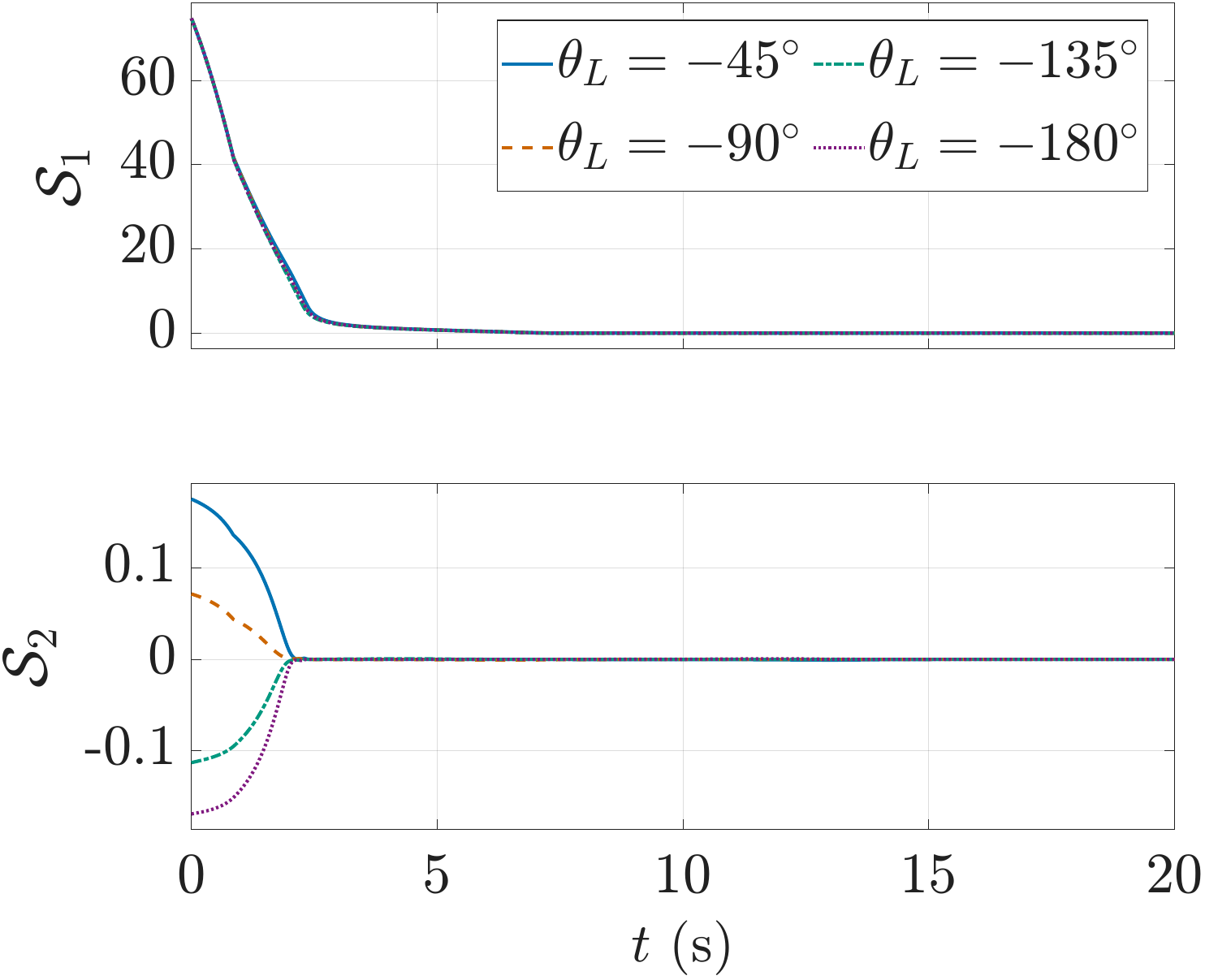}
		\caption{Sliding surfaces.}
		\label{fig:stat_surface}
	\end{subfigure}%
	\begin{subfigure}{0.33\linewidth}
		\centering
		\includegraphics[width=\linewidth]{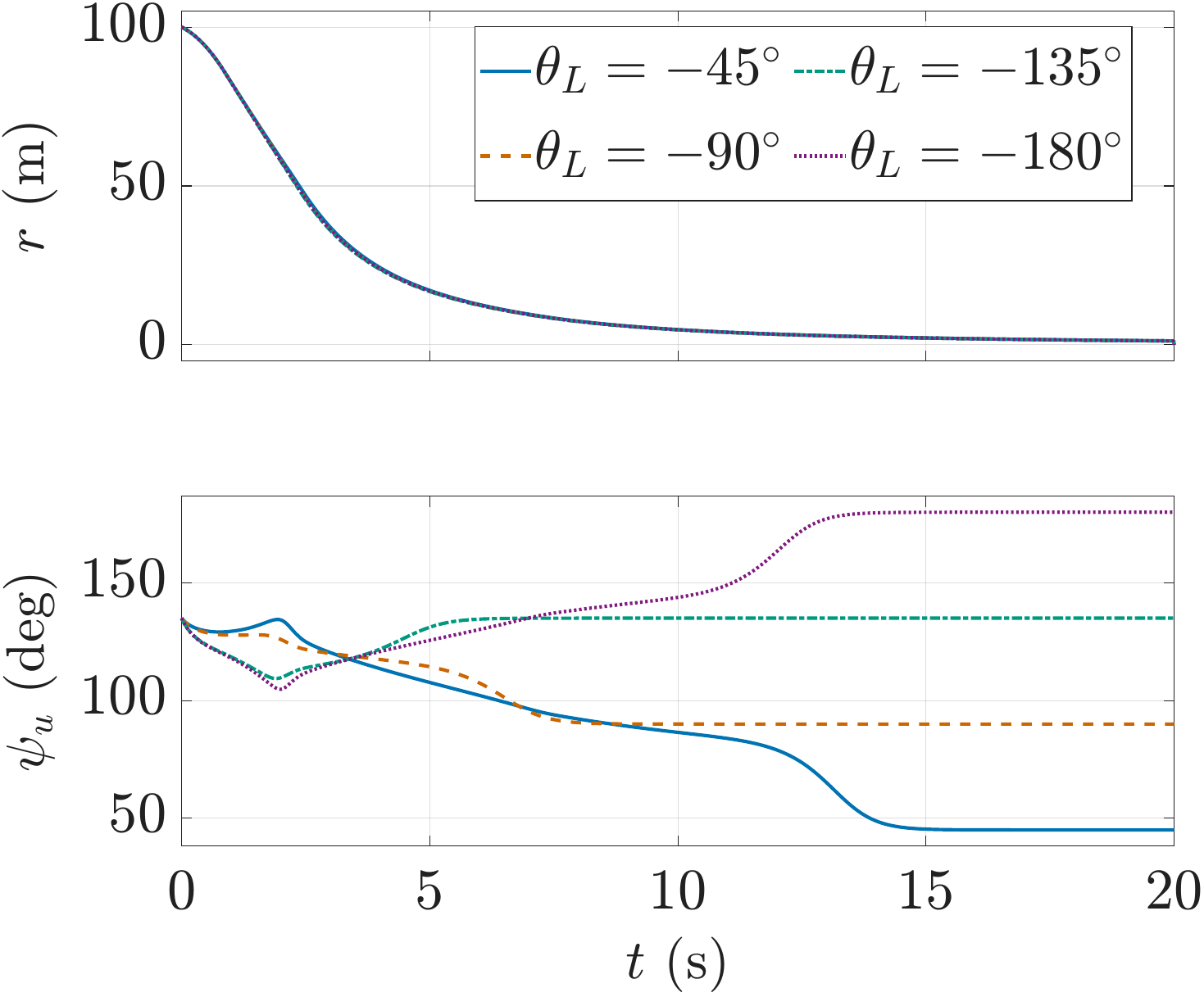}
		\caption{Relative range and heading angle.}
		\label{fig:stat_r_psi_u}
	\end{subfigure}
	\begin{subfigure}{0.33\linewidth}
		\centering
		\includegraphics[width=\linewidth]{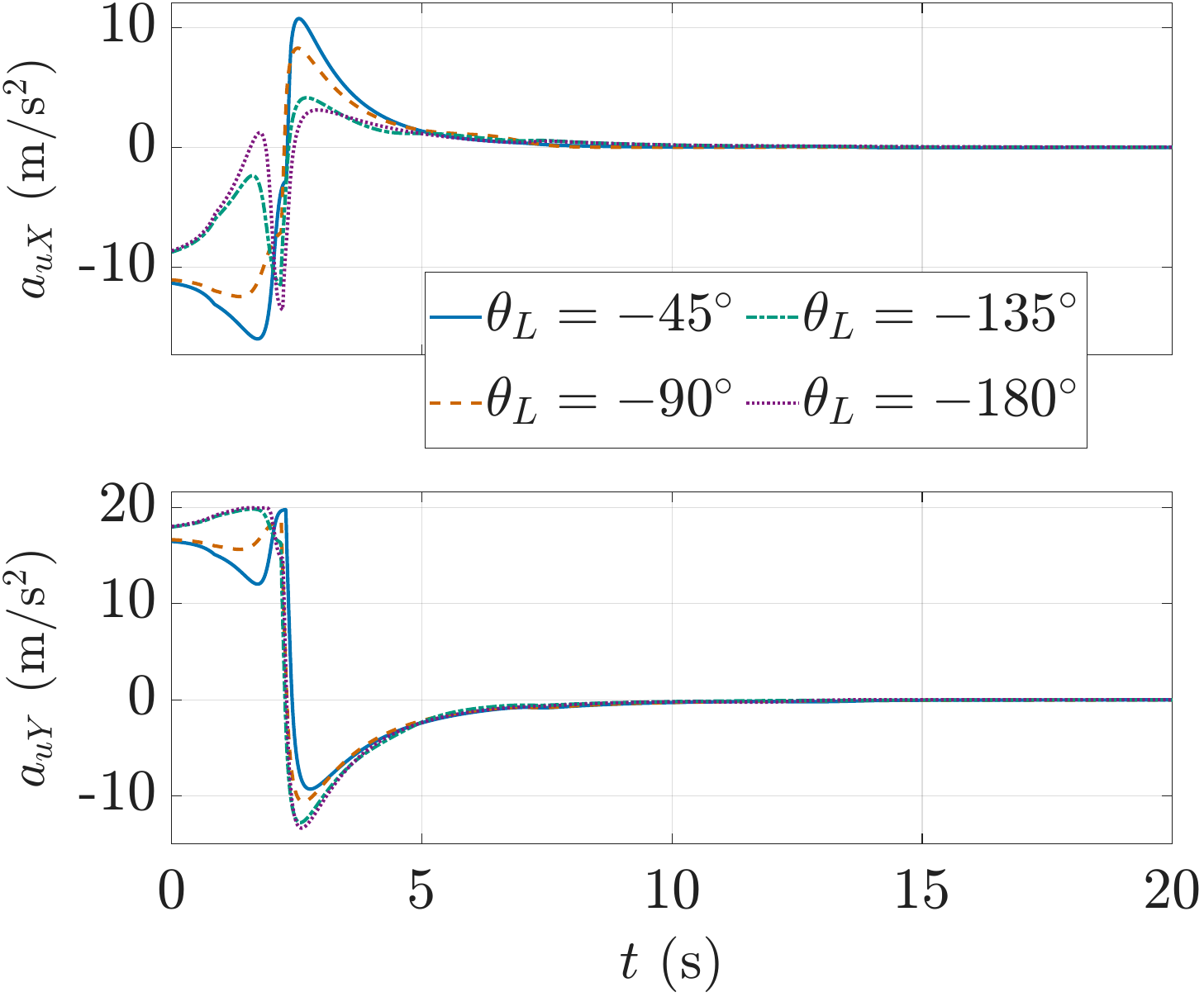}
		\caption{Control input to the equivalent agent U.}
		\label{fig:stat_amu}
	\end{subfigure}%
	\begin{subfigure}{0.33\linewidth}
		\centering
		\includegraphics[width=\linewidth]{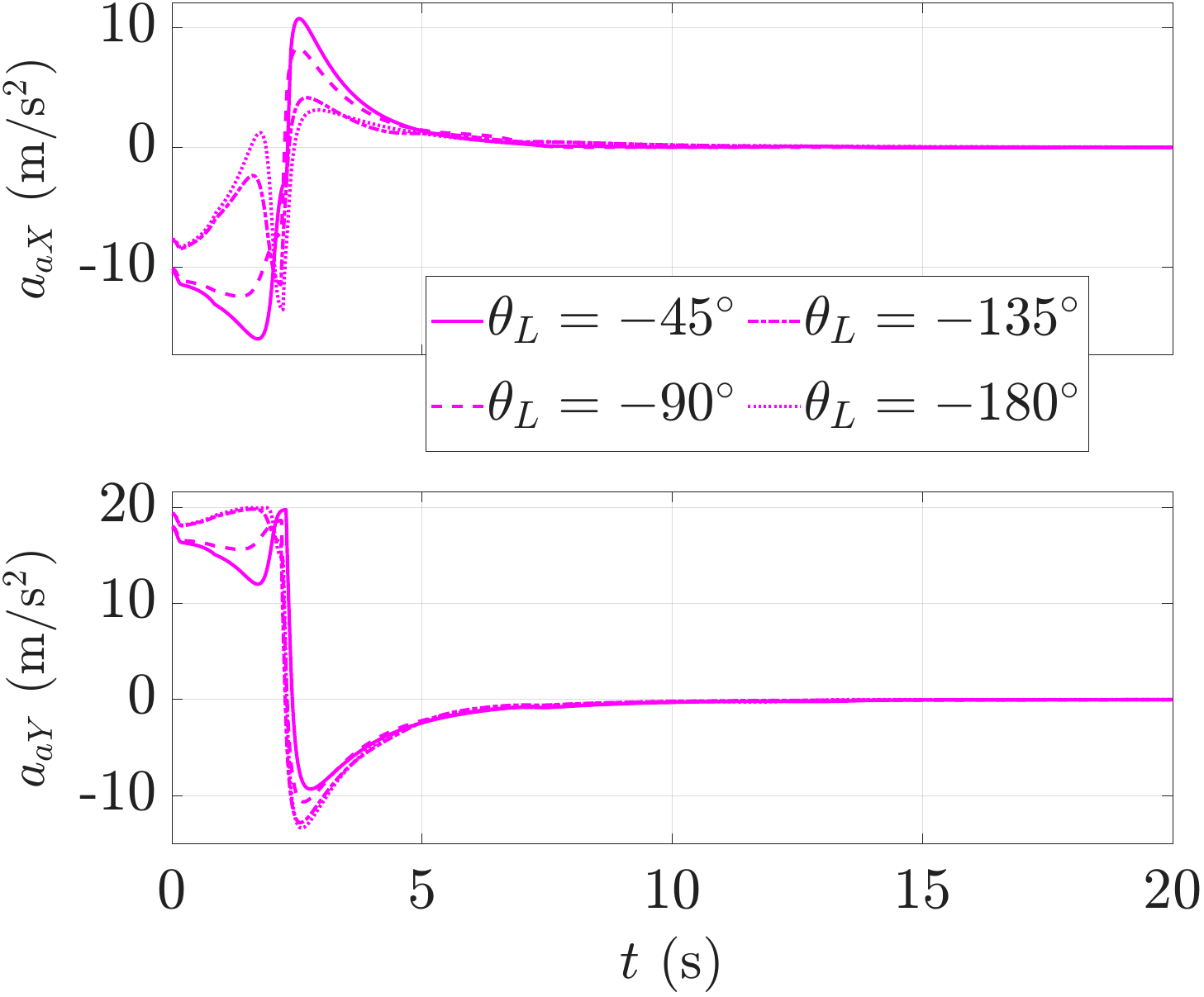}
		\caption{Control input to the UAV A.}
		\label{fig:stat_ama}
	\end{subfigure}%
	\begin{subfigure}{0.33\linewidth}
		\centering
		\includegraphics[width=\linewidth]{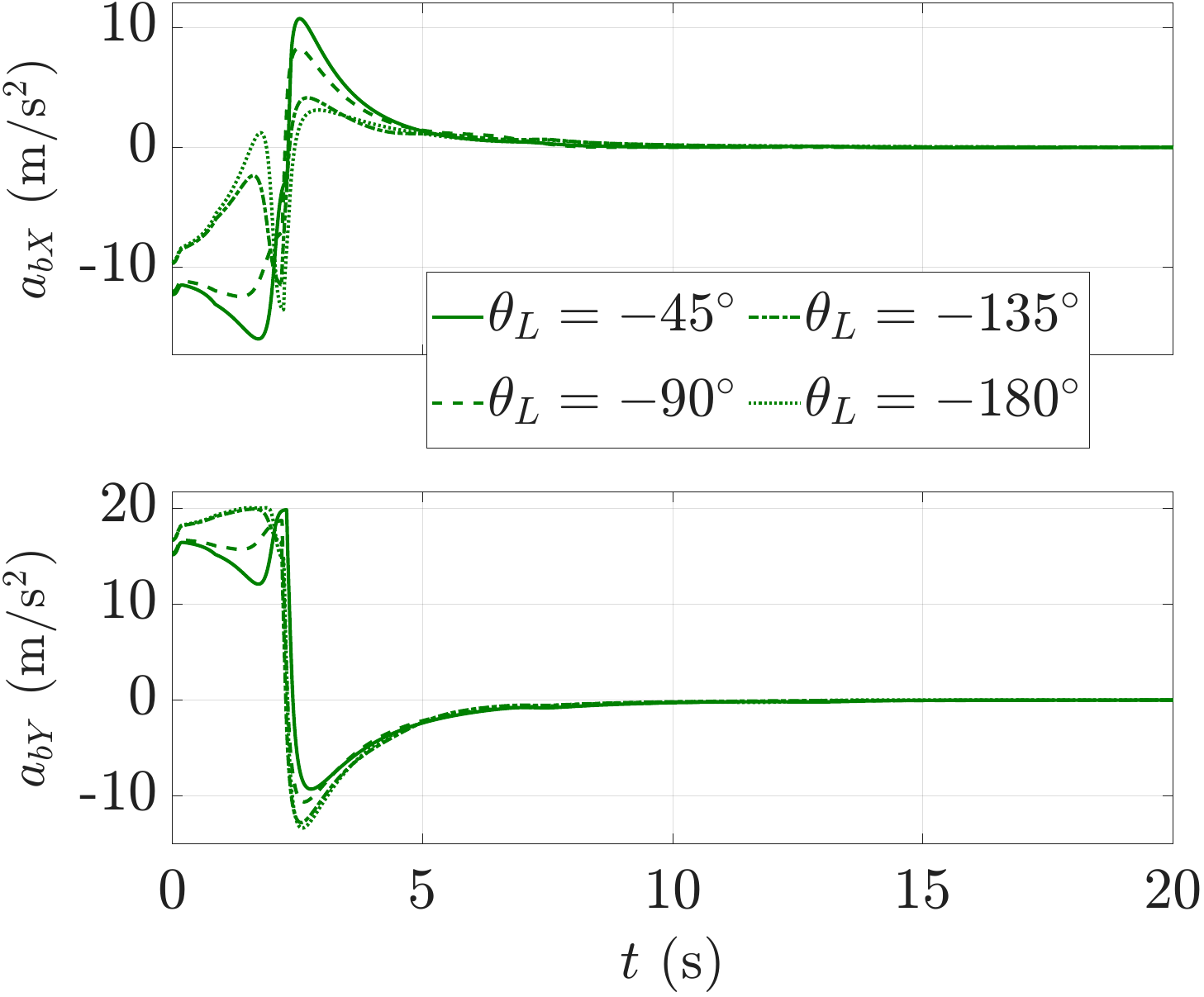}
		\caption{Control input to the UAV B.}
		\label{fig:stat_amb}
	\end{subfigure}
	\begin{subfigure}{0.33\linewidth}
		\centering
		\includegraphics[width=\linewidth]{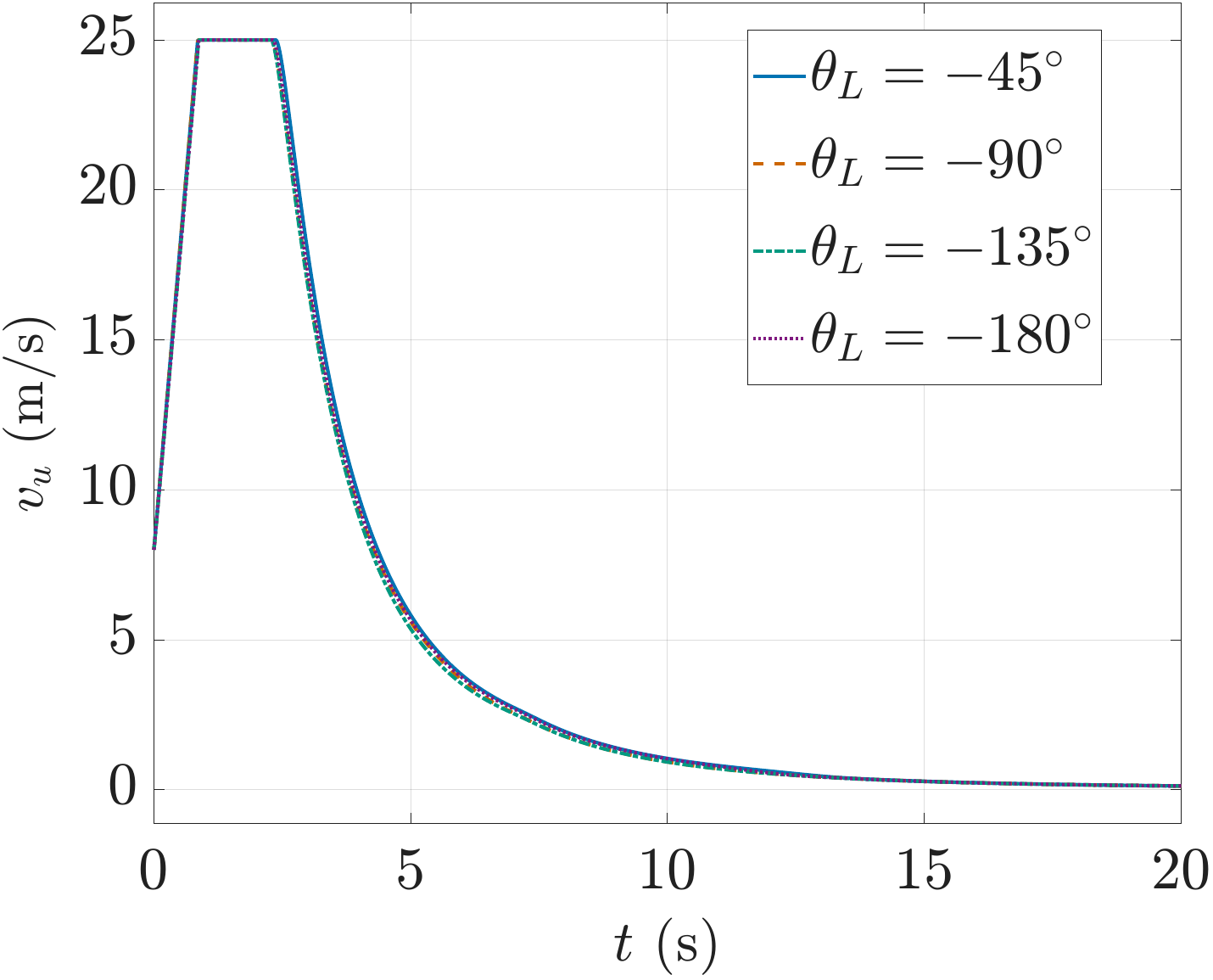}
		\caption{Speed of the equivalent agent U.}
		\label{fig:stat_vu}
	\end{subfigure}%
	\begin{subfigure}{0.33\linewidth}
		\centering
		\includegraphics[width=\linewidth]{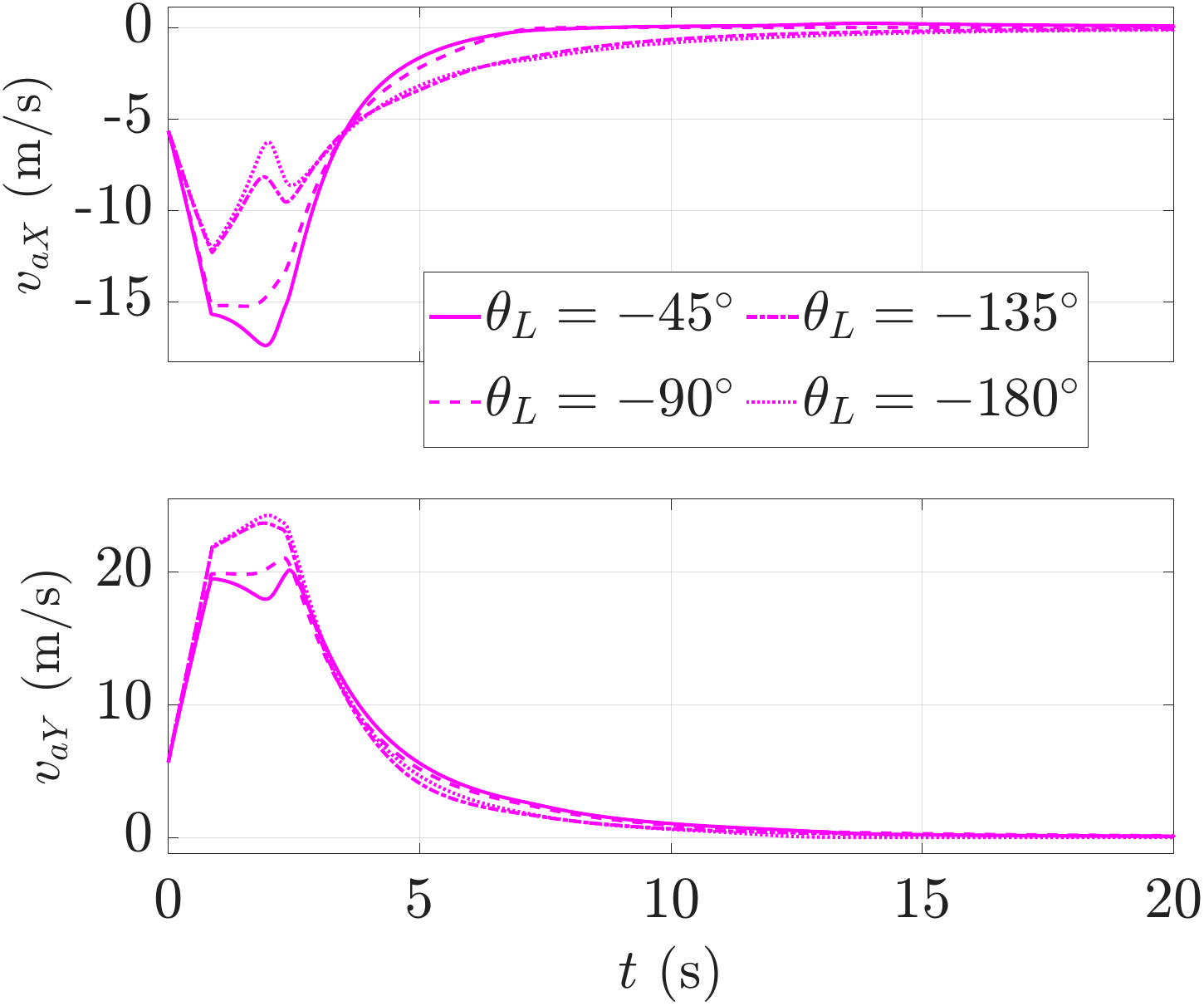}
		\caption{Velocity components of the UAV A.}
		\label{fig:stat_va}
	\end{subfigure}%
	\begin{subfigure}{0.33\linewidth}
		\centering
		\includegraphics[width=\linewidth]{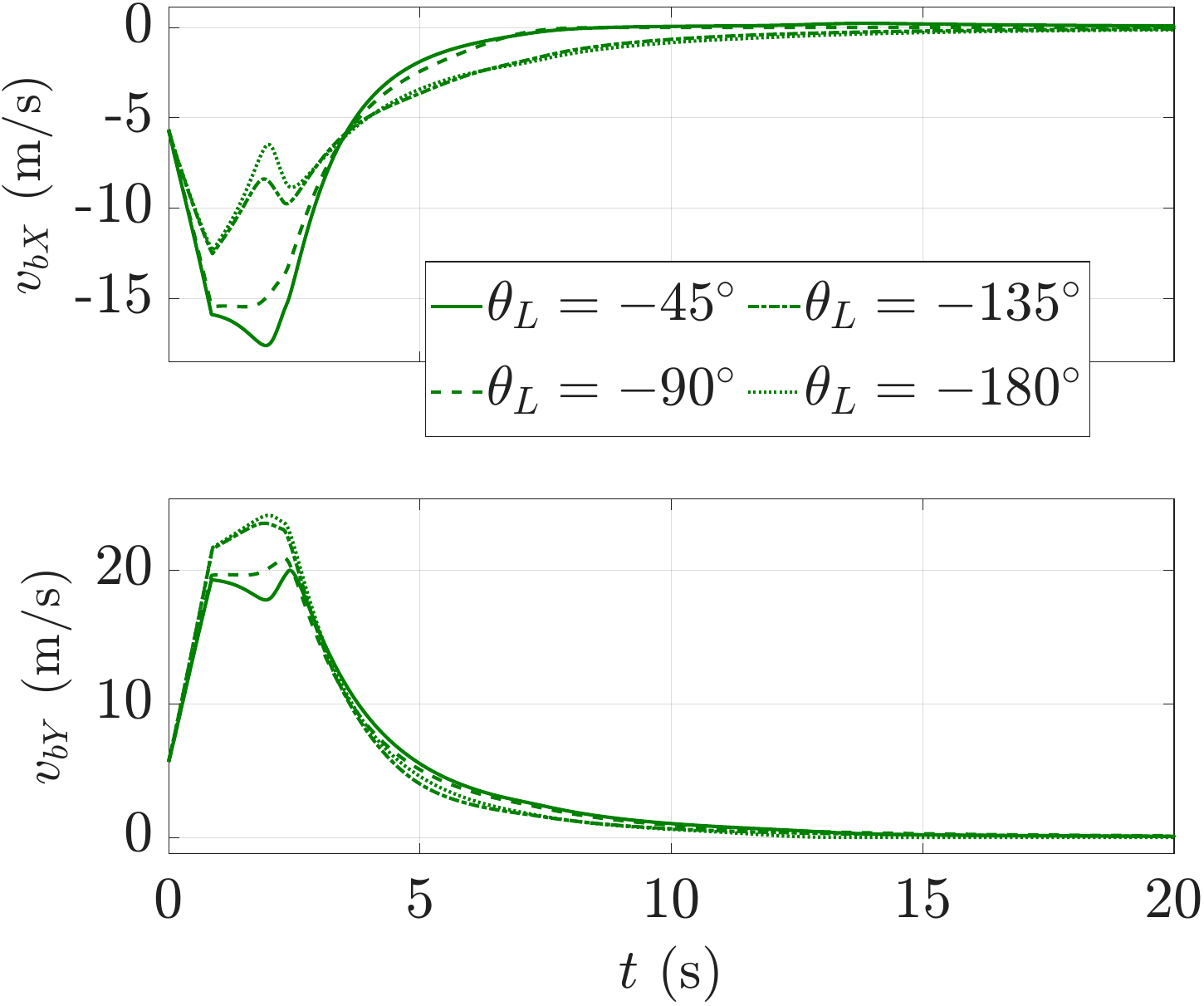}
		\caption{Velocity components of the UAV B.}
		\label{fig:stat_vb}
	\end{subfigure}
	\begin{subfigure}{0.33\linewidth}
		\centering
		\includegraphics[width=\linewidth]{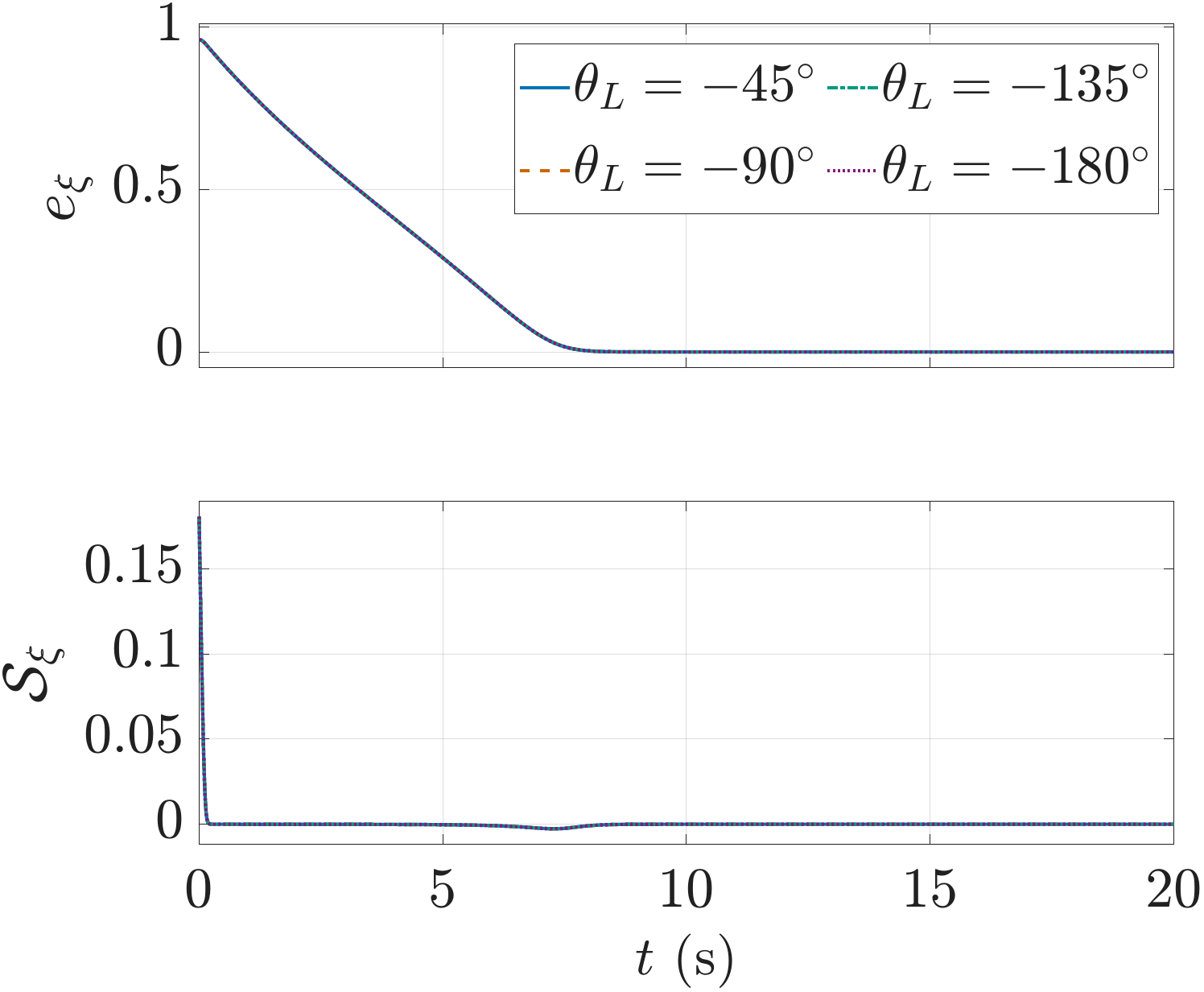}
		\caption{Error and sliding surface of the link.}
		\label{fig:Stat_Link_Sliding}
	\end{subfigure}%
	\begin{subfigure}{0.33\linewidth}
		\centering
		\includegraphics[width=\linewidth]{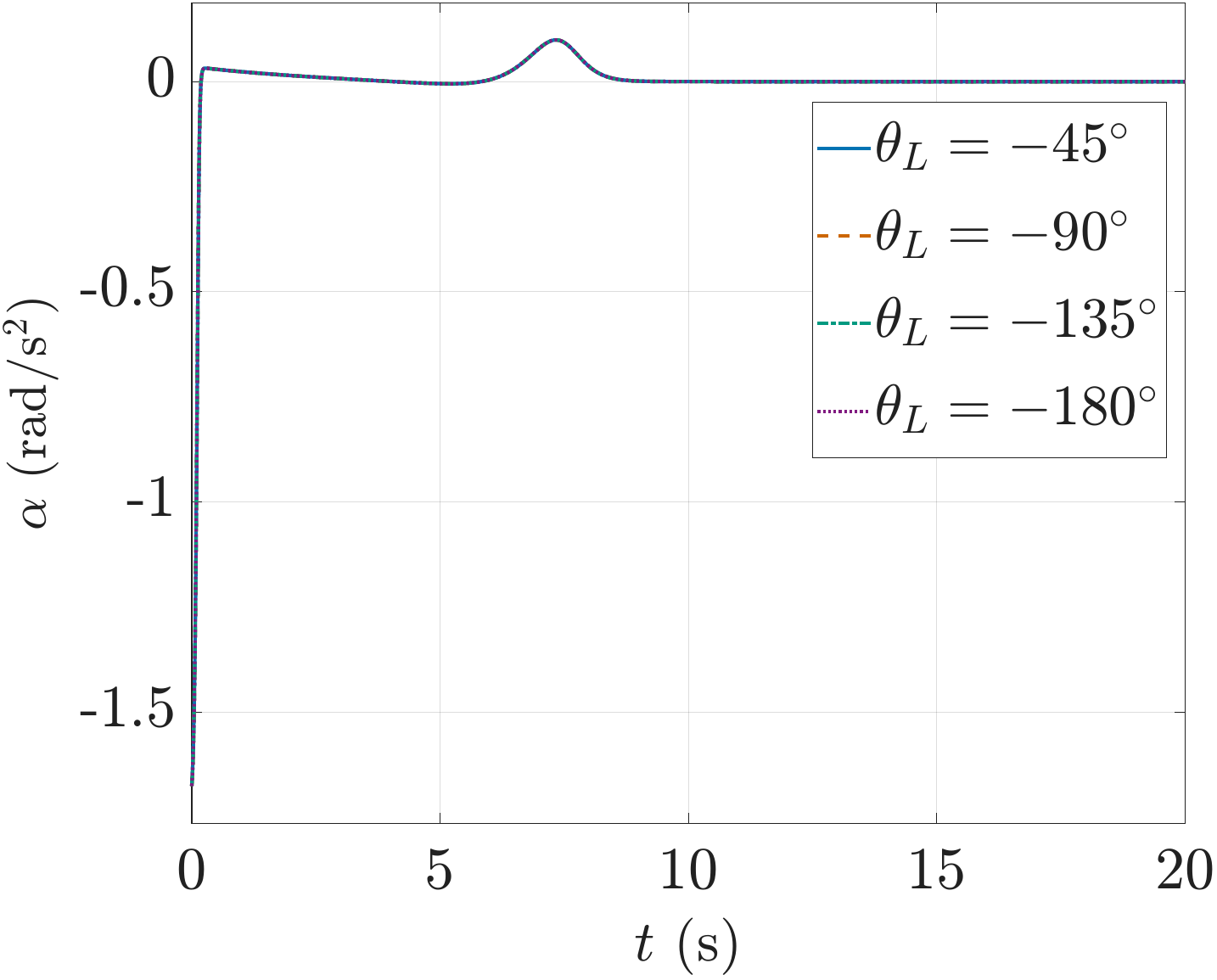}
		\caption{Control input to the link.}
		\label{fig:Stat_Link_Input}
	\end{subfigure}%
	\begin{subfigure}{0.33\linewidth}
		\centering
		\includegraphics[width=\linewidth]{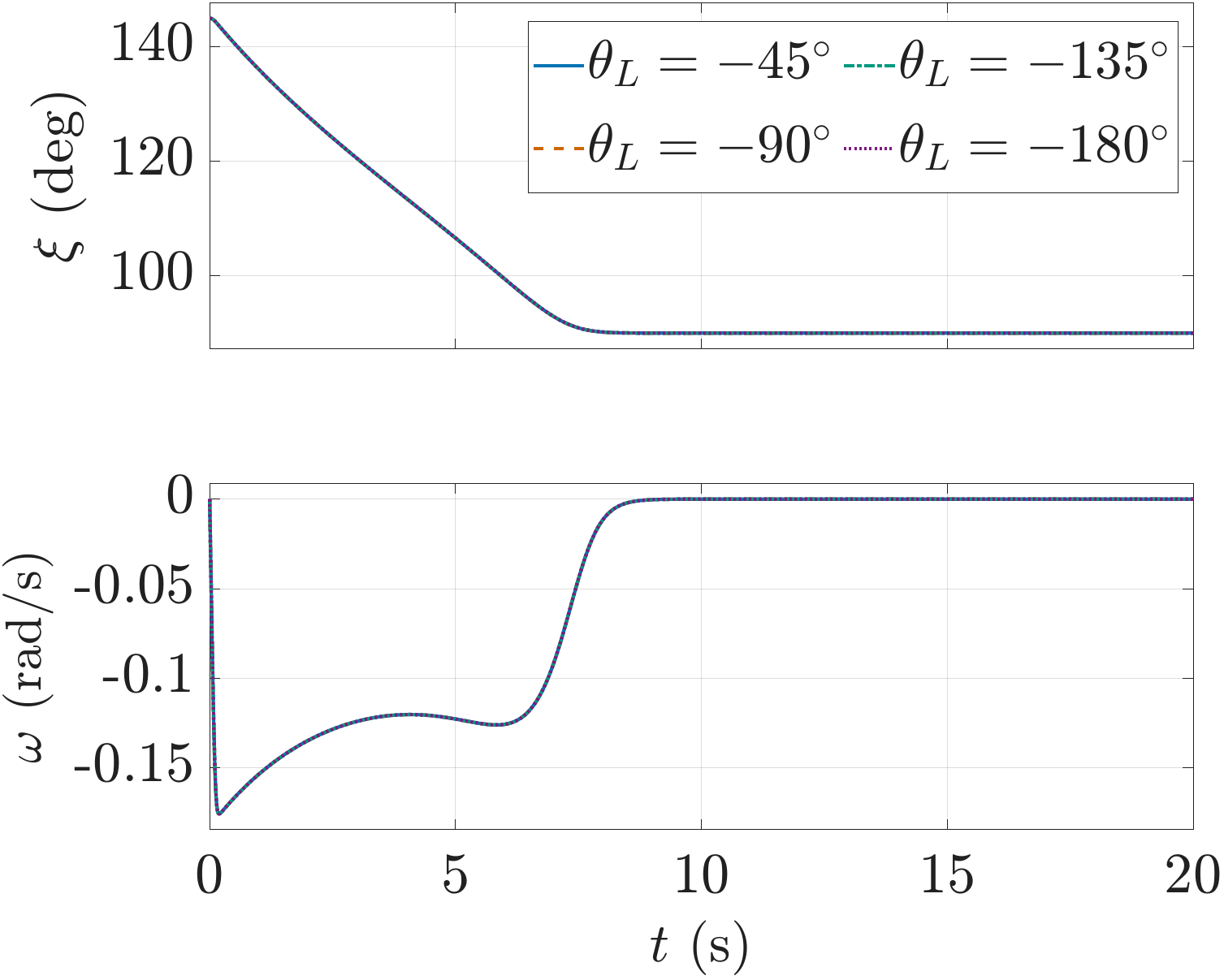}
		\caption{Link states.}
		\label{fig:Stat_Link_States}
	\end{subfigure}
	\caption{UAV delivering payload on a stationary platform.}
	\label{fig:stat}
\end{figure}

To further assess the robustness of the proposed guidance strategy to variations in the initial conditions and the prescribed landing angle, we conduct two Monte Carlo simulation studies, each consisting of 500 runs. In the first case, the initial position and heading of the UAV system are kept fixed, while the desired landing angle is uniformly varied over $\theta_{L}\in[-180^\circ,180^\circ]$. In the second case, the landing angle is fixed at $\theta_{L}=-90^\circ$, while the initial conditions are uniformly varied over $r(0)\in[40,160]$ m, $\theta(0)\in[-180^\circ,180^\circ]$, $\psi_u(0)\in[-180^\circ,180^\circ]$, and $v_u(0)\in[2,15]$ m/s. The remaining engagement parameters are kept identical to those used for the stationary case. For each case, the UAV trajectories and the terminal landing-angle error, range error, and link-orientation error are computed, as shown in \Cref{fig:mone_carlo_diff_landing_angle,fig:mone_carlo_diff_initial_condition}. It can be observed from \Cref{fig:case1_AllTrajectories,fig:case2_AllTrajectories} that the UAV successfully delivers the payload to the landing platform for all considered landing angles and initial configurations. Furthermore, one may notice from \Cref{fig:mone_carlo_diff_landing_angle,fig:mone_carlo_diff_initial_condition} that the landing-angle error, range error, and link-orientation error converge to zero at the time of landing for all the considered configurations.
\begin{figure}[!ht]
	\centering
	\begin{subfigure}{0.45\linewidth}
		\centering
		\includegraphics[width=\linewidth]{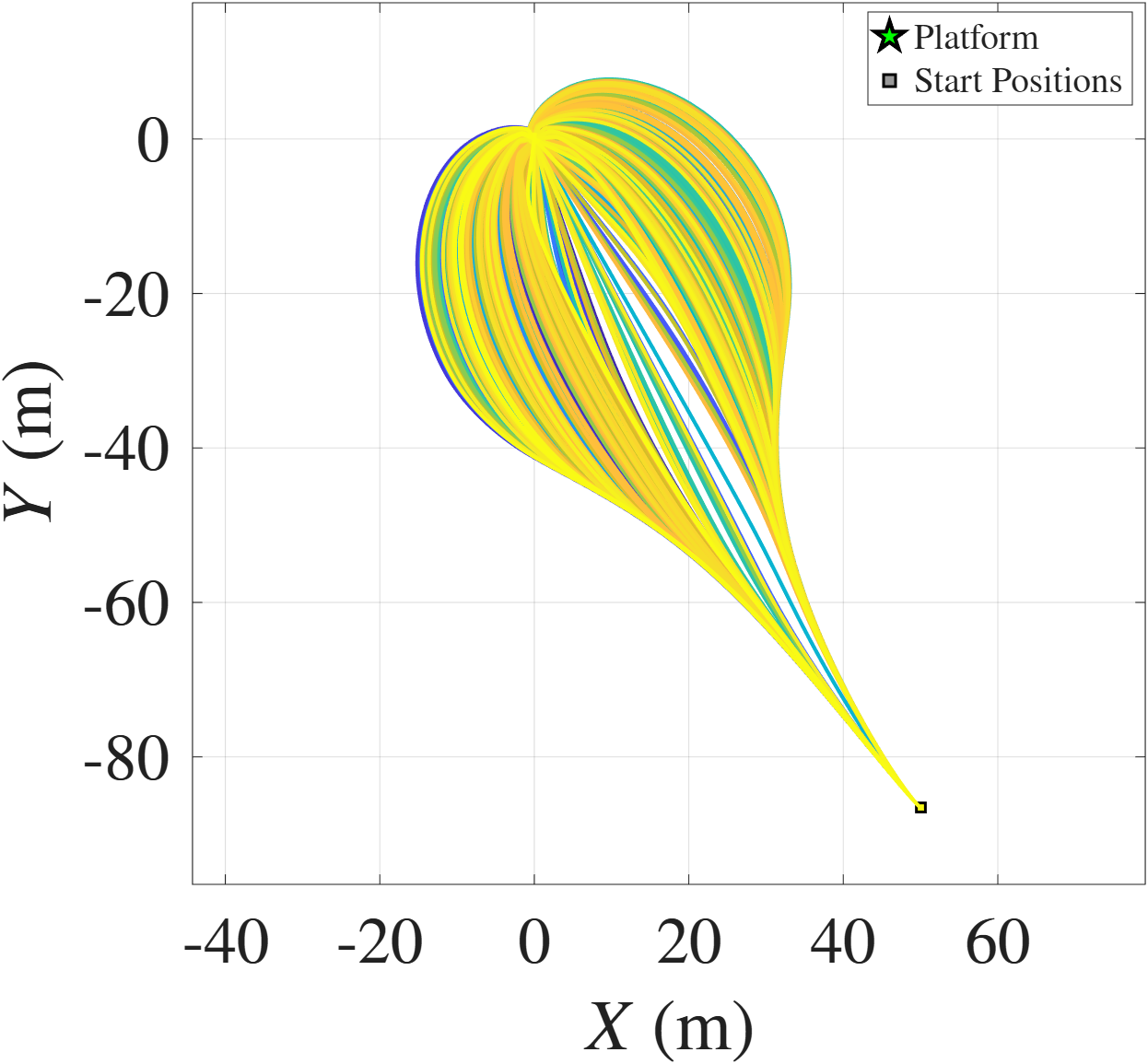}
		\caption{Trajectories.}
		\label{fig:case1_AllTrajectories}
	\end{subfigure}%
	\begin{subfigure}{0.45\linewidth}
		\centering
		\includegraphics[width=\linewidth]{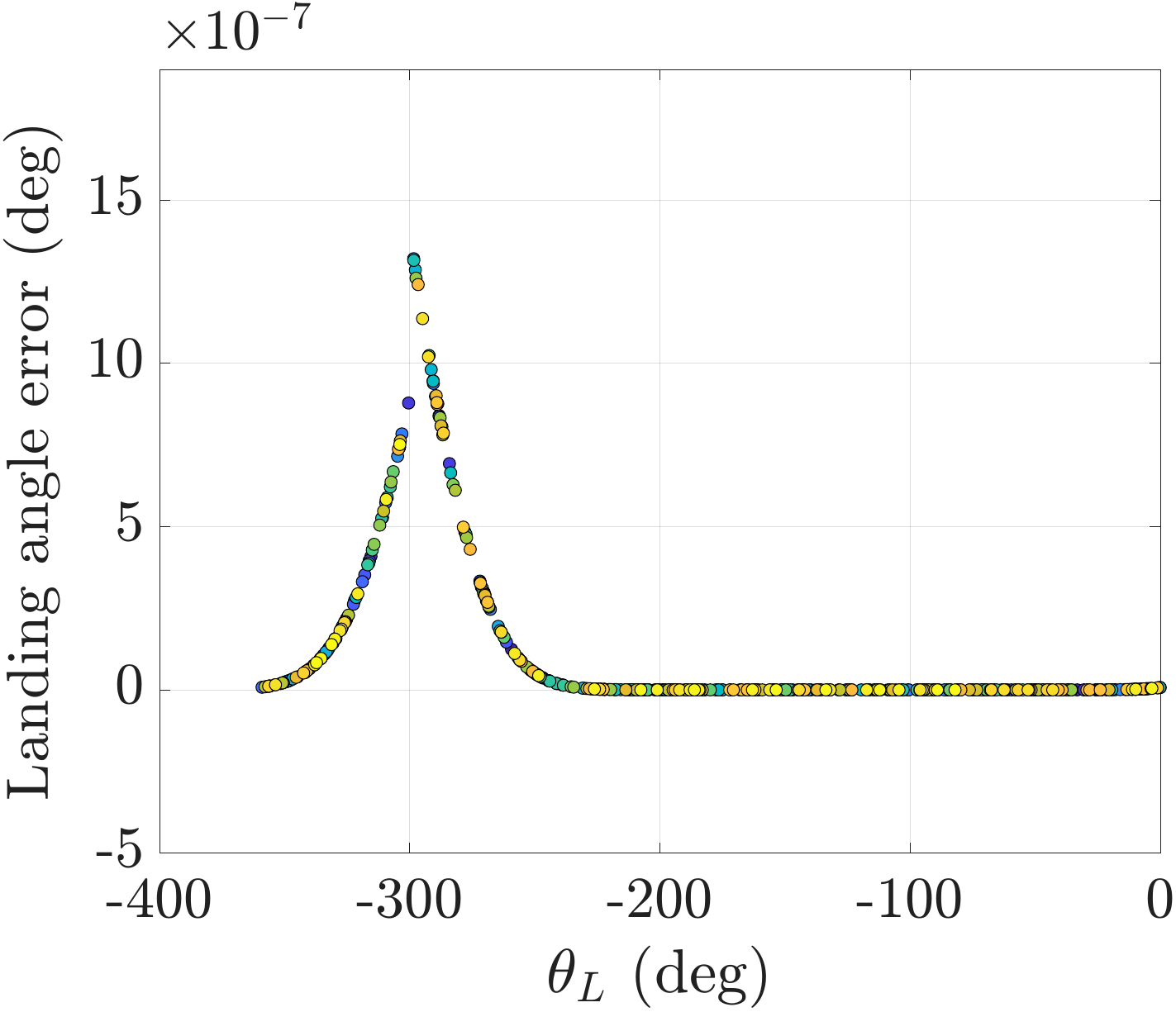}
		\caption{Landing angle error.}
		\label{fig:case1_LandingAngleError}
	\end{subfigure}
	\begin{subfigure}{0.45\linewidth}
		\centering
		\includegraphics[width=\linewidth]{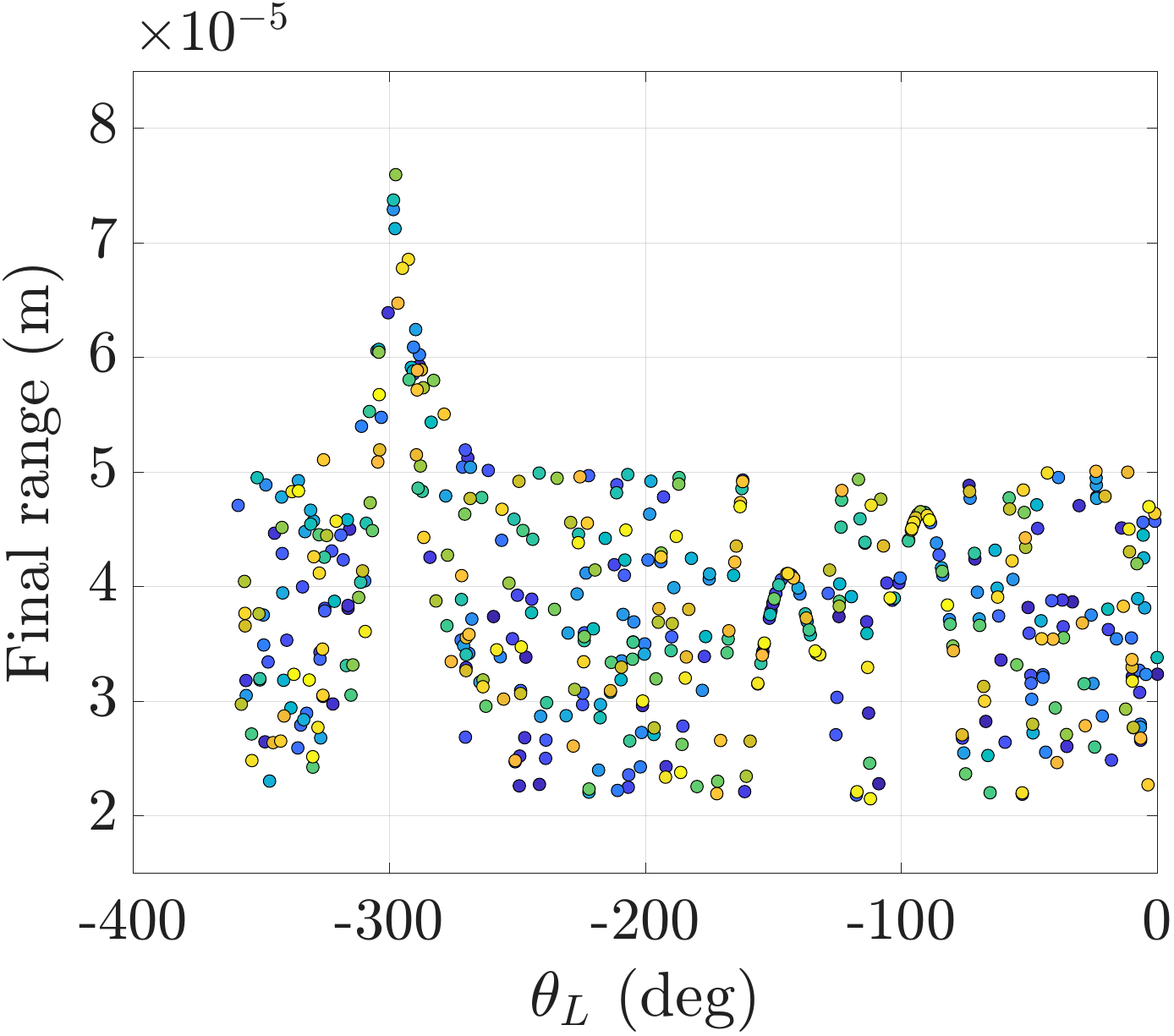}
		\caption{Final value of range.}
		\label{fig:case1_ClosestApproach}
	\end{subfigure}%
	\begin{subfigure}{0.45\linewidth}
		\centering
		\includegraphics[width=\linewidth]{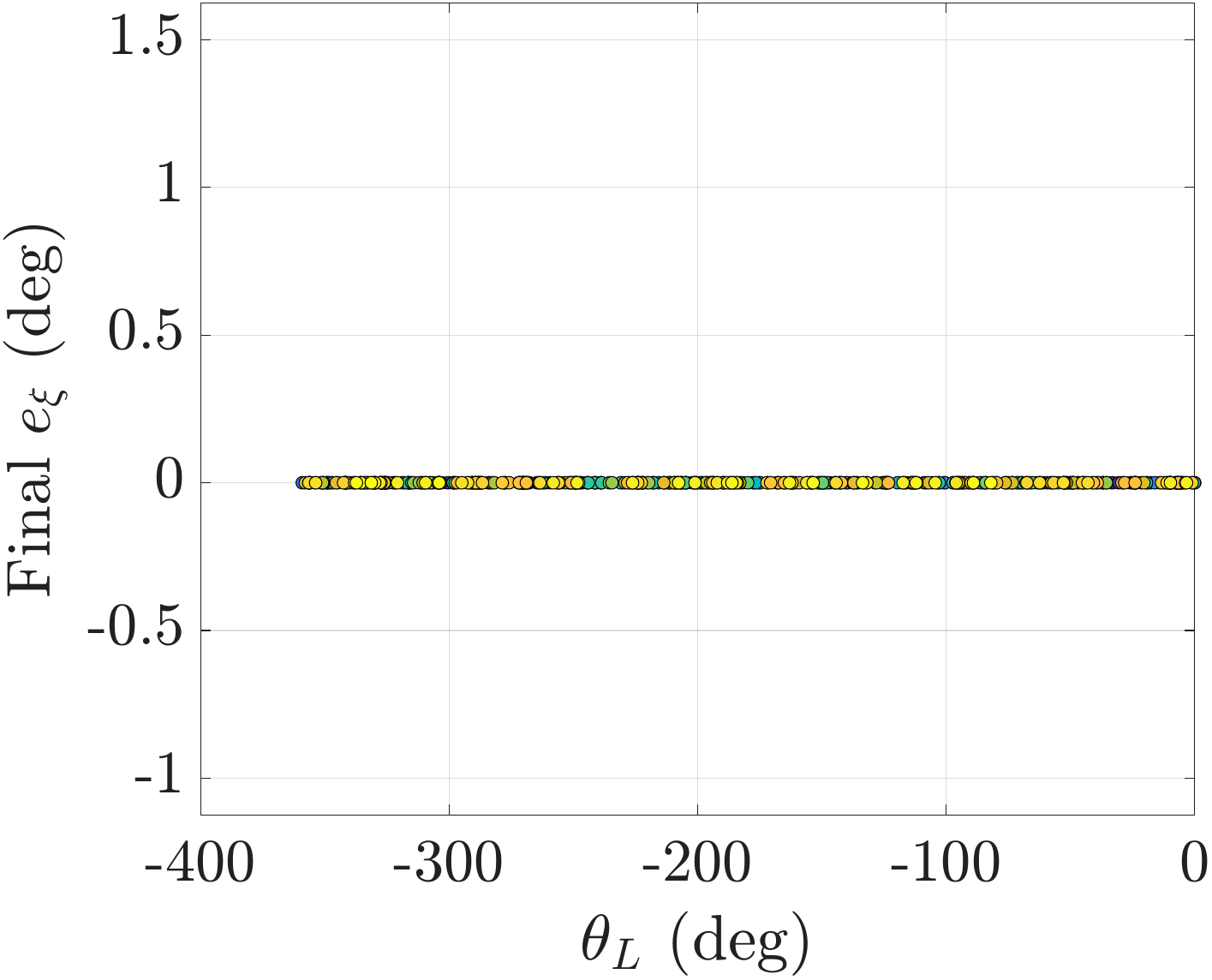}
		\caption{Final value of link orientation error.}
		\label{fig:case1_LinkOrientationError}
	\end{subfigure}
	\caption{Monte Carlo study: UAV delivering payload on a stationary platform with different landing angles.}
	\label{fig:mone_carlo_diff_landing_angle}
\end{figure}
\begin{figure}[!ht]
	\centering
	\begin{subfigure}{0.45\linewidth}
		\centering
		\includegraphics[width=\linewidth]{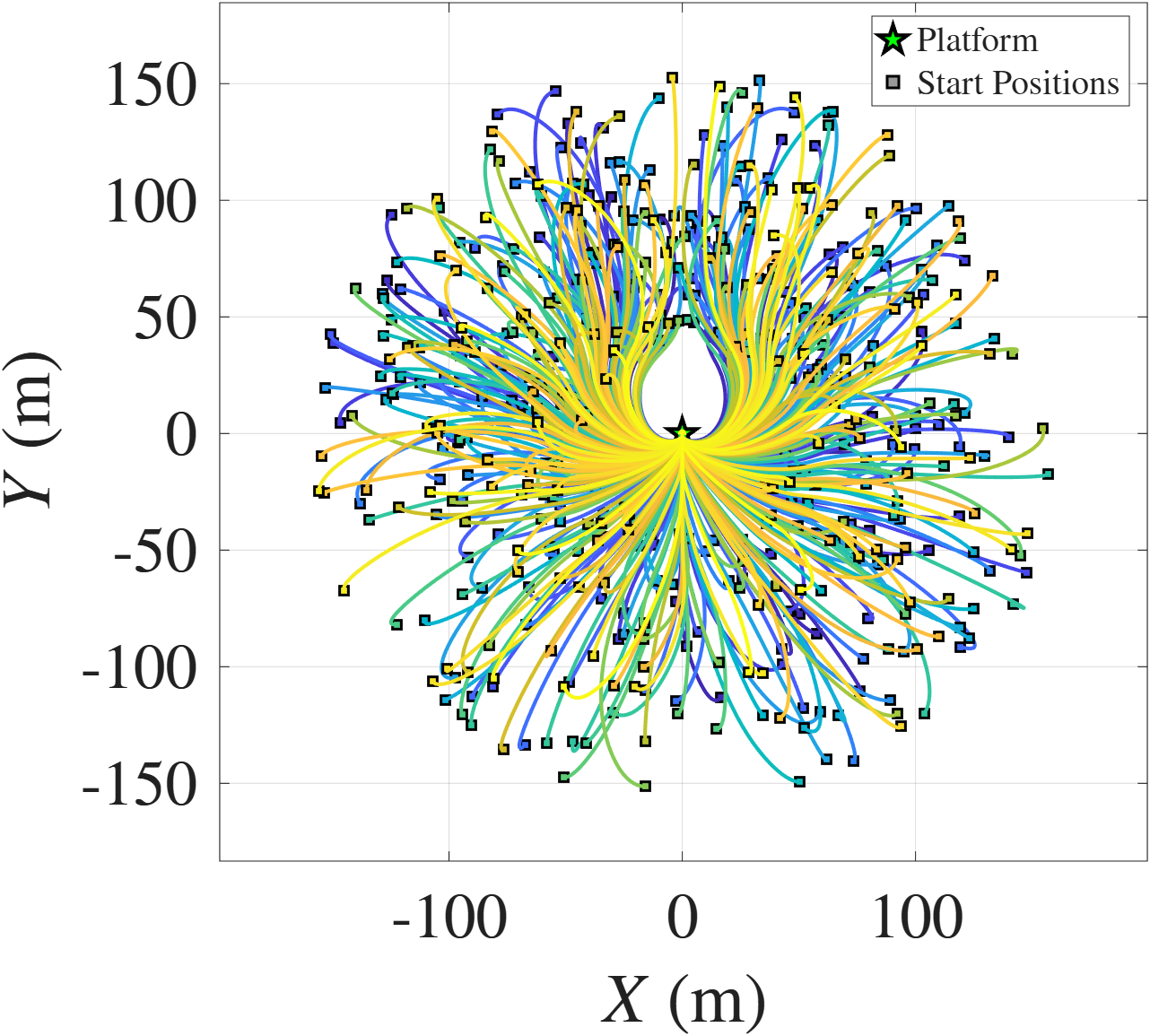}
		\caption{Trajectories.}
		\label{fig:case2_AllTrajectories}
	\end{subfigure}%
	\begin{subfigure}{0.45\linewidth}
		\centering
		\includegraphics[width=\linewidth]{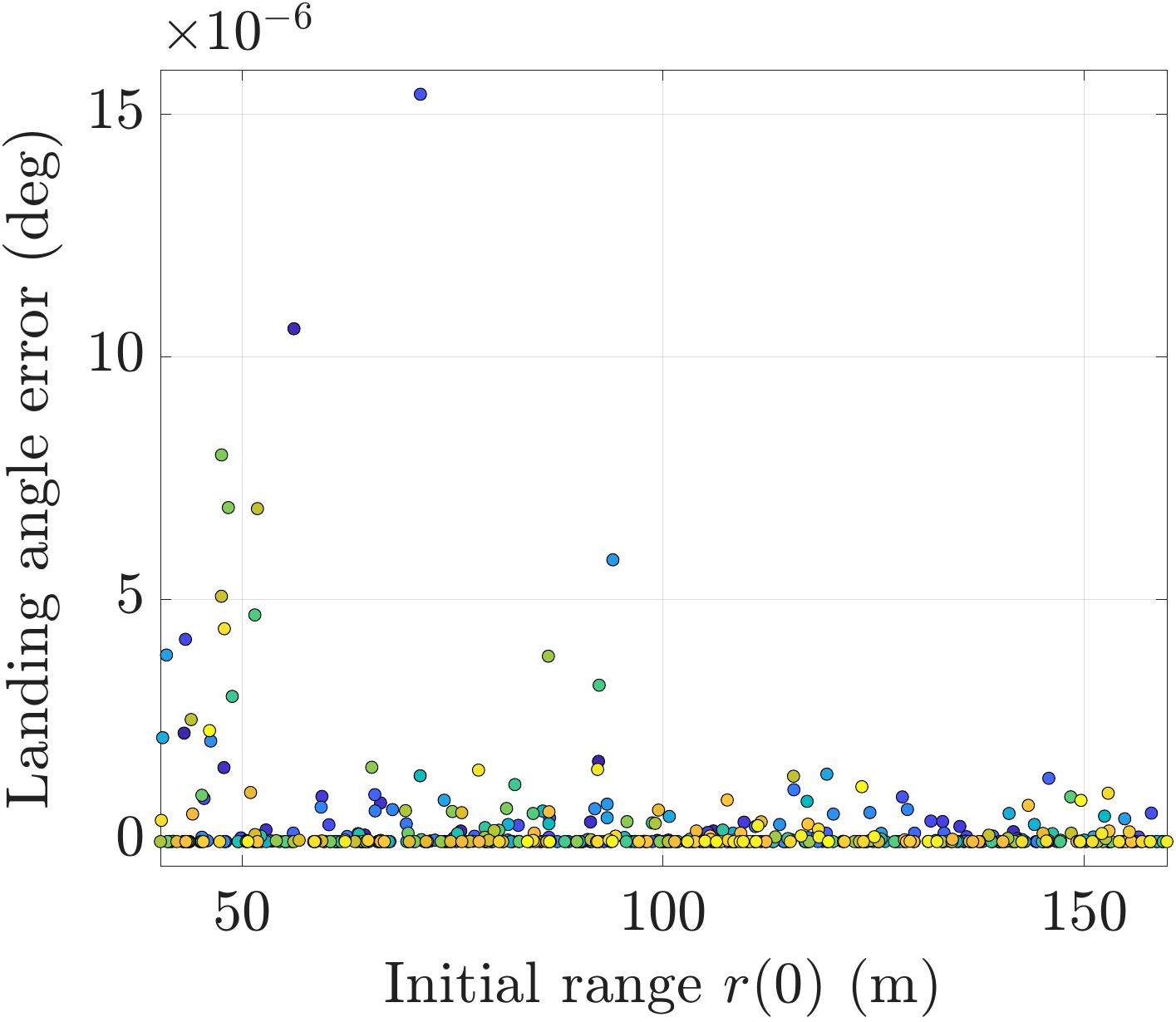}
		\caption{Landing angle error.}
		\label{fig:case2_LandingAngleError}
	\end{subfigure}
	\begin{subfigure}{0.45\linewidth}
		\centering
		\includegraphics[width=\linewidth]{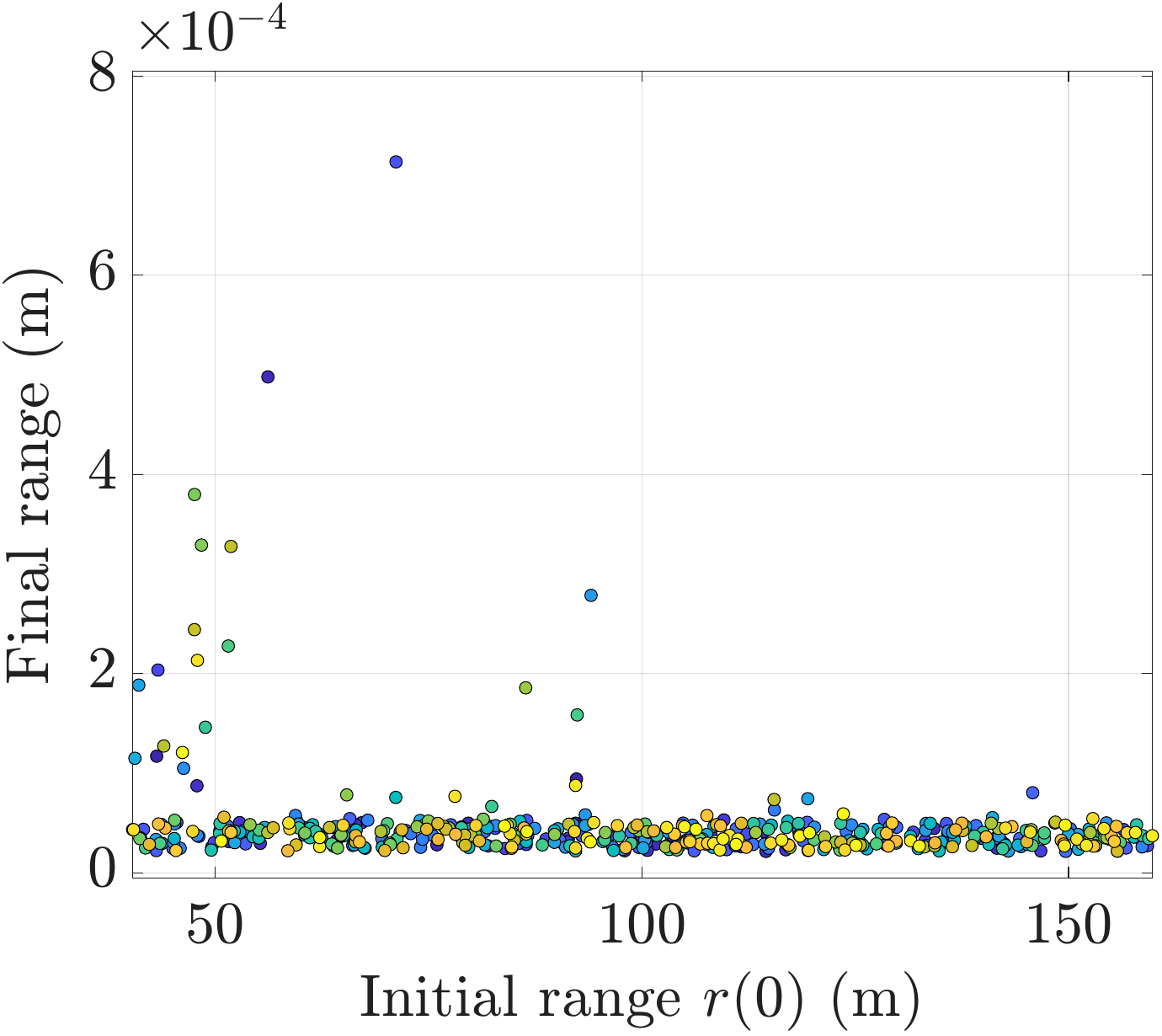}
		\caption{Final value of range.}
		\label{fig:case2_ClosestApproach}
	\end{subfigure}%
	\begin{subfigure}{0.45\linewidth}
		\centering
		\includegraphics[width=\linewidth]{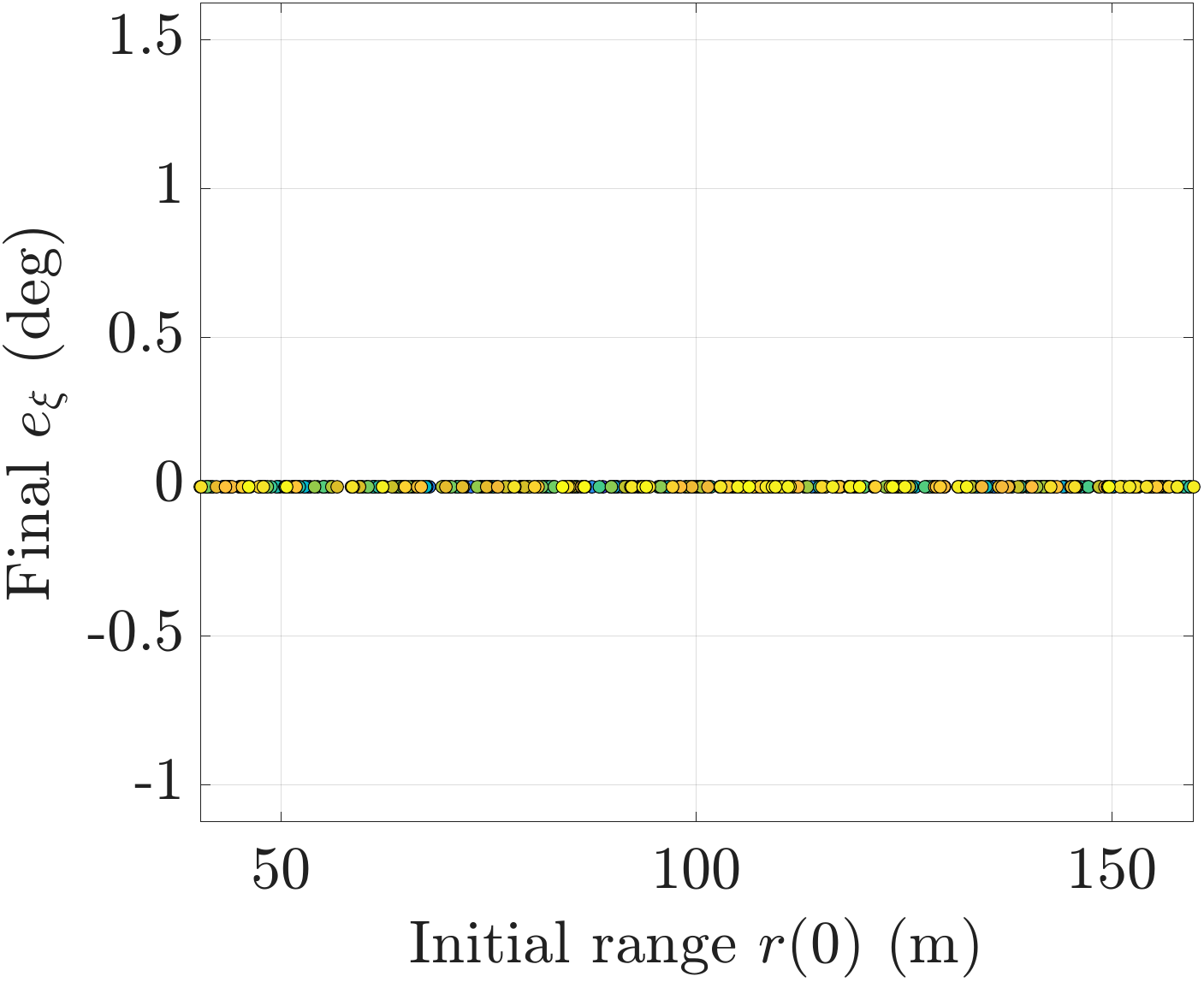}
		\caption{Final value of link orientation error.}
		\label{fig:case2_LinkOrientationError}
	\end{subfigure}
	\caption{Monte Carlo study: UAV delivering payload on a stationary platform from different initial configurations.}
	\label{fig:mone_carlo_diff_initial_condition}
\end{figure}
\subsection{Payload delivery on a maneuvering platform}
Next, we evaluate the performance of the proposed cooperative payload delivery strategy for a scenario where the landing platform is undergoing various active maneuvers. In this case, the platform is initially located at $(-30,100)\ \si{m}$ with an initial speed of $2\ \si{m/s}$ and a heading angle of $235^\circ$. On the other hand, the equivalent agent is initially positioned at $(5,5)\ \si{m}$, which is traveling at a speed of $5\ \si{m/s}$ with a heading angle of $60^\circ$. The initial orientation of the rigid link connecting the UAVs is set to $145^\circ$ with an initial angular velocity of $0\ \si{rad/s}$. As established theoretically, the only feasible landing angle for a maneuvering platform is $0^\circ$. Consequently, the desired landing angle is chosen as $0^\circ$. The value of design parameters $\mathcal{K}_{3}$ and $\mathcal{K}_{4}$ are chosen as $\mathcal{K}_{4} = 3$ and $\mathcal{K}_{6}=10$, respectively, while other settings remain identical to those used in the stationary case.

To test the robustness of the proposed strategy, the platform is subjected to a series of five distinct maneuvering profiles, denoted by M1--M5: M1 (Balanced maneuver): Lateral acceleration of $0.5\ \si{m/s^2}$, longitudinal acceleration of $0.5 \ \si{m/s^2}$. M2 (High longitudinal, low lateral): Lateral acceleration of $0.25 \ \si{m/s^2}$, longitudinal acceleration of $0.75 \ \si{m/s^2}$. M3 (High lateral, low longitudinal): Lateral acceleration of 0.75 \ \si{m/s^2}, longitudinal acceleration of $0.25 \ \si{m/s^2}$. M4 (Pure longitudinal): Lateral acceleration of $0.0 \ \si{m/s^2}$, longitudinal acceleration of $0.75 \ \si{m/s^2}$ (straight-line acceleration). M5 (Pure lateral): Lateral acceleration of $-0.5 \ \si{m/s^2}$, longitudinal acceleration of $0.0 \ \si{m/s^2}$ (pure turning maneuver). Note that these scenarios represent varying degrees of lateral and longitudinal accelerations to simulate complex platform turns.

Under the proposed guidance strategy (presented in \Cref{thm:man_amr,thm:man_amtheta}), we depict the performance through \Cref{fig:man}. \Cref{fig:man_Trajectory} demonstrates that the UAV system successfully delivers the payload for all cases, regardless of the maneuver of the platform. Similar to the stationary case, one can notice that $\mathcal{S}_{1}$  and $\mathcal{S}_{3} = \dot{\theta}$ converges to zero within a fixed-time. Consequently, the relative separation between the UAV system and the landing platform becomes zero, and the heading angle of the UAV converges with the heading angle of the platform to deliver the payload at the landing angle of zero degrees. Unlike the stationary case, once the UAV lands on the moving platform, it synchronizes its speed and heading with the platform's, as shown in \Cref{fig:man_Range_Heading,fig:man_Velocity}. As the link controller parameters, the initial configuration is kept the same as the stationary case, we observe a similar tracking performance.
\begin{figure}[!ht]
	\centering
	\begin{subfigure}{0.33\linewidth}
		\centering
		\includegraphics[width=\linewidth]{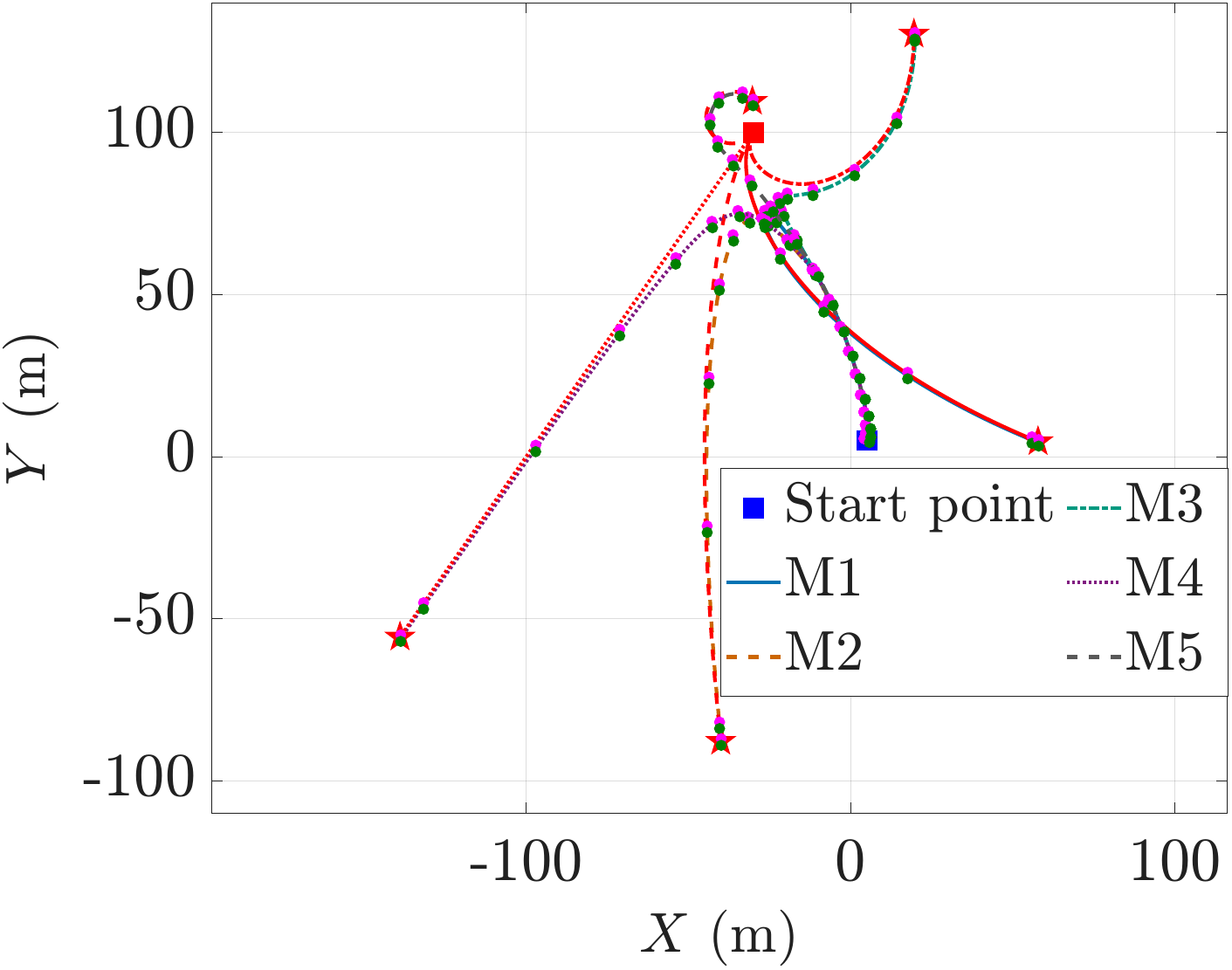}
		\caption{Trajectory.}
		\label{fig:man_Trajectory}
	\end{subfigure}%
	\begin{subfigure}{0.33\linewidth}
		\centering
		\includegraphics[width=\linewidth]{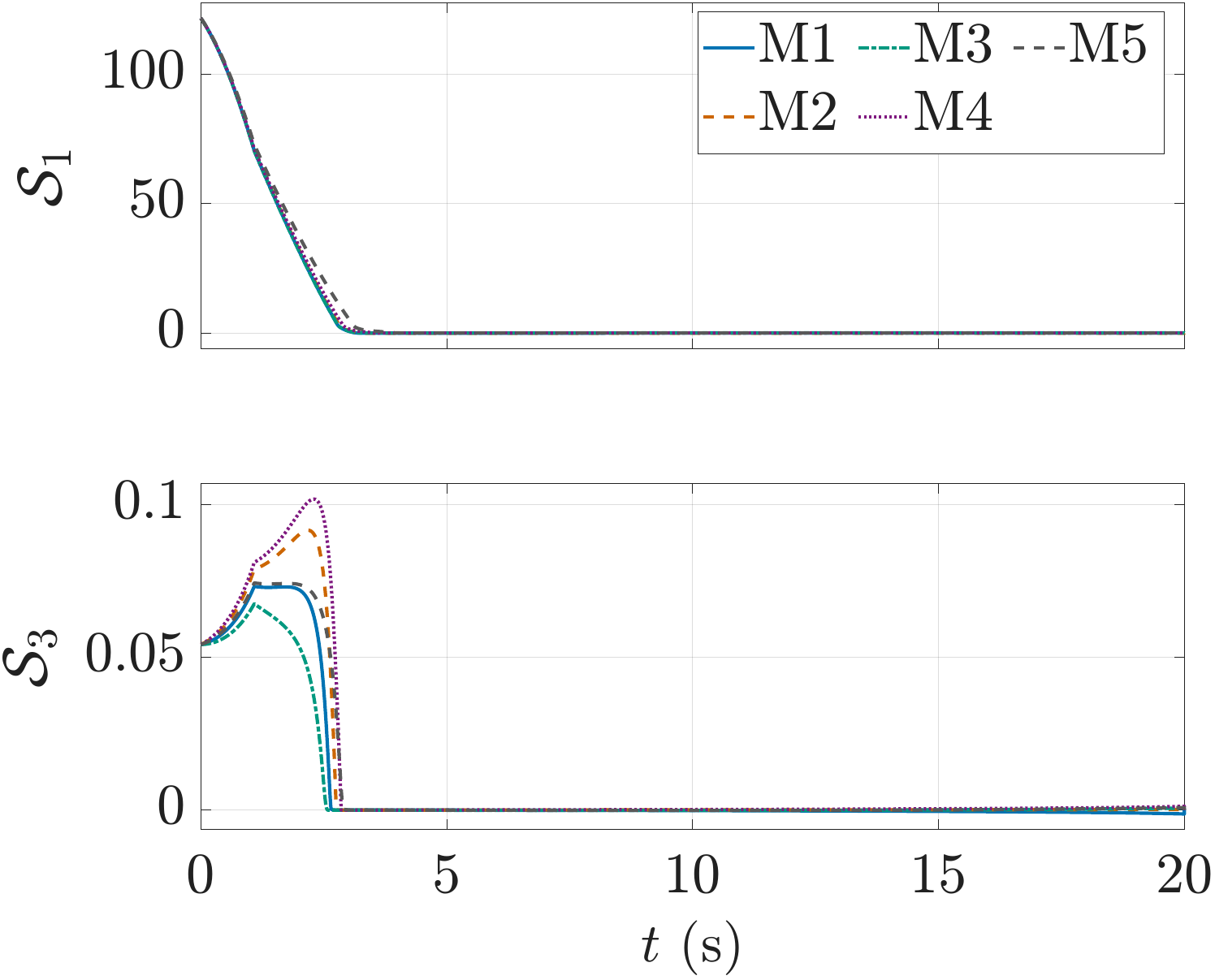}
		\caption{Sliding surfaces.}
		\label{fig:man_Sliding_Surfaces}
	\end{subfigure}%
	\begin{subfigure}{0.33\linewidth}
		\centering
		\includegraphics[width=\linewidth]{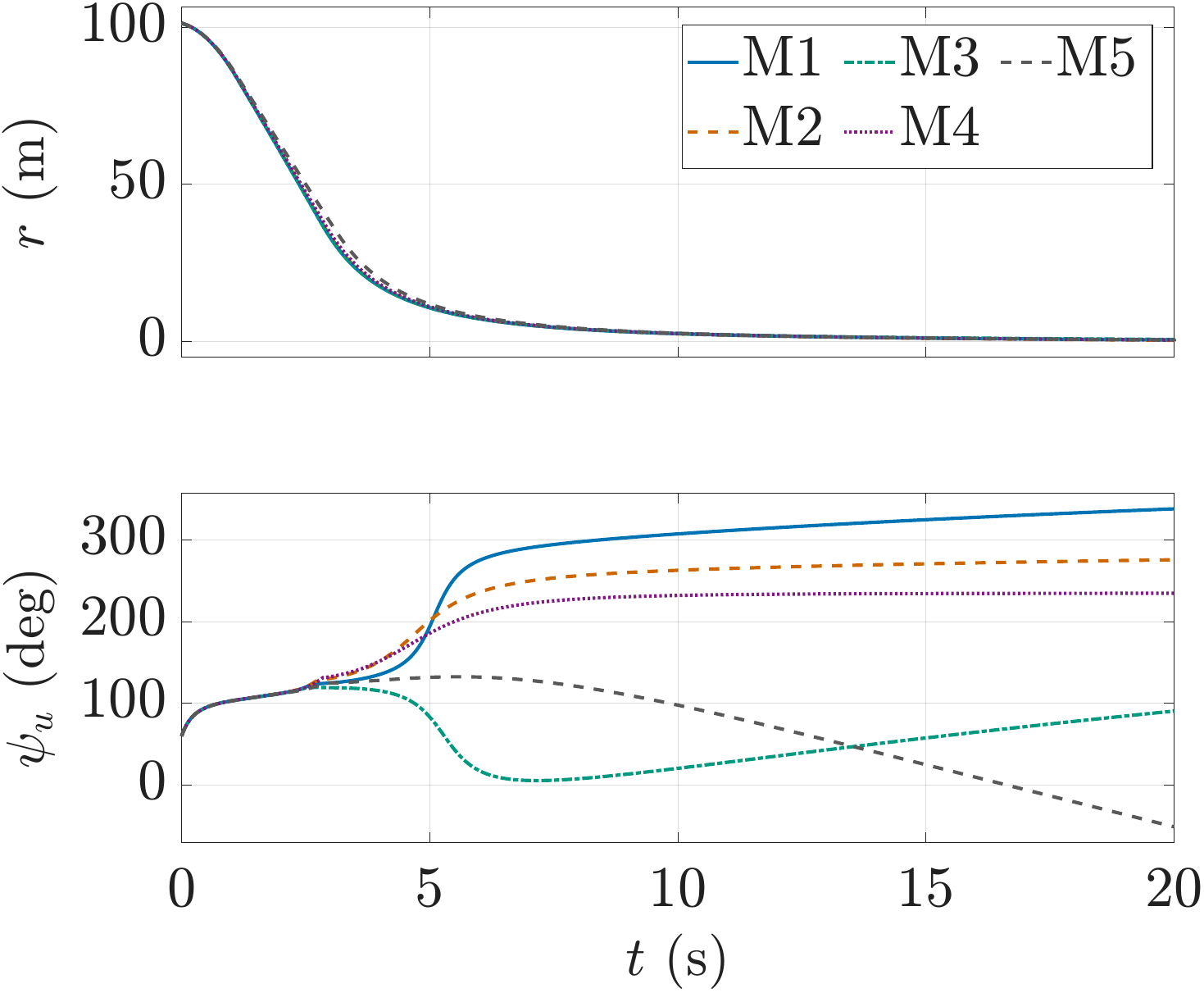}
		\caption{Relative range and heading angle.}
		\label{fig:man_Range_Heading}
	\end{subfigure}
	\begin{subfigure}{0.33\linewidth}
		\centering
		\includegraphics[width=\linewidth]{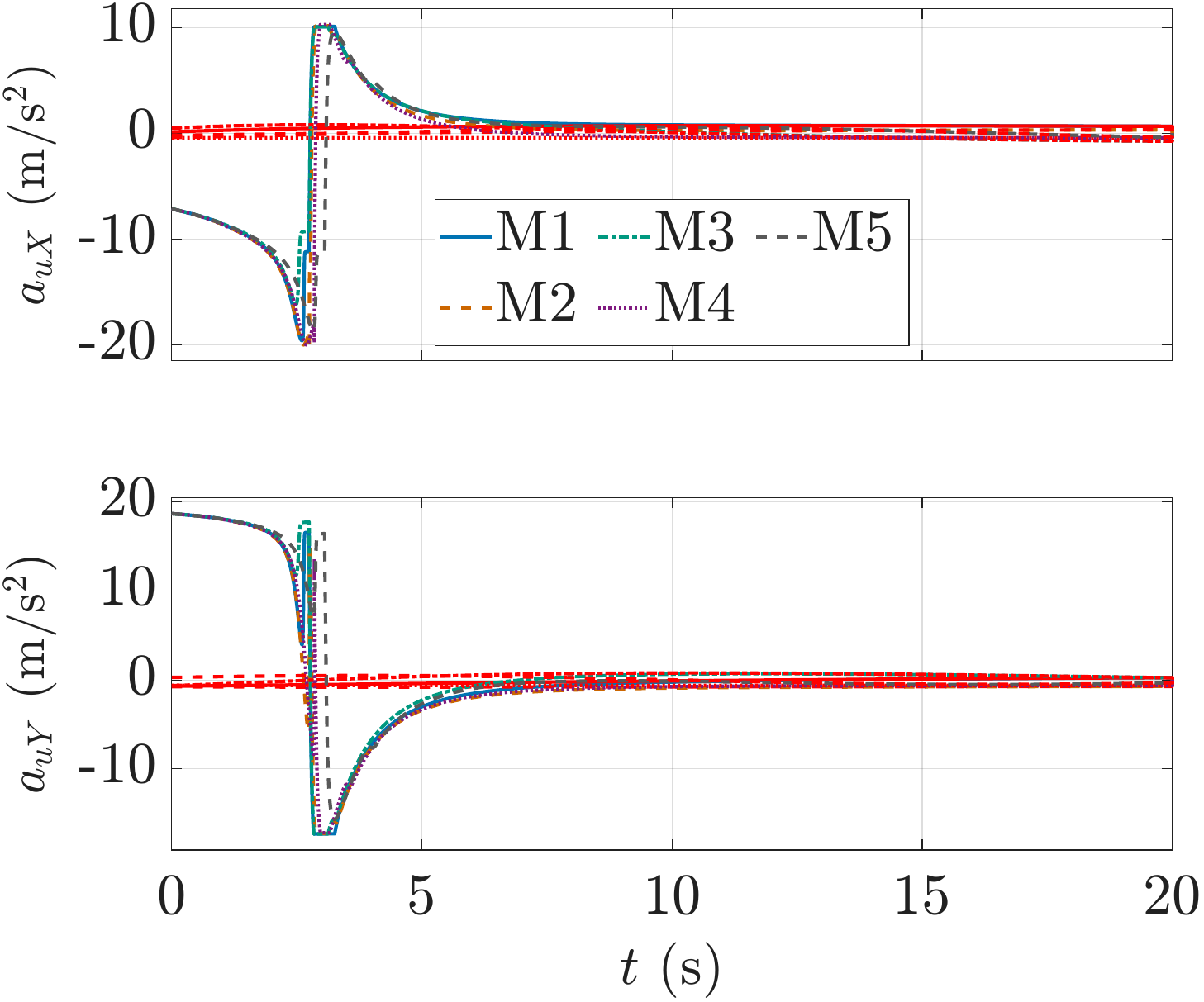}
		\caption{Control input to the equivalent agent U.}
		\label{fig:man_Acc_Virtual}
	\end{subfigure}%
	\begin{subfigure}{0.33\linewidth}
		\centering
		\includegraphics[width=\linewidth]{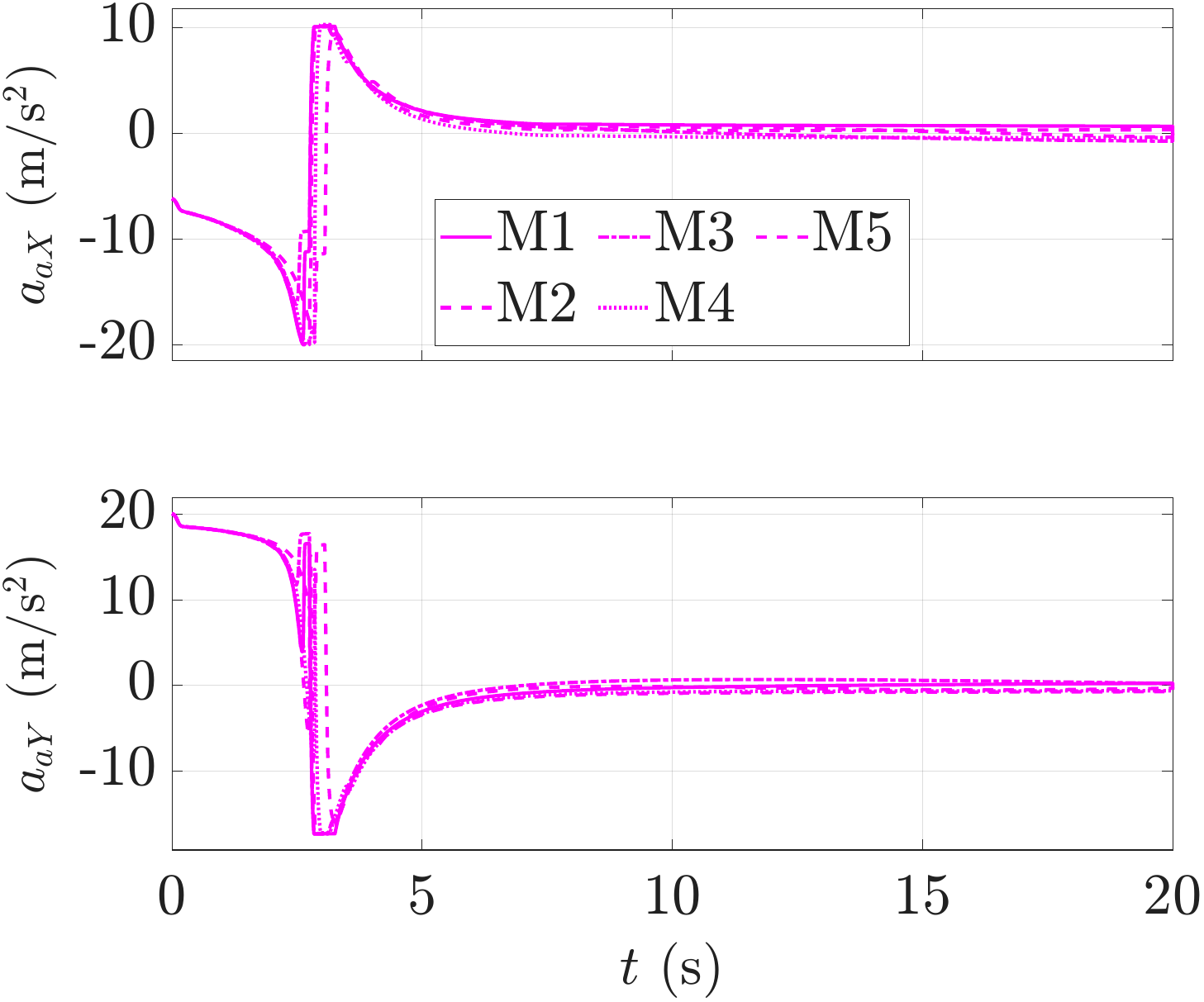}
		\caption{Control input to the UAV A.}
		\label{fig:man_Acc_UAV_A}
	\end{subfigure}%
	\begin{subfigure}{0.33\linewidth}
		\centering
		\includegraphics[width=\linewidth]{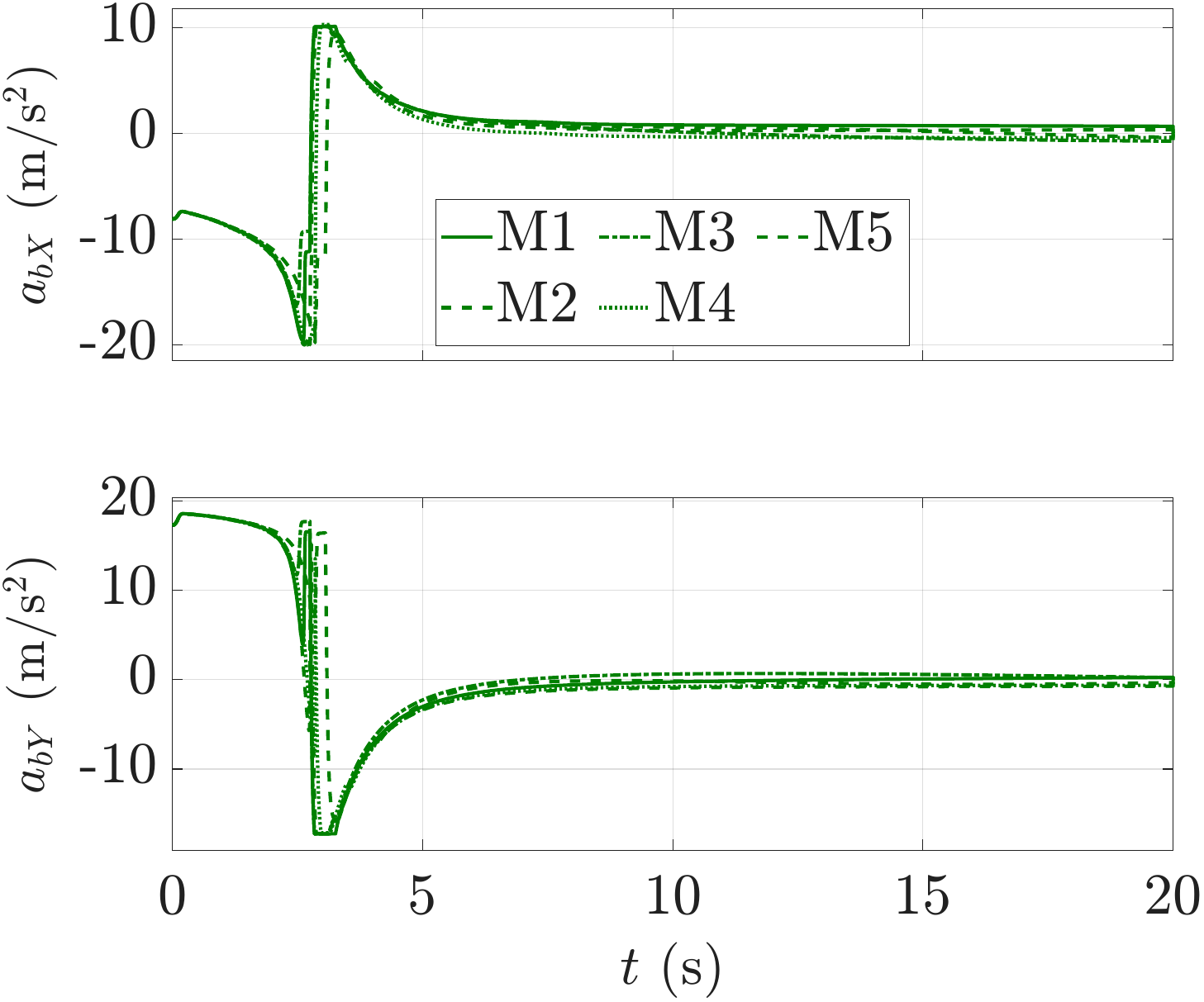}
		\caption{Control input to the UAV B.}
		\label{fig:man_Acc_UAV_B}
	\end{subfigure}
	\begin{subfigure}{0.33\linewidth}
		\centering
		\includegraphics[width=\linewidth]{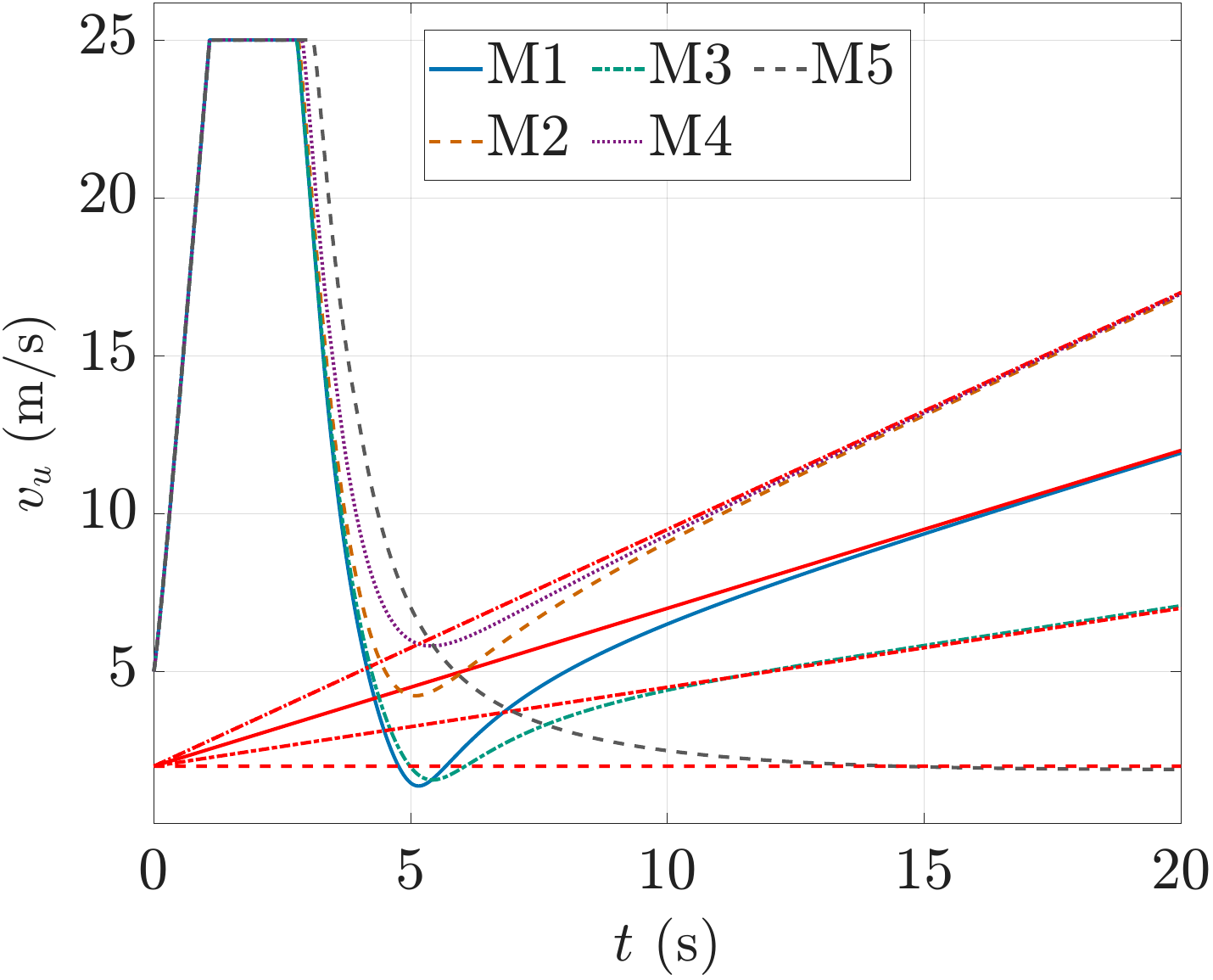}
		\caption{Speed of the equivalent agent U.}
		\label{fig:man_Velocity}
	\end{subfigure}%
	\begin{subfigure}{0.33\linewidth}
		\centering
		\includegraphics[width=\linewidth]{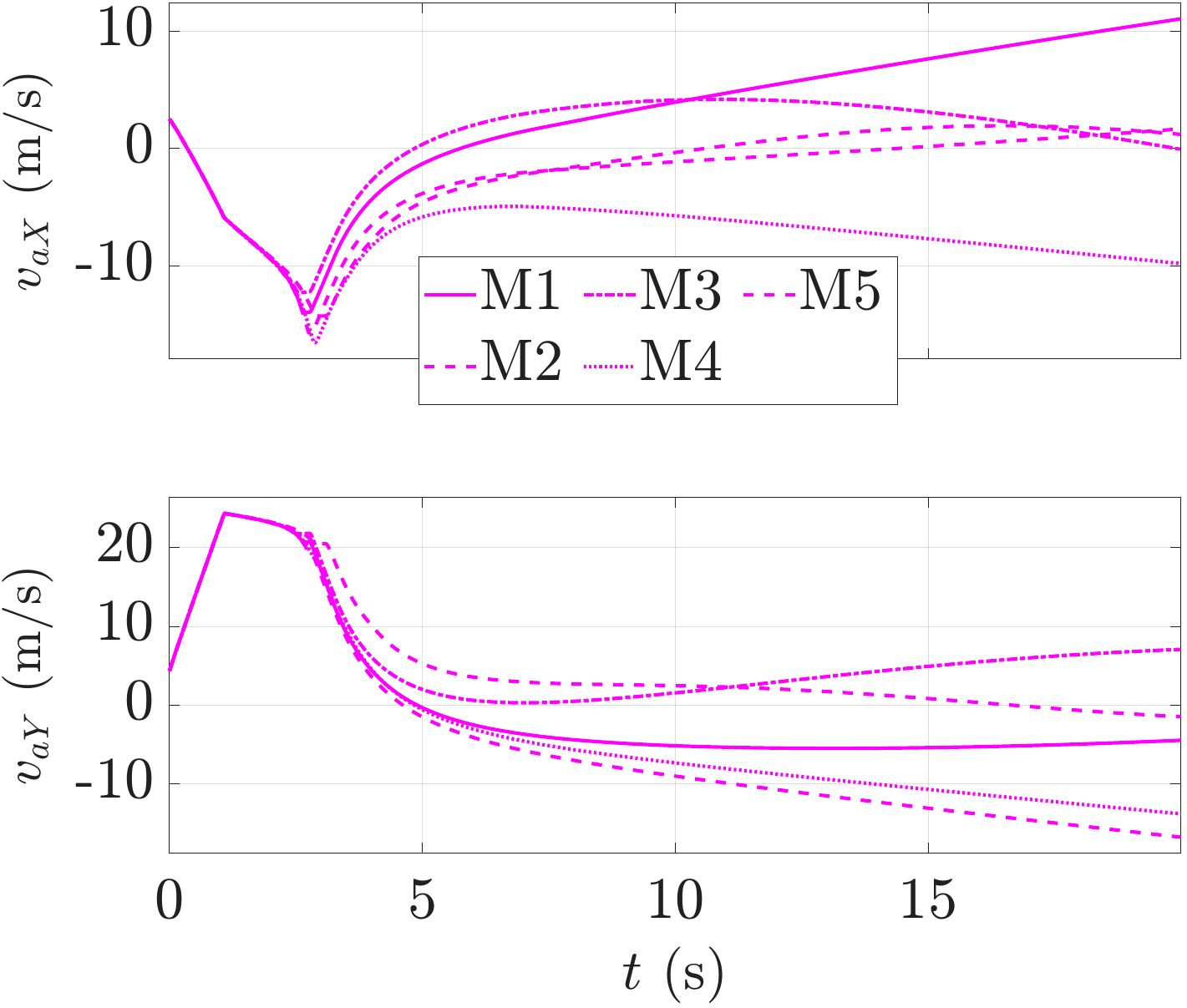}
		\caption{Velocity components of the UAV A.}
		\label{fig:man_Vel_UAV_A}
	\end{subfigure}%
	\begin{subfigure}{0.33\linewidth}
		\centering
		\includegraphics[width=\linewidth]{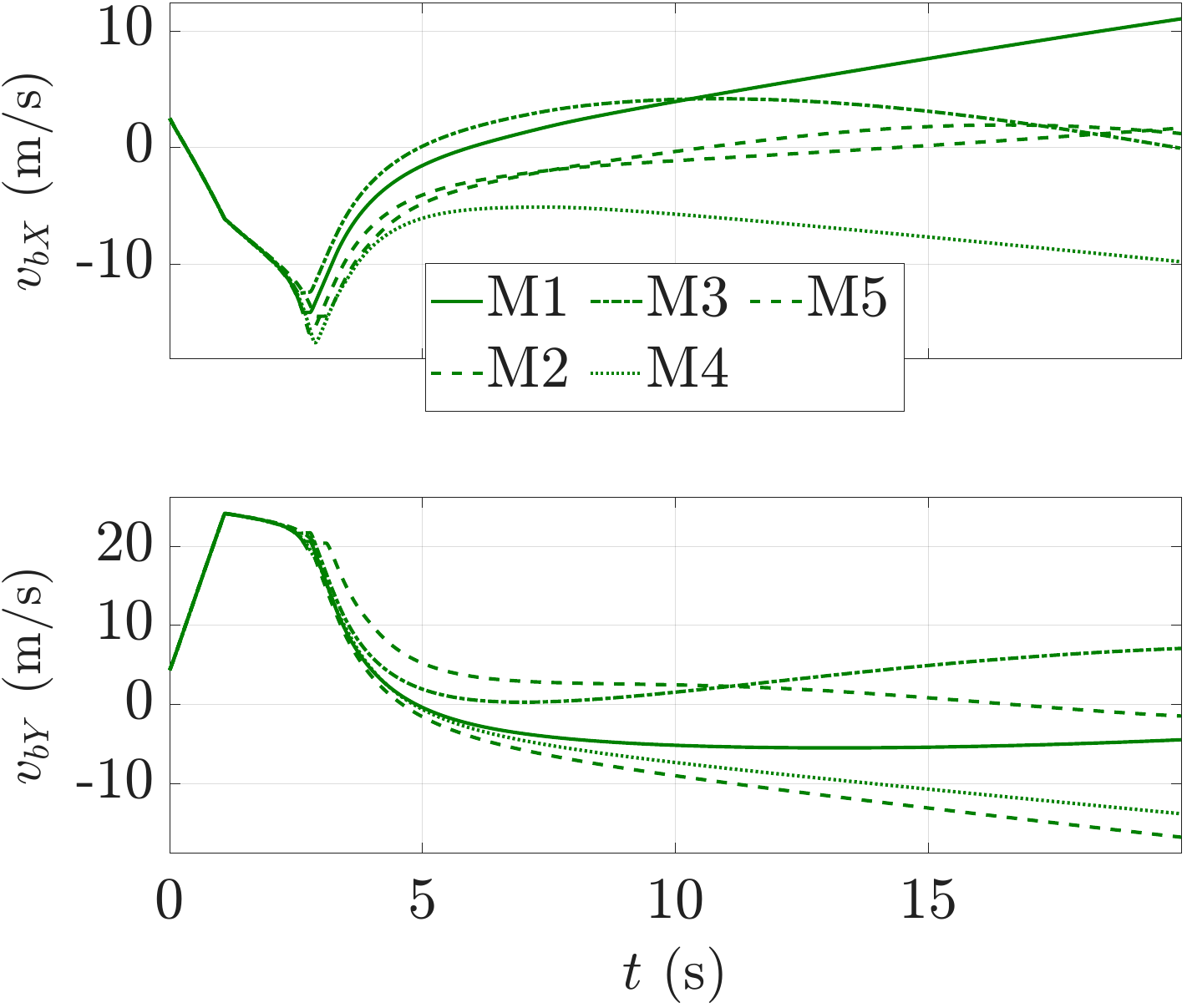}
		\caption{Velocity components of the UAV B.}
		\label{fig:man_Vel_UAV_B}
	\end{subfigure}
	\begin{subfigure}{0.33\linewidth}
		\centering
		\includegraphics[width=\linewidth]{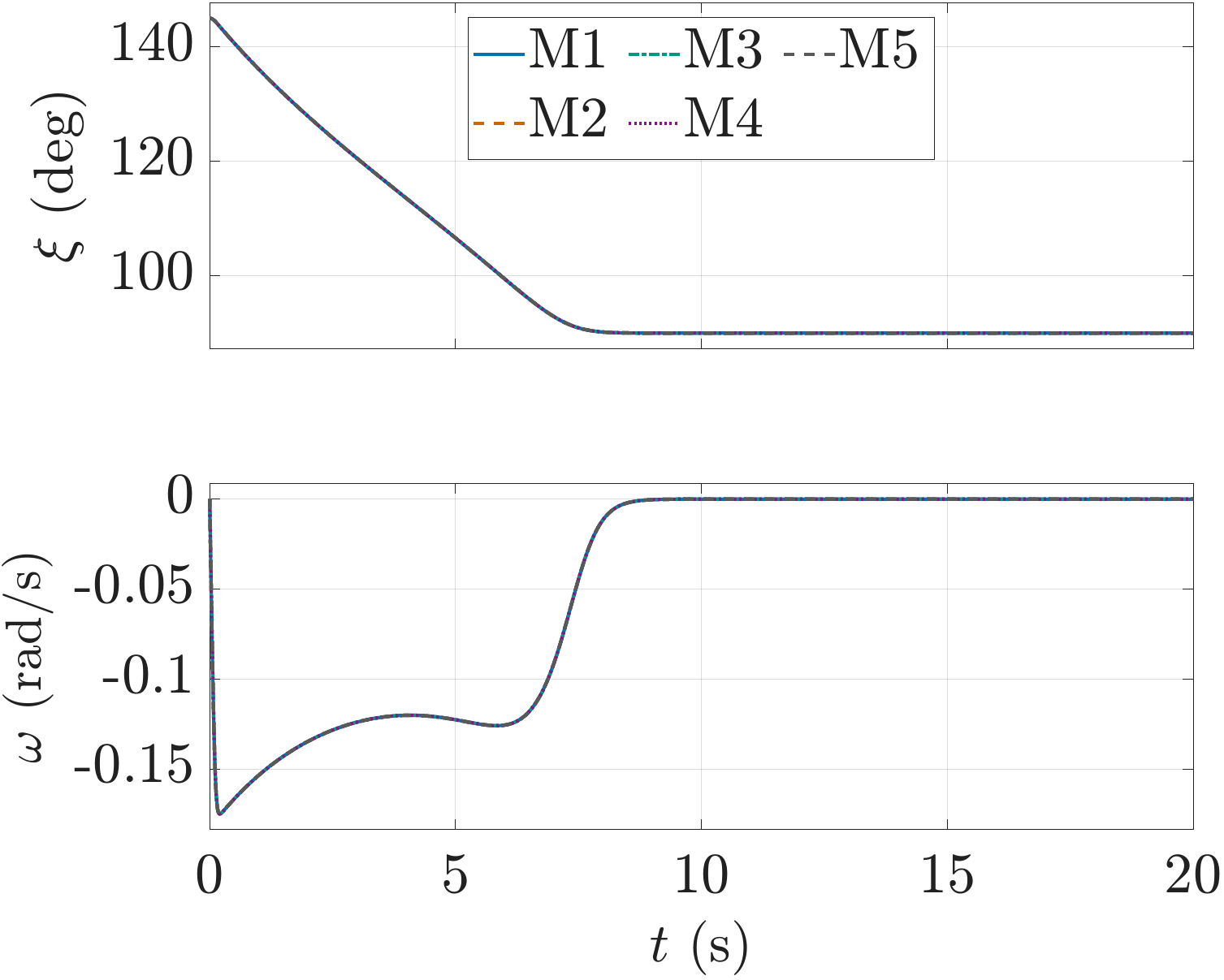}
		\caption{Link states.}
		\label{fig:man_Link_States}
	\end{subfigure}%
	\begin{subfigure}{0.33\linewidth}
		\centering
		\includegraphics[width=\linewidth]{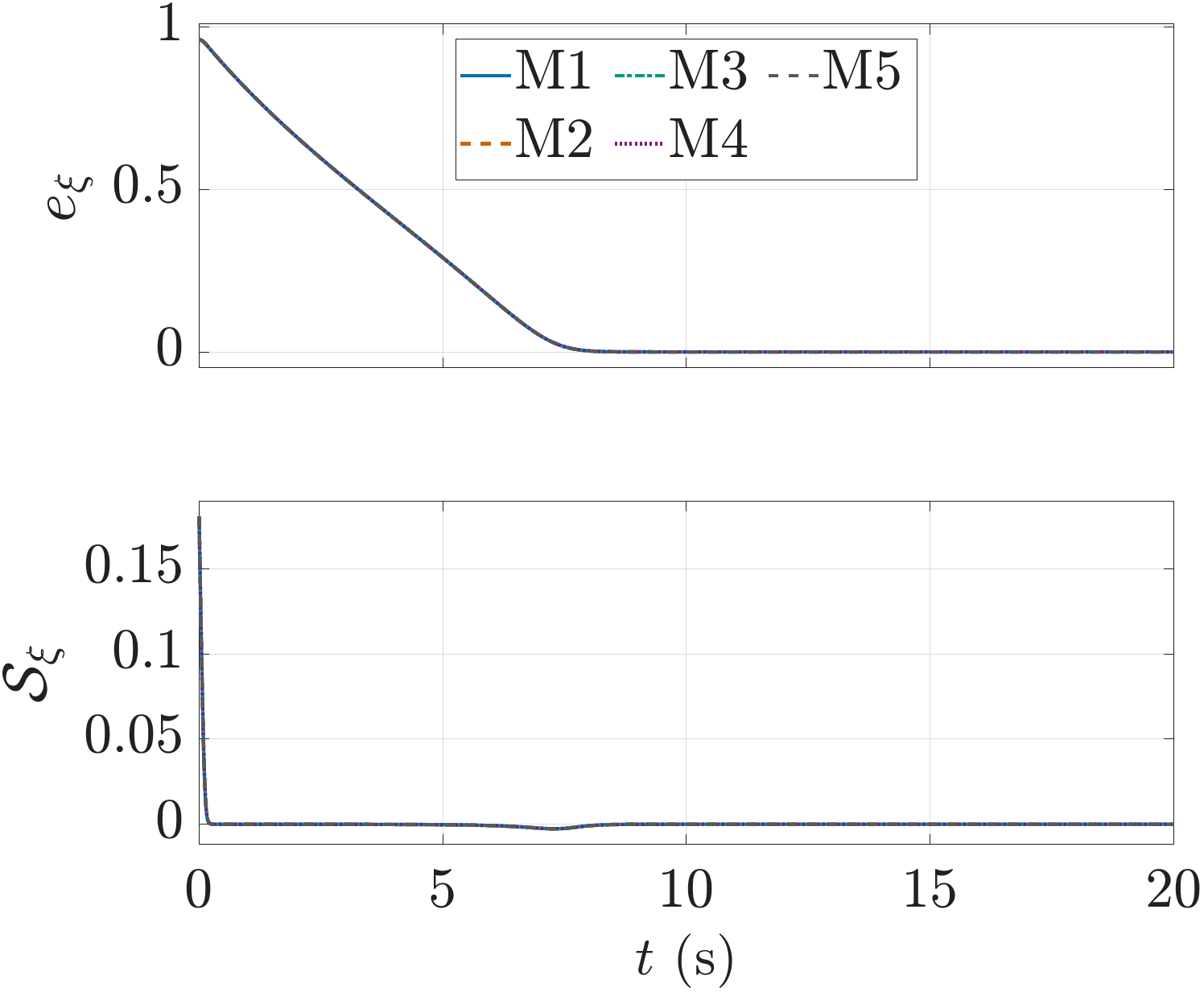}
		\caption{Error and sliding surface of the link.}
		\label{fig:man_Link_Sliding}
	\end{subfigure}%
	\begin{subfigure}{0.33\linewidth}
		\centering
		\includegraphics[width=\linewidth]{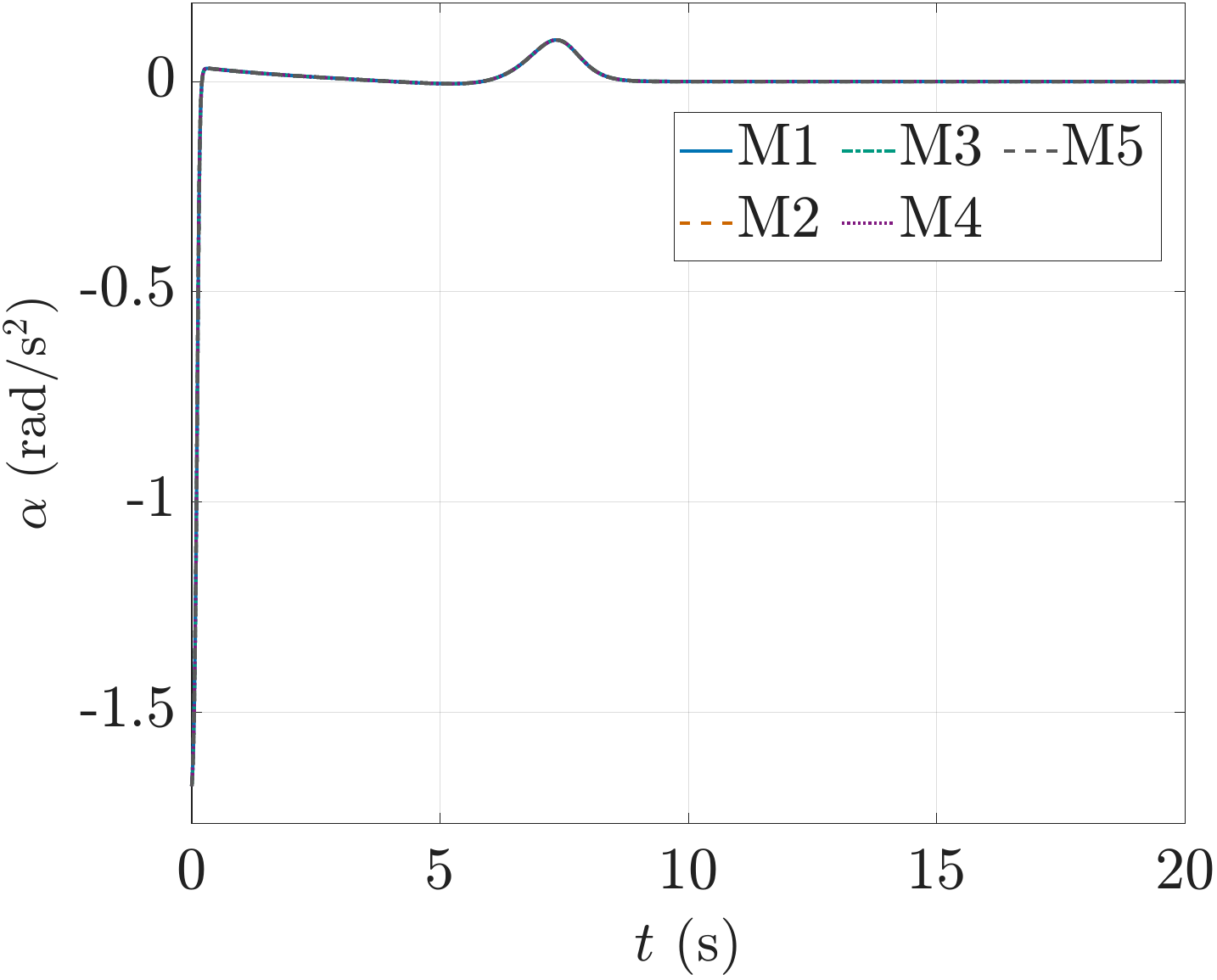}
		\caption{Control input to the link.}
		\label{fig:man_Link_Input}
	\end{subfigure}
	\caption{UAV delivering payload maneuvering platforms.}
	\label{fig:man}
\end{figure}
\subsection{Processor-in-loop simulation}
To assess the feasibility of real-time implementation and evaluate the computational efficiency of the proposed fixed-time cooperative guidance strategy, a Processor-in-the-Loop (PIL) simulation framework is developed. The closed-loop PIL architecture, illustrated in \Cref{fig:pil_diagram}, consists of a Host PC and an embedded target hardware. The coupled two-UAV-payload kinematics, platform dynamics, and relative engagement models are simulated on the Host PC running MATLAB/Simulink. The proposed fixed-time sliding mode guidance laws (\Cref{thm:stat_amr,thm:stat_amtheta,thm:man_amr,thm:man_amtheta}) and the link orientation controller (\Cref{sec:link_orientation}) are automatically converted to C/C++ code and deployed onto a Raspberry Pi embedded microprocessor using the MATLAB Simulink support package for Raspberry Pi hardware. 
\begin{figure}[!ht]
\centering
\begin{tikzpicture}[
	>={Stealth[scale=1.2]},
	block/.style={rectangle, draw=black, thick, fill=blue!5, text width=4cm, align=center, minimum height=1.5cm, rounded corners},
	plant/.style={rectangle, draw=black, thick, fill=green!5, text width=4.5cm, align=center, minimum height=2cm, rounded corners},
	comms/.style={rectangle, draw=gray, dashed, thick, text width=2.5cm, align=center, minimum height=1cm},
	line/.style={draw, thick, ->}
	]
	\node[plant] (pc) at (0,0) {\textbf{Host PC\\ (MATLAB/Simulink)}\\ \vspace{0.2cm} {\small Two-UAV system dynamics}};
	\node[block] (mcu) at (8.5,0) {\textbf{Raspberry Pi}\\ \vspace{0.2cm} {\small Guidance Laws}};
	\node[comms] (tx) at (4.25, 1.5) {Ethernet (UDP) \\ (State Data)};
	\node[comms] (rx) at (4.25, -1.5) {Ethernet (UDP) \\ (Control Commands)};
	\draw[line] (pc.north) |- (tx.west);
	\draw[line] (tx.east) -| (mcu.north) node[pos=0.8, right] {($r, \theta, v_r, v_\theta, \xi$)};
	\draw[line] (mcu.south) |- (rx.east);
	\draw[line] (rx.west) -| (pc.south) node[pos=0.5, left] {($a_{aX}, a_{aY}, a_{bX}, a_{bY}, \alpha$)};
	\node[draw, dashed, gray, fit=(tx) (rx), inner sep=0.5cm, label=above:{\textit{Communication Interface}}] {};
\end{tikzpicture}
\caption{Closed-loop architecture of the Processor-in-the-Loop (PIL) simulation.}
\label{fig:pil_diagram}
\end{figure}

During execution, the Host PC continuously calculates the current engagement states $\mathcal{C}$ (including $r$, $\theta$, $v_r$, $v_\theta$, and link orientation $\xi$) and transmits them to the Raspberry Pi over an Ethernet (UDP) connection. The embedded target computes the required equivalent agent acceleration commands ($a_{uX}, a_{uY}$) and the link angular acceleration ($\alpha$), and subsequently maps them to individual UAV commands ($a_{aX}, a_{aY}, a_{bX}, a_{bY}$) using the allocation strategy defined in \eqref{eqn:compamab}. These computed commands are then transmitted back to the Host PC via Ethernet (UDP) to propagate the plant states for the next time step. The sampling time is fixed as $T_{s} = 0.01$ s. Thus, the available computation time per control cycle is $10$ ms. The initial engagement geometry and controller parameters are kept identical to those used for the desired landing condition of $\theta_{\rm L}=-90^\circ$.

The comparative performance of the pure Simulink and PIL simulations is presented in \Cref{fig:PIL}. As shown in \Cref{fig:PIL}, the UAV successfully delivers the payload to the stationary platform while exhibiting behavior consistent with that observed in the pure Simulink simulation, even when the controller is executed on the Raspberry Pi hardware. The discrepancies between the pure simulation and Raspberry Pi implementation are defined as $z_{1}=r_{\rm PIL}-r_{\rm SIL}$, $z_{2}=e_{\theta,\rm PIL}-e_{\theta,\rm SIL}$, and $z_{3}=e_{\xi,\rm PIL}-e_{\xi,\rm SIL}$. As shown in the \Cref{fig:RPI_Stat_SIL_PIL_Error.eps}, these discrepancies remain small throughout the maneuver and converge to zero. It can be observed from \Cref{fig:RPI_Stat_Computation_Time} that the average execution time of the control loop on the Raspberry Pi is $0.0073$ ms, which is substantially shorter than the $10$ ms sampling interval. The measured worst-case execution time (WCET) is $0.867$ ms, while $99\%$ of the execution times remain below $0.0277$ ms. This, in turn, demonstrates that the proposed guidance strategy can be executed within the available sampling period, which makes it a potential candidate for real-time implementation.
\begin{figure}[!ht]
\centering
\begin{subfigure}{0.33\linewidth}
	\centering
	\includegraphics[width=\linewidth]{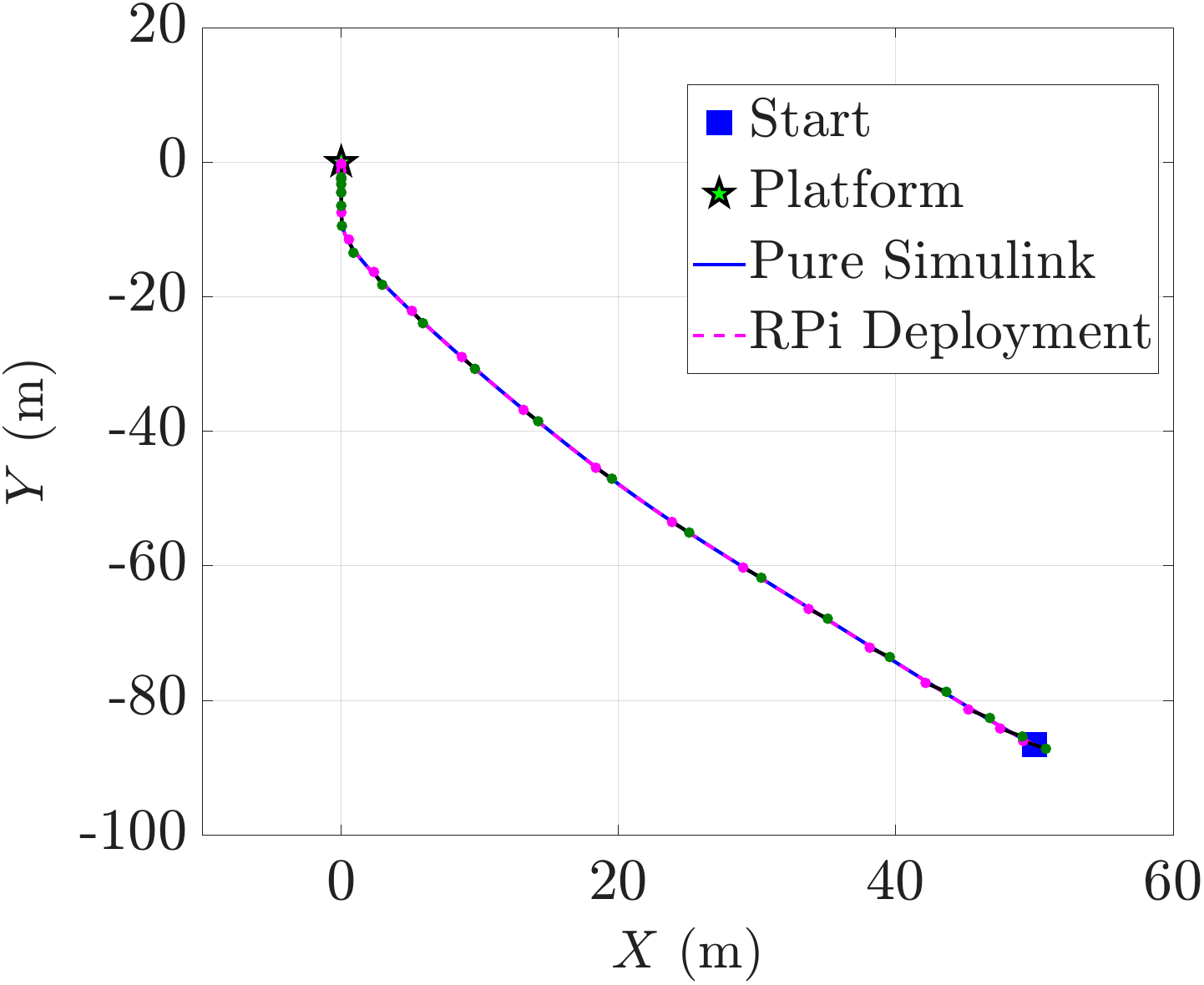}
	\caption{Trajectory.}
	\label{fig:RPI_Stat_Trajectory}
\end{subfigure}%
\begin{subfigure}{0.33\linewidth}
	\centering
	\includegraphics[width=\linewidth]{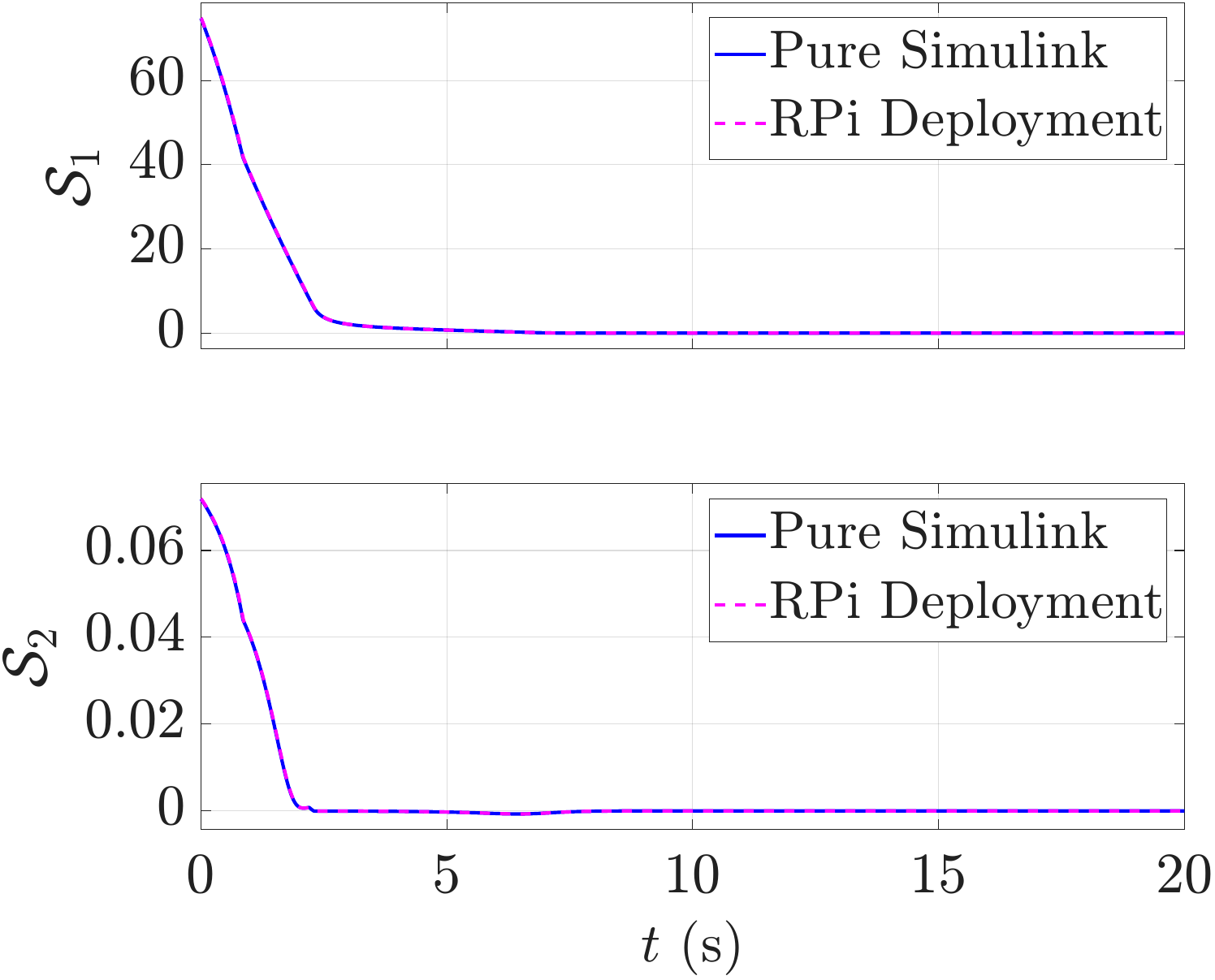}
	\caption{Sliding surfaces.}
	\label{fig:RPI_Stat_Sliding_Surfaces}
\end{subfigure}%
\begin{subfigure}{0.33\linewidth}
	\centering
	\includegraphics[width=\linewidth]{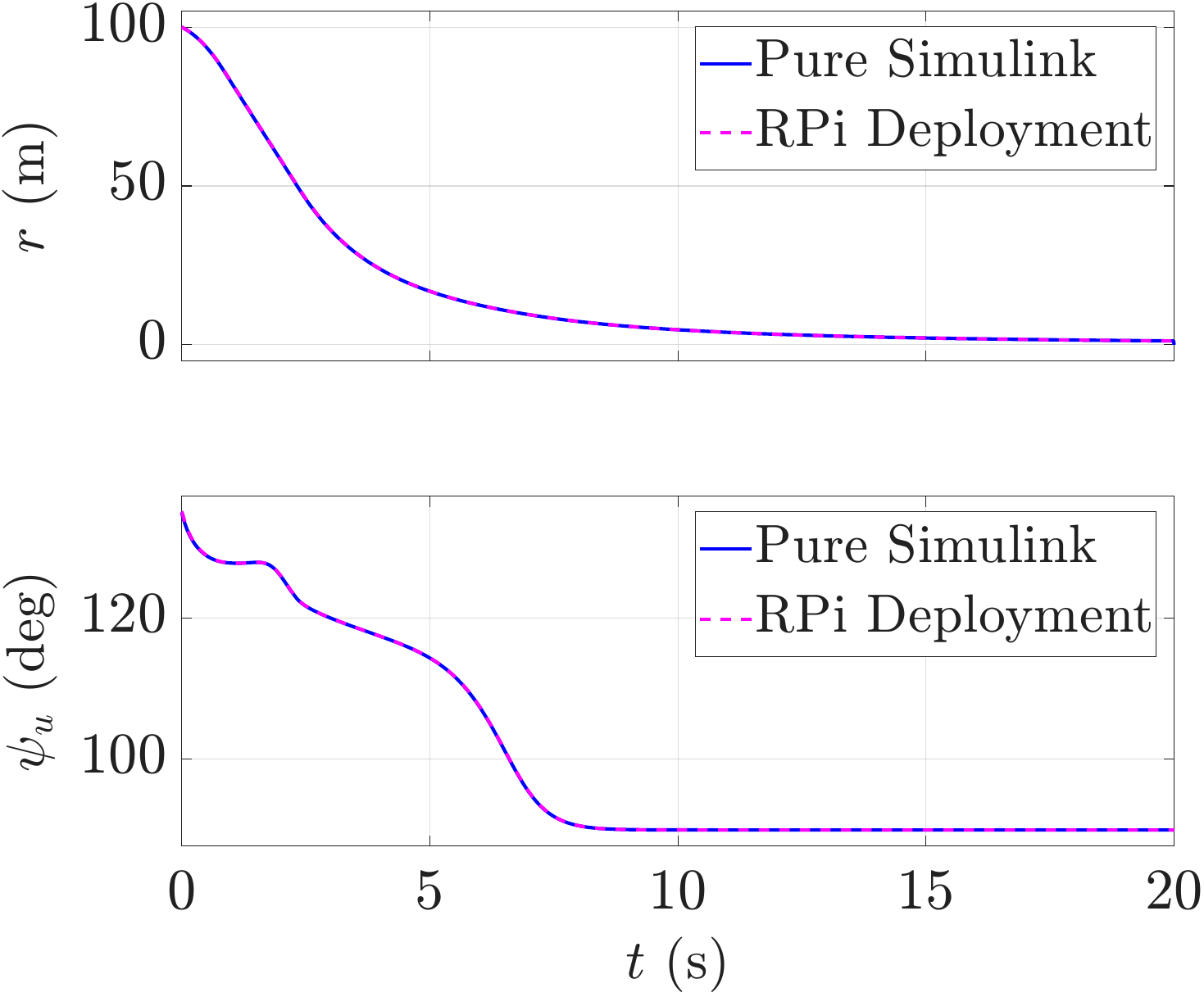}
	\caption{Relative range and heading angle.}
	\label{fig:RPI_Stat_Range_Heading}
\end{subfigure}
\begin{subfigure}{0.33\linewidth}
	\centering
	\includegraphics[width=\linewidth]{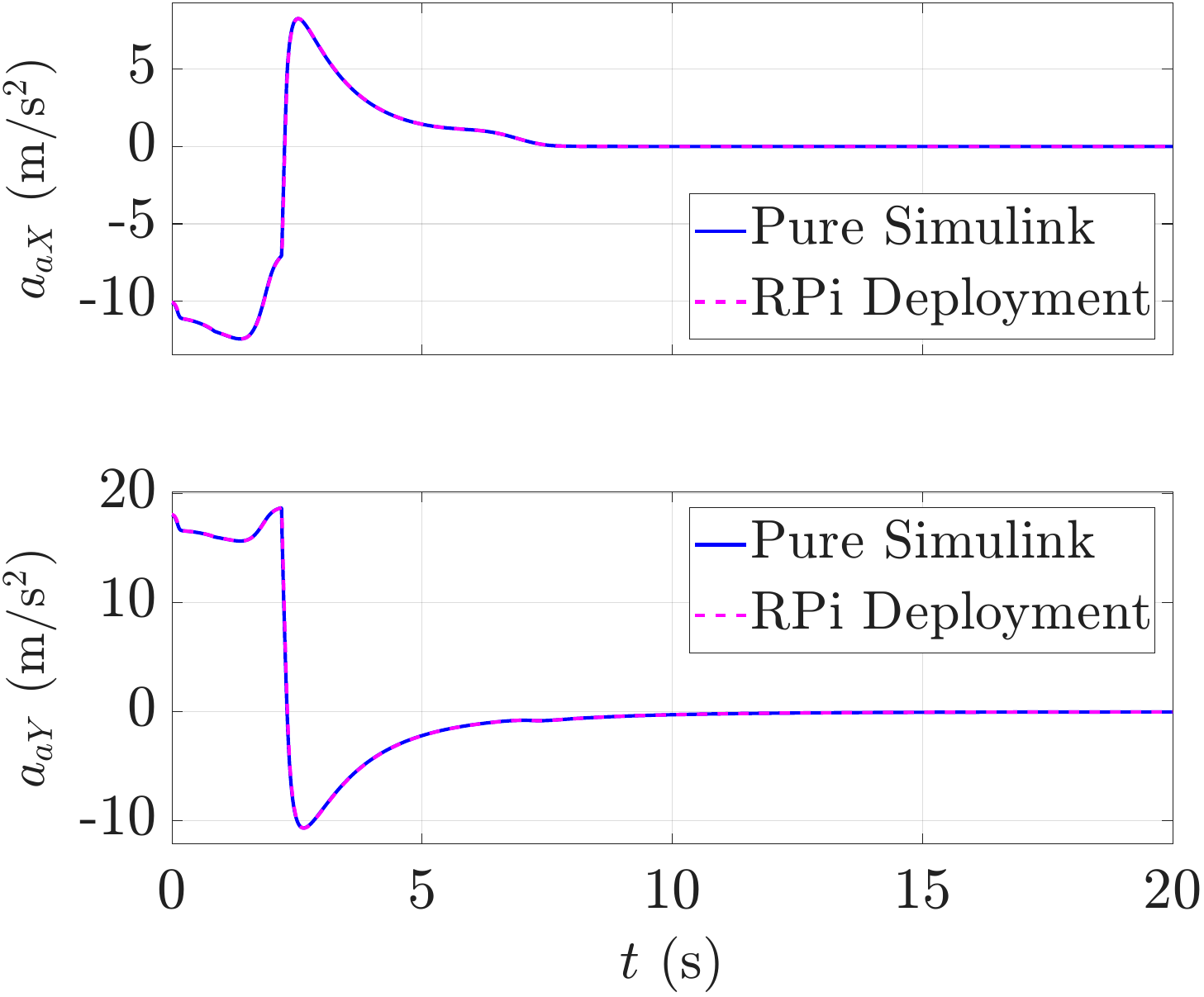}
	\caption{Control input to the equivalent agent U.}
	\label{fig:RPI_Stat_Acc_UAV_A}
\end{subfigure}%
\begin{subfigure}{0.33\linewidth}
	\centering
	\includegraphics[width=\linewidth]{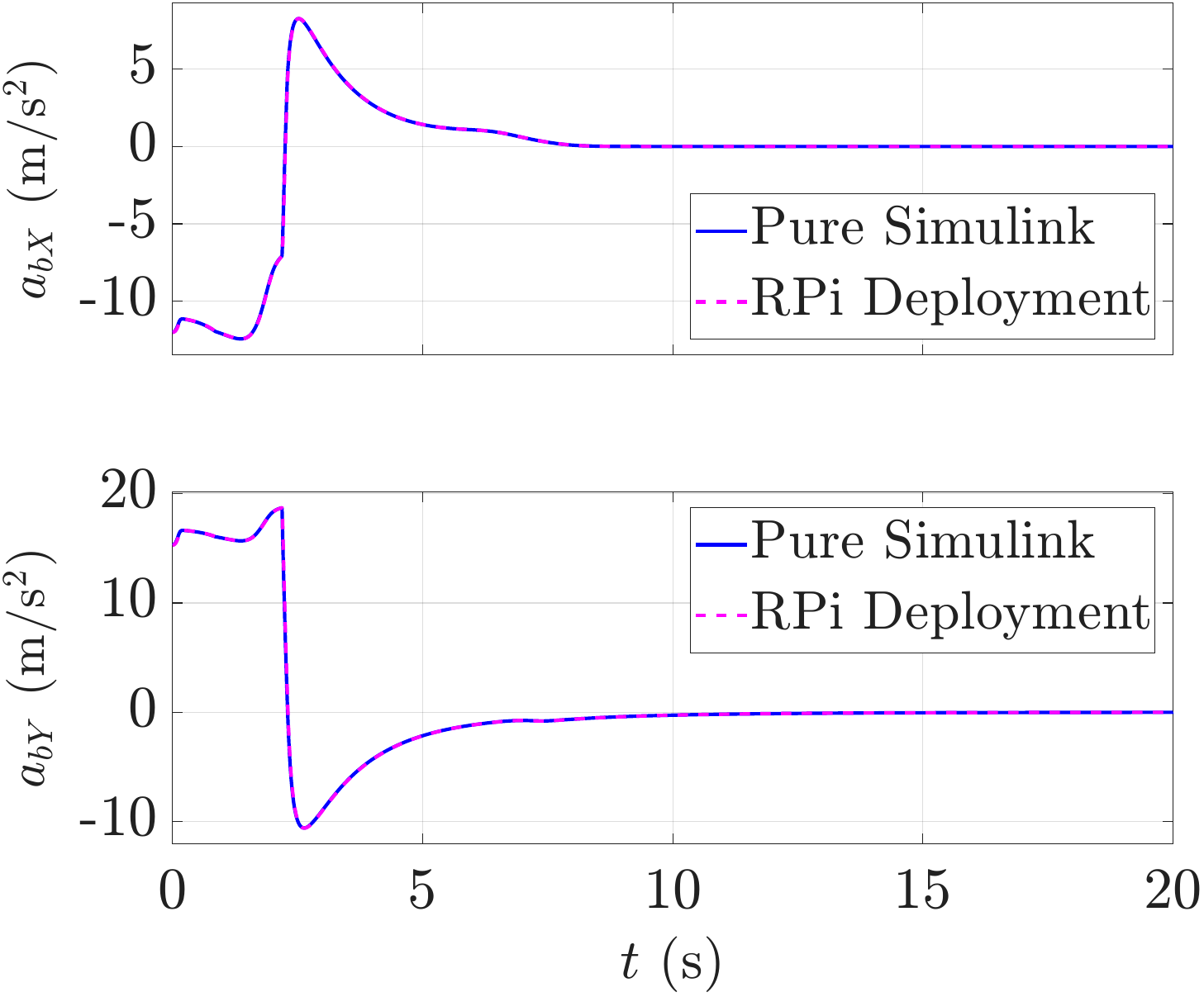}
	\caption{Control input to the UAV B.}
	\label{fig:RPI_Stat_Acc_UAV_B}
\end{subfigure}%
\begin{subfigure}{0.33\linewidth}
	\centering
	\includegraphics[width=\linewidth]{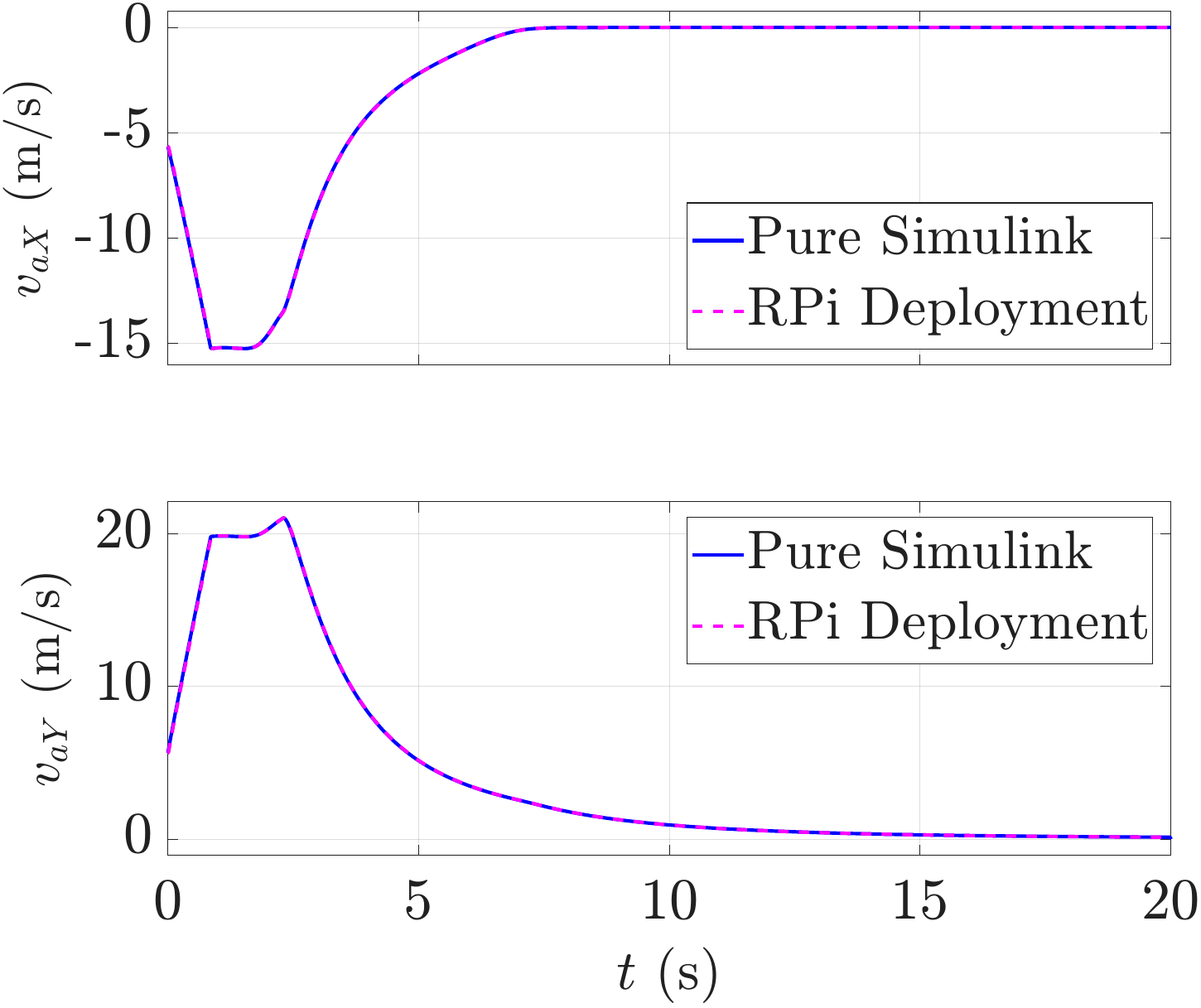}
	\caption{Velocity components of the UAV A.}
	\label{fig:RPI_Stat_Vel_UAV_A}
\end{subfigure}
\begin{subfigure}{0.33\linewidth}
	\centering
	\includegraphics[width=\linewidth]{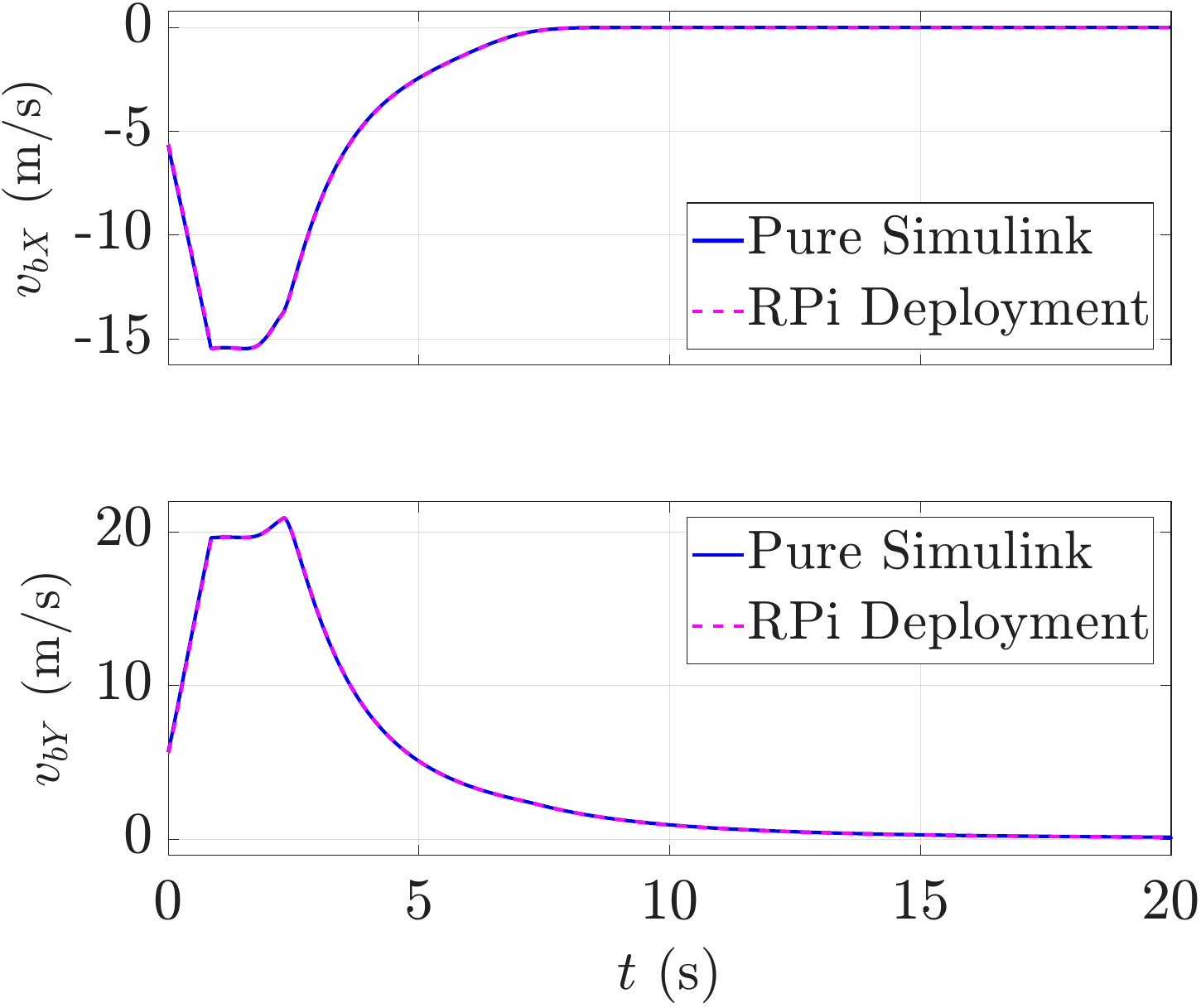}
	\caption{Velocity components of the UAV B.}
	\label{fig:RPI_Stat_Vel_UAV_B}
\end{subfigure}%
\begin{subfigure}{0.33\linewidth}
	\centering
	\includegraphics[width=\linewidth]{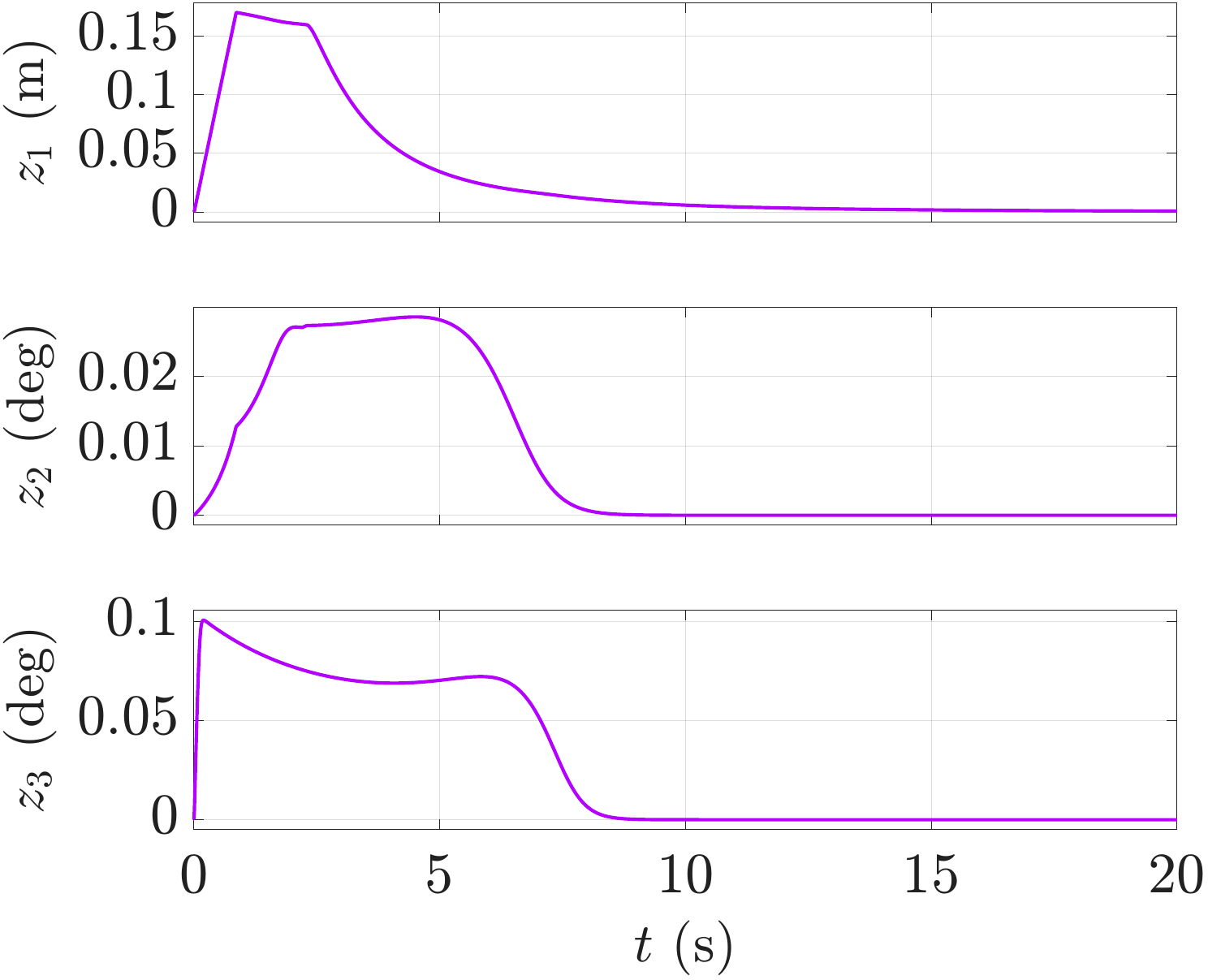}
	\caption{Error discrepancy b/w Simulink and PIL.}
	\label{fig:RPI_Stat_SIL_PIL_Error.eps}
\end{subfigure}%
\begin{subfigure}{0.33\linewidth}
	\centering
	\includegraphics[width=\linewidth]{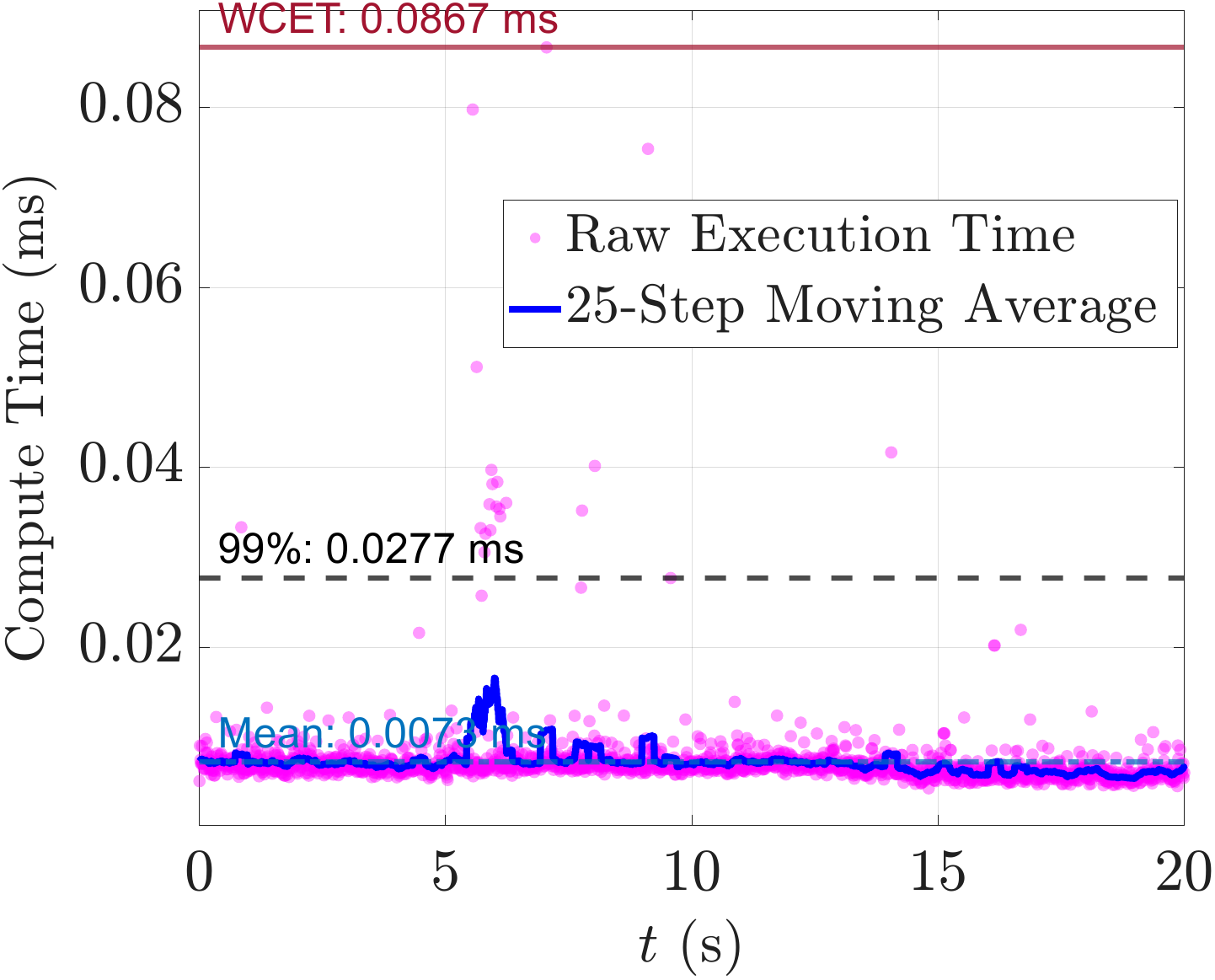}
	\caption{Measured Raspberry Pi execution time.}
	\label{fig:RPI_Stat_Computation_Time}
\end{subfigure}
\caption{PIL validation with Raspberry Pi.}
\label{fig:PIL}
\end{figure}
\subsection{Comparative performance analysis}
To benchmark the efficacy of the proposed strategy, we compare its performance with that of the method discussed in \cite{doi:10.2514/1.G005098}. The simulation setup is kept the same as in the stationary case with four distinct landing angles. The quantitative assessment of the controllers is carried out using eight distinct performance indices: Convergence times ($\mathcal{T}(s_{r 0})$ and $\mathcal{T}(s_{\varrho 0})$): The time required for the range sliding surface ($\mathcal{}{S}_1$) and the LOS angle sliding surface ($\mathcal{S}_2$) to reach and stay within a boundary threshold ($\epsilon = 0.001$) around the sliding manifold. Tracking accuracy ($\mathrm{RMSE}_r$ and $\mathrm{RMSE}_\varrho$): The root mean square error for the relative range and LOS angle, measuring the total deviation from the desired approach trajectory. Virtual control effort of equivalent agent ($\mathrm{IACE}$ and $\mathrm{ISCE}$): The integral of absolute control effort ($\int \sqrt{a_{UX}^2 + a_{UY}^2} \mathrm{d}t$) and the integral of squared control effort ($\int (a_{UX}^2 + a_{UY}^2) \mathrm{d}t$) represent the aggregate acceleration commands generated for the virtual center of mass (CoM). Individual UAV control energy ($E_A$ and $E_B$): The total physical control energy expended by each UAV, calculated as $\int [(a_{aX})^2 + (a_{aY})^2]\mathrm{d}t$ for UAV A and similarly for UAV B.

The results of this comparative study are presented in \Cref{fig:stat_comp} as a series of bar charts and a normalized radar graph. One can observe in \Cref{fig:Stat_Trajectory_rao} that the UAV successfully delivers the payload to the stationary platform in all four scenarios, even when using the guidance strategy discussed in \cite{doi:10.2514/1.G005098}. The sliding surfaces and linear speed profiles of the equivalent agent are shown in \Cref{fig:Stat_Sliding_Surfaces_rao,fig:Stat_Velocity_rao}. One can observe that sliding mode is imposed on the sliding surfaces $\mathcal{S}_{1}$ and $\mathcal{S}_{2}$ at distinct times in all cases. As sliding mode is imposed on the sliding surfaces in all cases, the UAV system, in fact, delivers the payload at the desired landing angles. \Cref{fig:Metric_T_Slide_S1,fig:Metric_T_Slide_S2} illustrate the convergence times for the sliding surfaces. Across all landing angles, the proposed controller achieves faster convergence to the sliding manifolds than the method in \cite{doi:10.2514/1.G005098}. Additionally, the convergence time remains almost uniform for the proposed method, whereas it varies for the method in \cite{doi:10.2514/1.G005098}. Furthermore, the tracking performance, depicted in \Cref{fig:Metric_RMSE_r,fig:Metric_RMSE_th}, reveals that the proposed method consistently yields lower $\mathrm{RMSE}_r$ and $\mathrm{RMSE}_\varrho$ values, which ensure tighter trajectory adherence and minimize deviation during the critical approach phase.  As shown in \Cref{fig:Metric_IACE,fig:Metric_ISCE}, the proposed method demonstrates a noticeable reduction in both the IACE and ISCE for the equivalent agent. When this virtual effort is mapped to the physical UAVs through the control allocation scheme, the individual energy consumption indices ($E_A$ and $E_B$) are shown in \Cref{fig:Metric_E_U1,fig:Metric_E_U2}. One can observe that both UAV $A$ and UAV $B$ require significantly less energy with the proposed strategy than the method in \cite{doi:10.2514/1.G005098}. This, in turn, improves the UAV system's endurance. To provide a holistic view of the comparative assessment, a normalized radar chart is presented in \Cref{fig:Metric_Spider}. For this graph, the average values of all eight metrics are normalized such that the maximum (worst) value for each axis is scaled to $1.0$. A smaller enclosed polygon area indicates superior overall performance. The proposed method visually occupies a strictly smaller footprint than the \cite{doi:10.2514/1.G005098} across all axes. This demonstrates that the proposed cooperative delivery strategy achieves faster convergence and higher tracking precision while simultaneously demanding less control energy from the physical UAVs.
\begin{figure}[!ht]
	\centering
	\begin{subfigure}{0.33\linewidth}
		\centering
		\includegraphics[width=\linewidth]{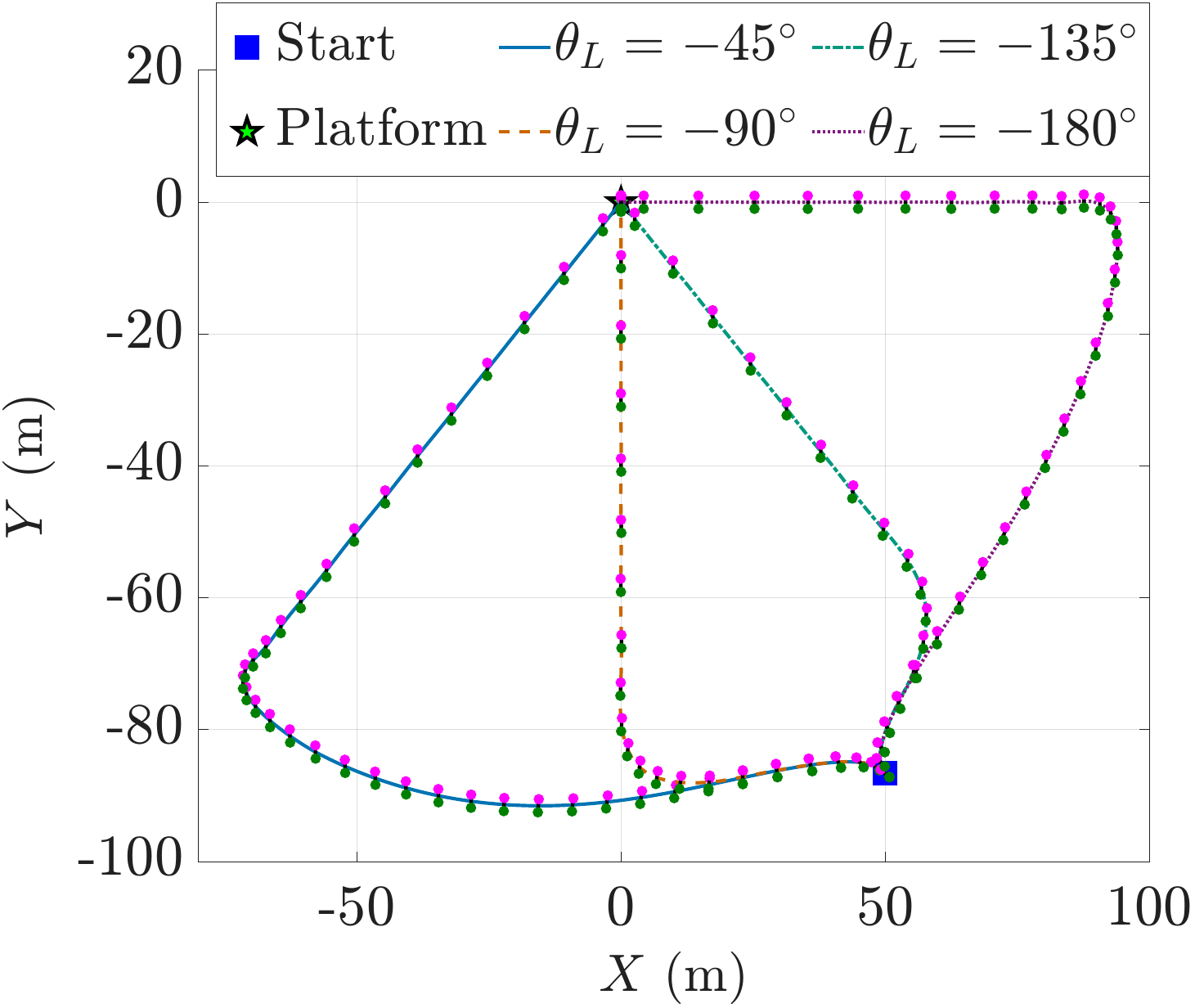}
		\caption{Trajectory.}
		\label{fig:Stat_Trajectory_rao}
	\end{subfigure}%
	\begin{subfigure}{0.33\linewidth}
		\centering
		\includegraphics[width=\linewidth]{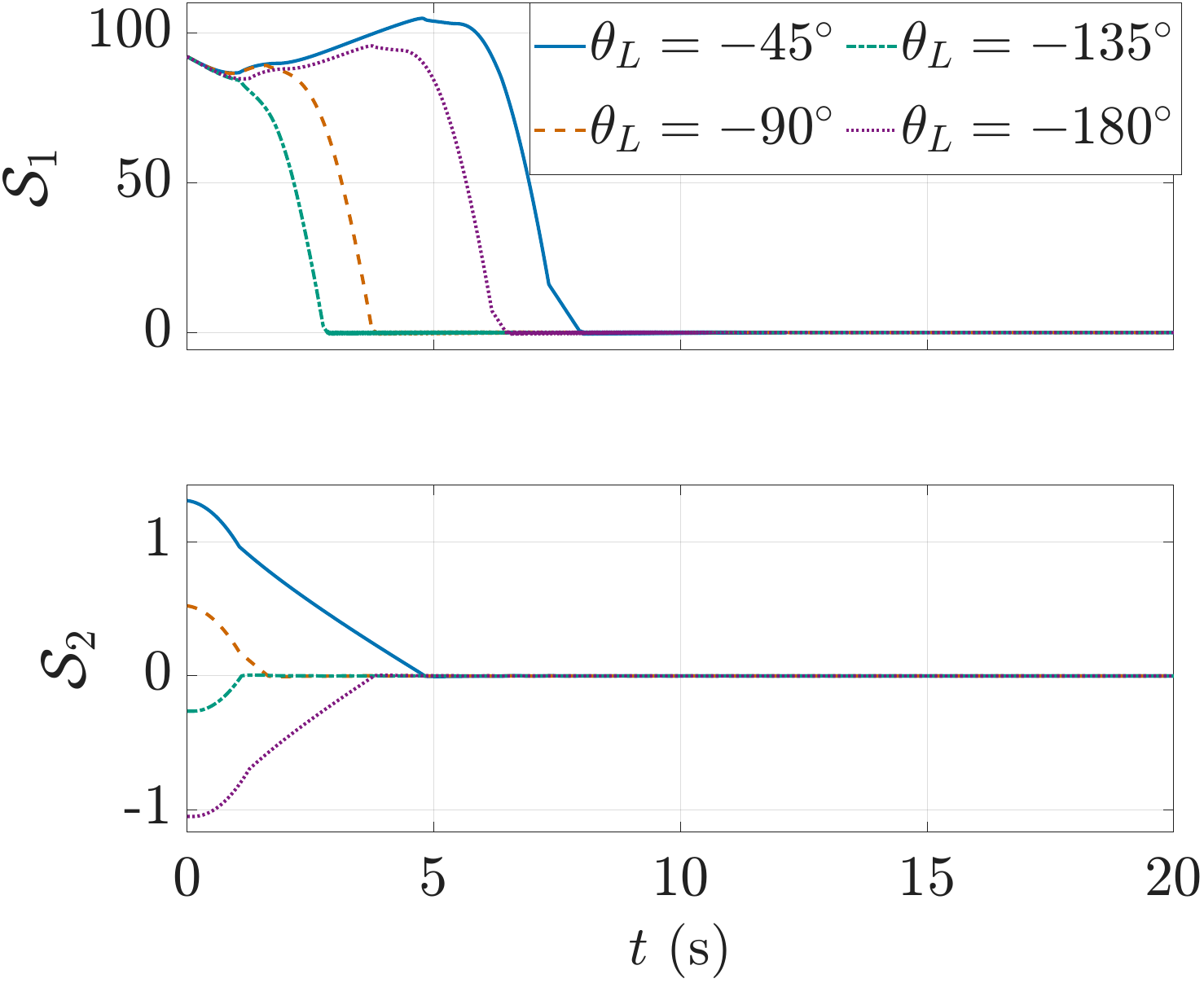}
		\caption{Sliding surfaces.}
		\label{fig:Stat_Sliding_Surfaces_rao}
	\end{subfigure}%
	\begin{subfigure}{0.33\linewidth}
		\centering
		\includegraphics[width=\linewidth]{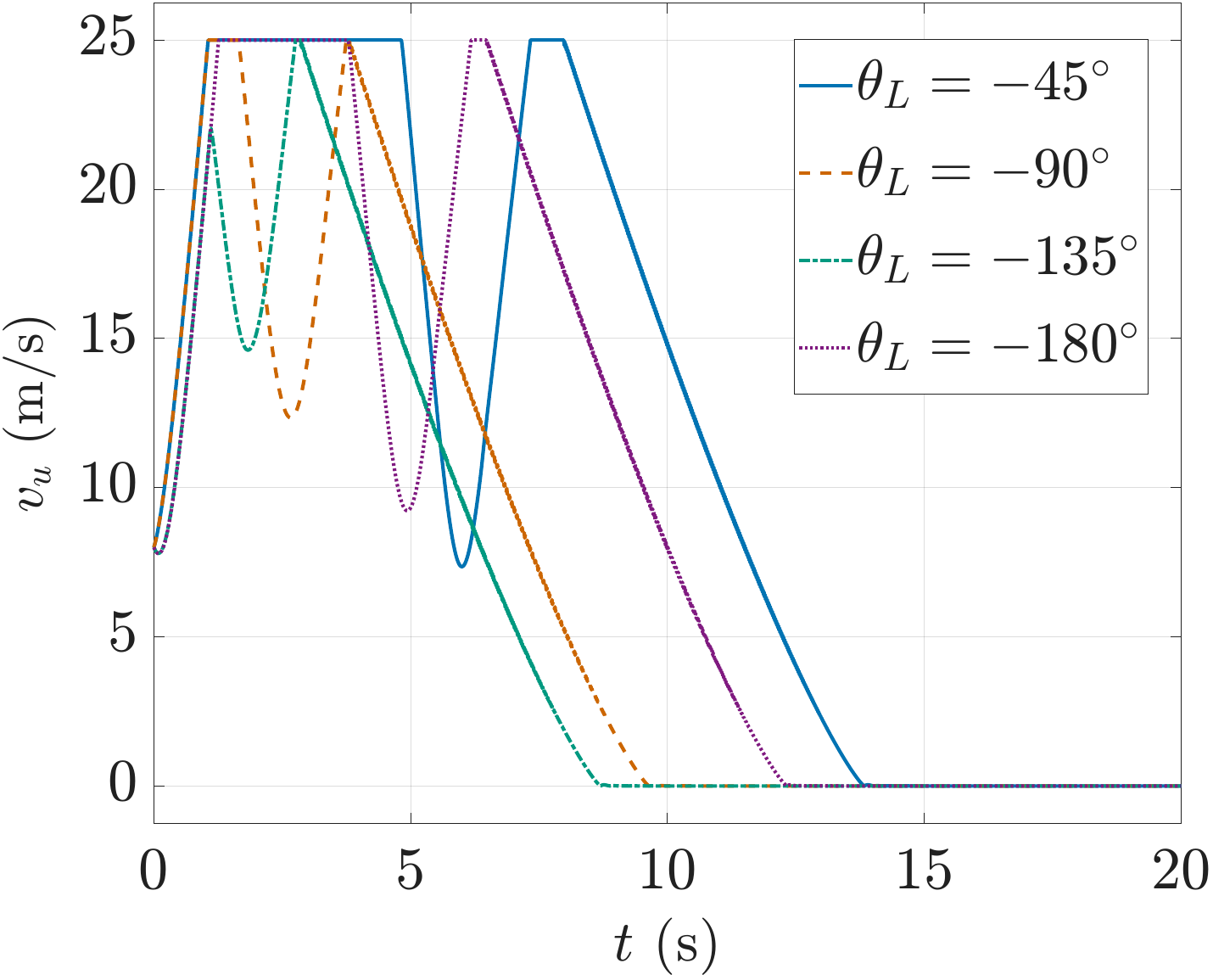}
		\caption{Relative range and heading angle.}
		\label{fig:Stat_Velocity_rao}
	\end{subfigure}
	\begin{subfigure}{0.33\linewidth}
		\centering
		\includegraphics[width=\linewidth]{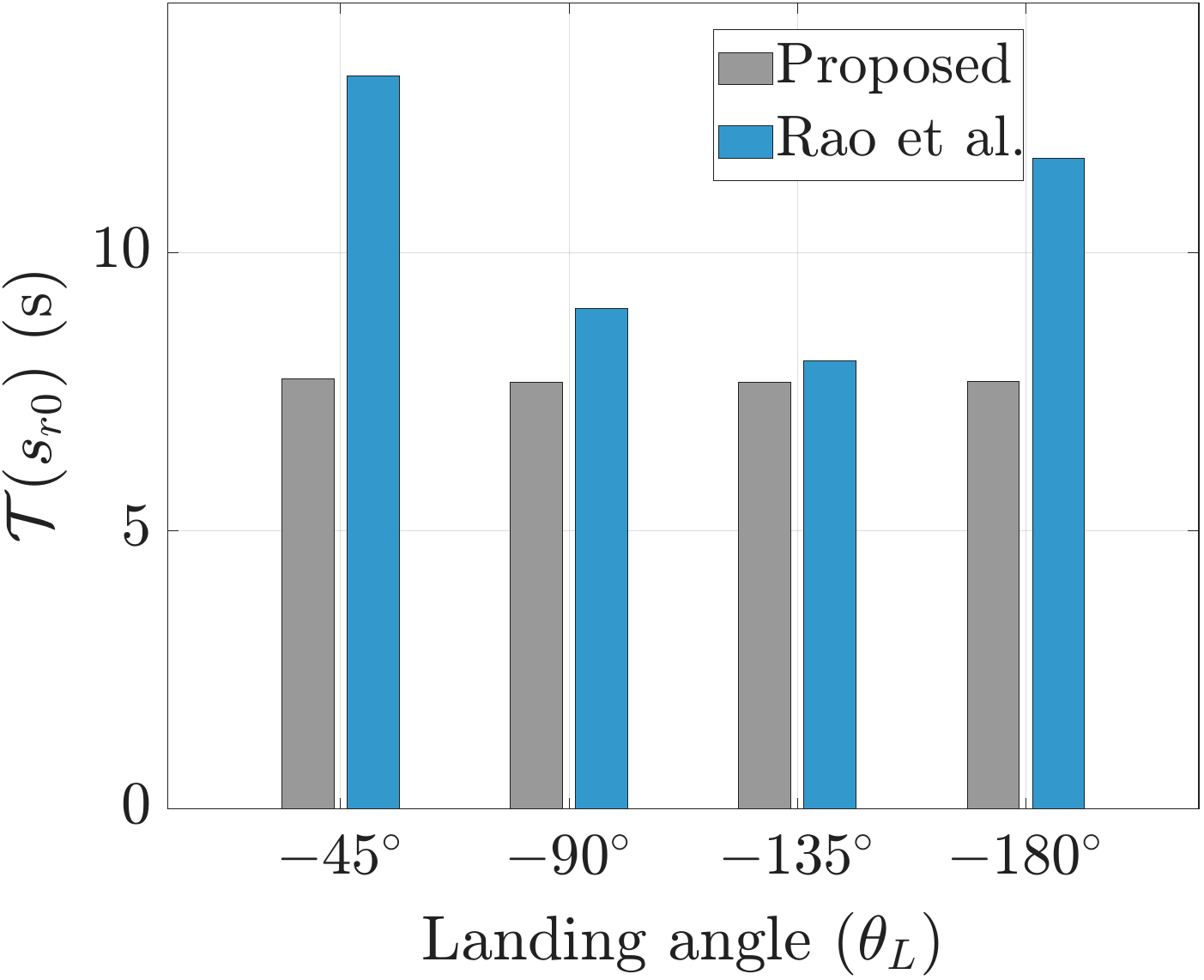}
		\caption{Convergence time to $\mathcal{S}_{1}$.}
		\label{fig:Metric_T_Slide_S1}
	\end{subfigure}%
	\begin{subfigure}{0.33\linewidth}
		\centering
		\includegraphics[width=\linewidth]{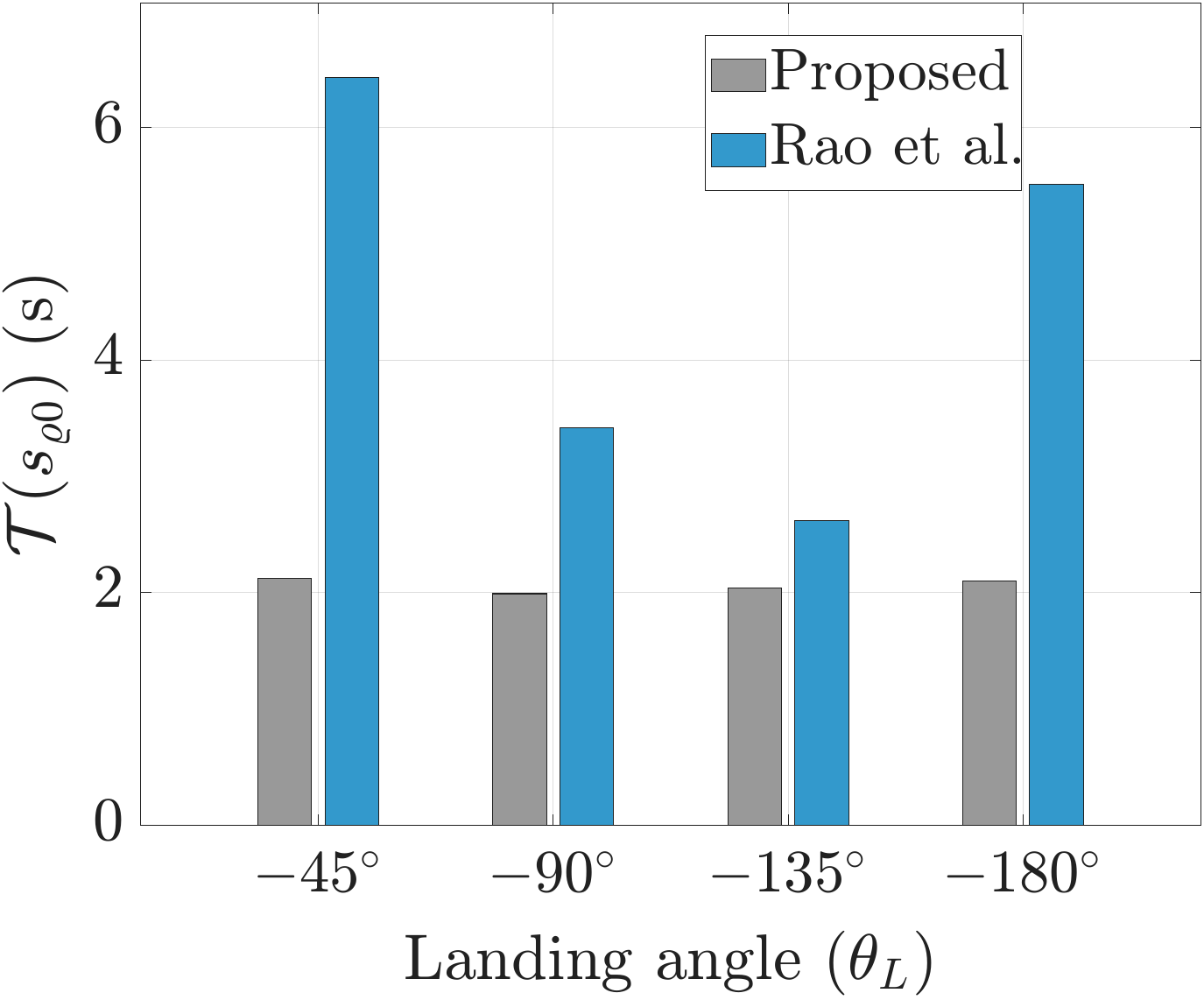}
		\caption{Convergence time to $\mathcal{S}_{2}$.}
		\label{fig:Metric_T_Slide_S2}
	\end{subfigure}%
	\begin{subfigure}{0.33\linewidth}
		\centering
		\includegraphics[width=\linewidth]{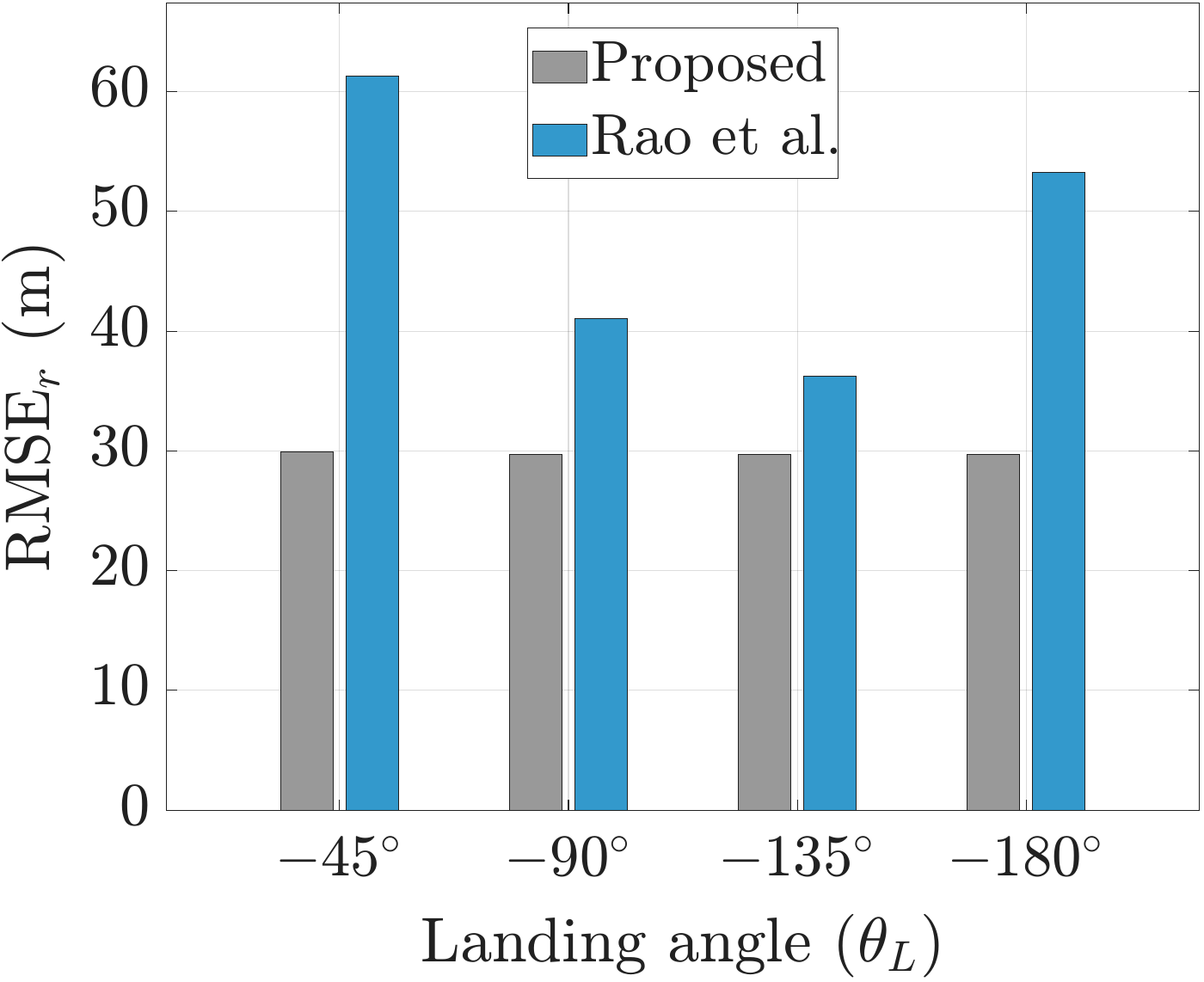}
		\caption{Root mean square error for range.}
		\label{fig:Metric_RMSE_r}
	\end{subfigure}
	\begin{subfigure}{0.33\linewidth}
		\centering
		\includegraphics[width=\linewidth]{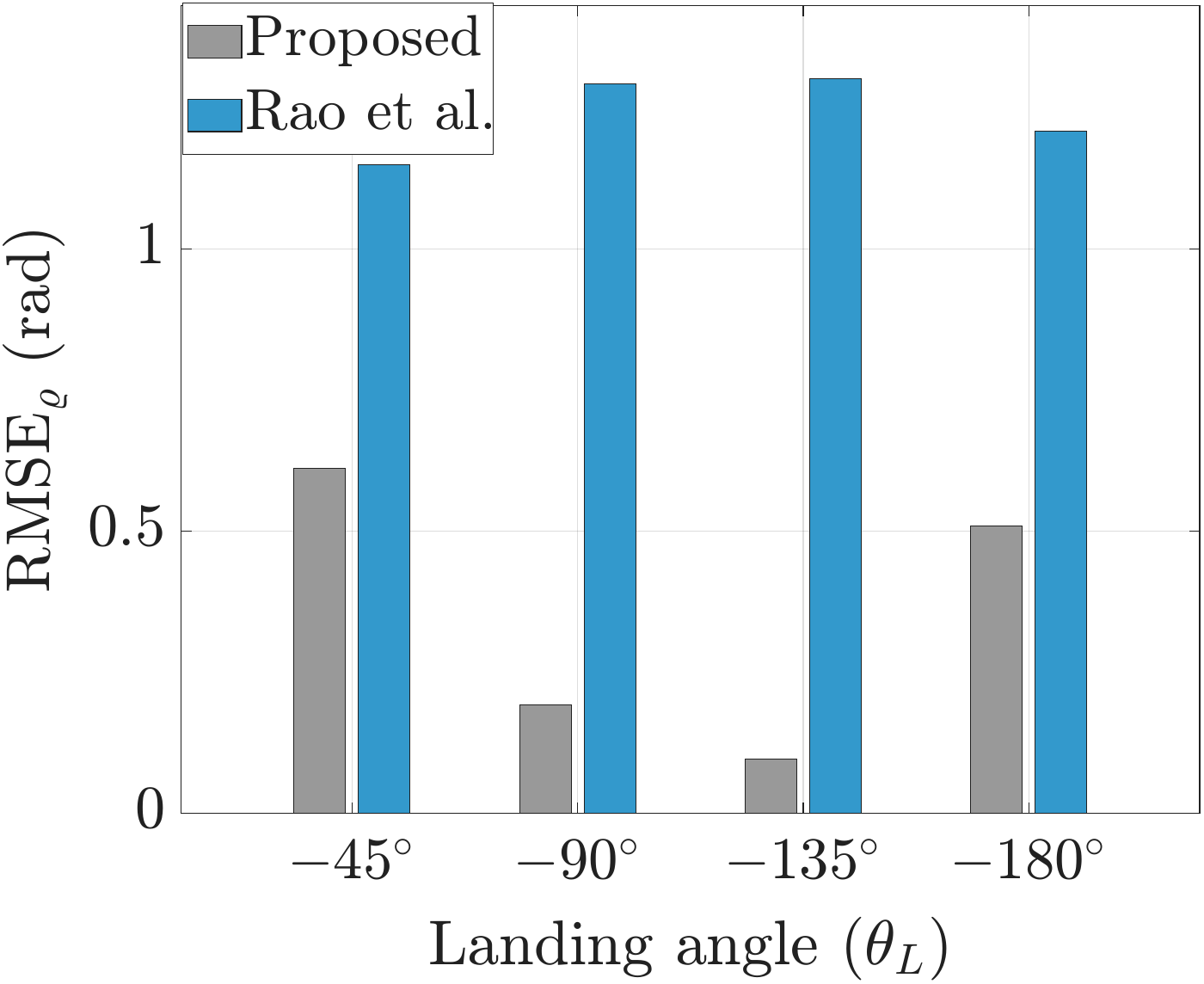}
		\caption{Root mean square error for LOS angle.}
		\label{fig:Metric_RMSE_th}
	\end{subfigure}%
	\begin{subfigure}{0.33\linewidth}
		\centering
		\includegraphics[width=\linewidth]{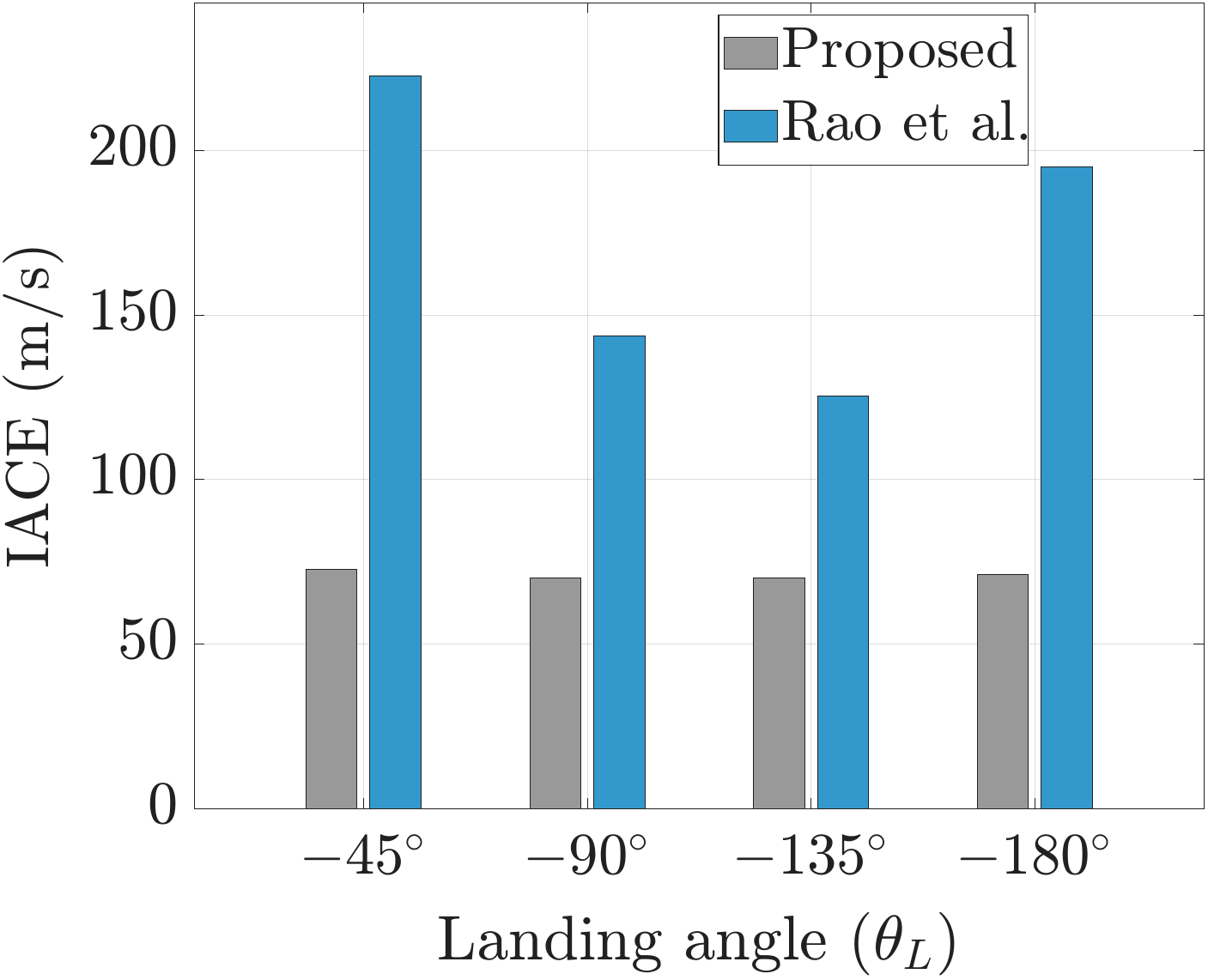}
		\caption{Control effort of equivalent agent.}
		\label{fig:Metric_IACE}
	\end{subfigure}%
	\begin{subfigure}{0.33\linewidth}
		\centering
		\includegraphics[width=\linewidth]{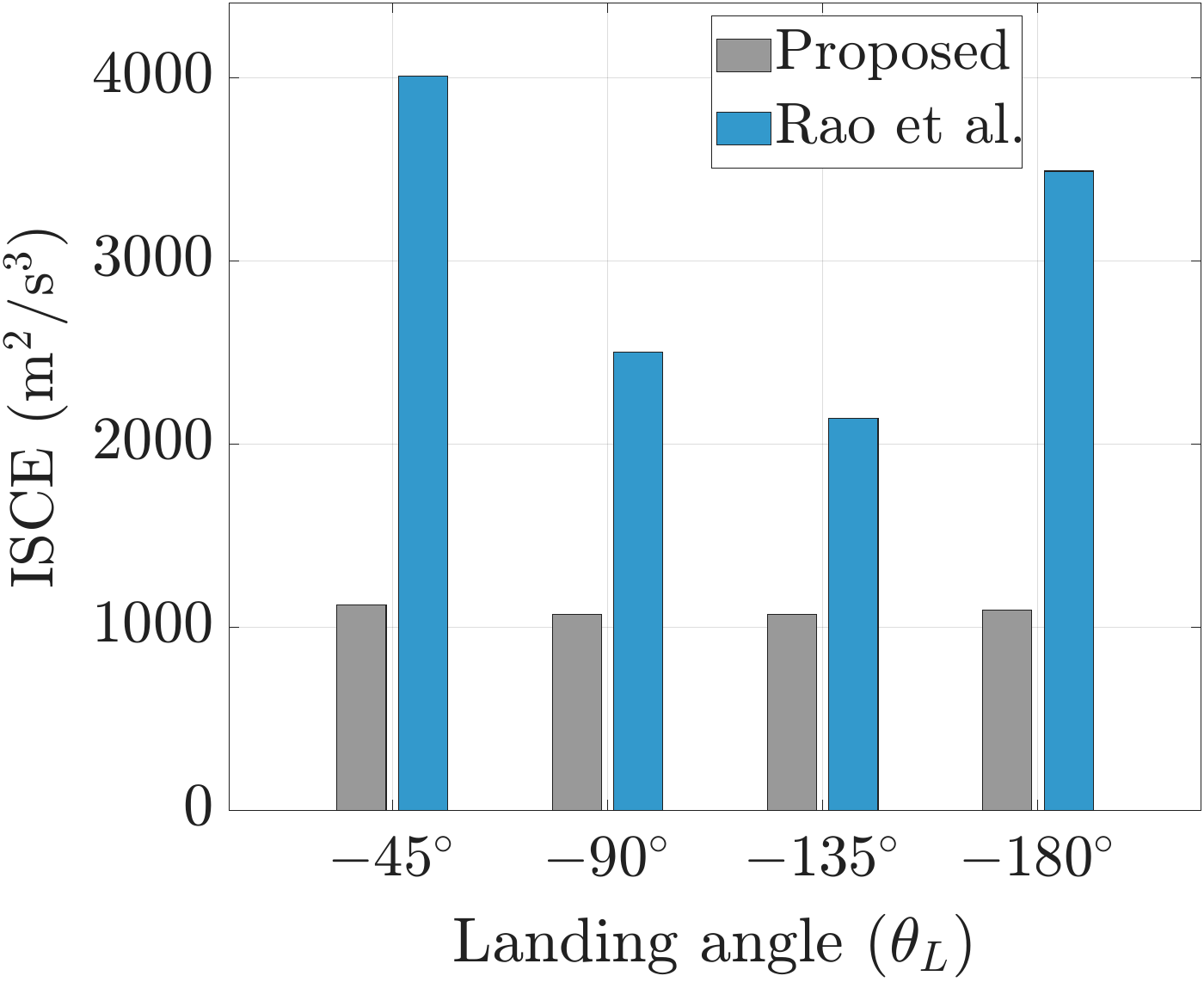}
		\caption{Control effort of equivalent agent.}
		\label{fig:Metric_ISCE}
	\end{subfigure}
	\begin{subfigure}{0.33\linewidth}
		\centering
		\includegraphics[width=\linewidth]{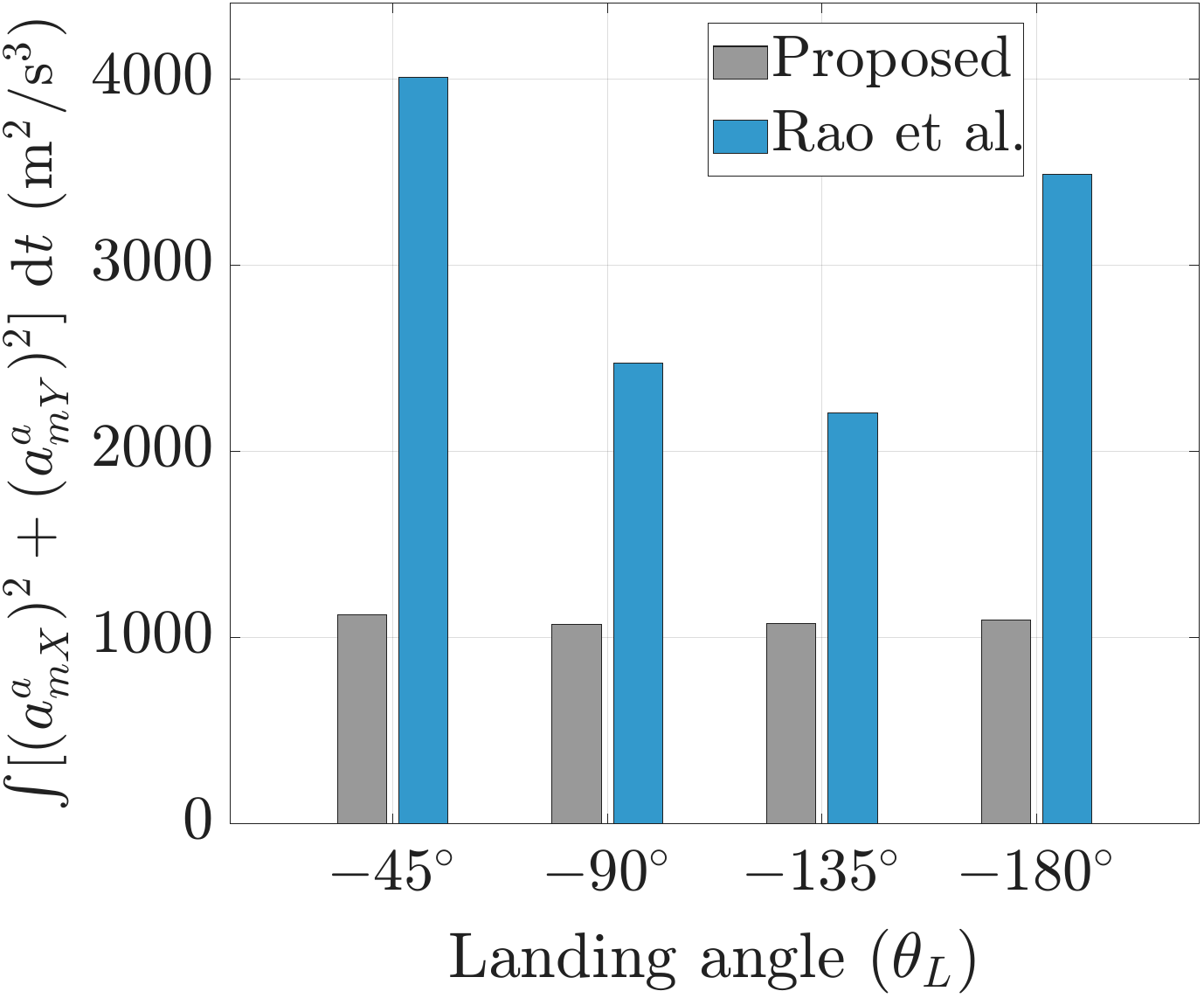}
		\caption{Control energy expended by UAV A.}
		\label{fig:Metric_E_U1}
	\end{subfigure}%
	\begin{subfigure}{0.33\linewidth}
		\centering
		\includegraphics[width=\linewidth]{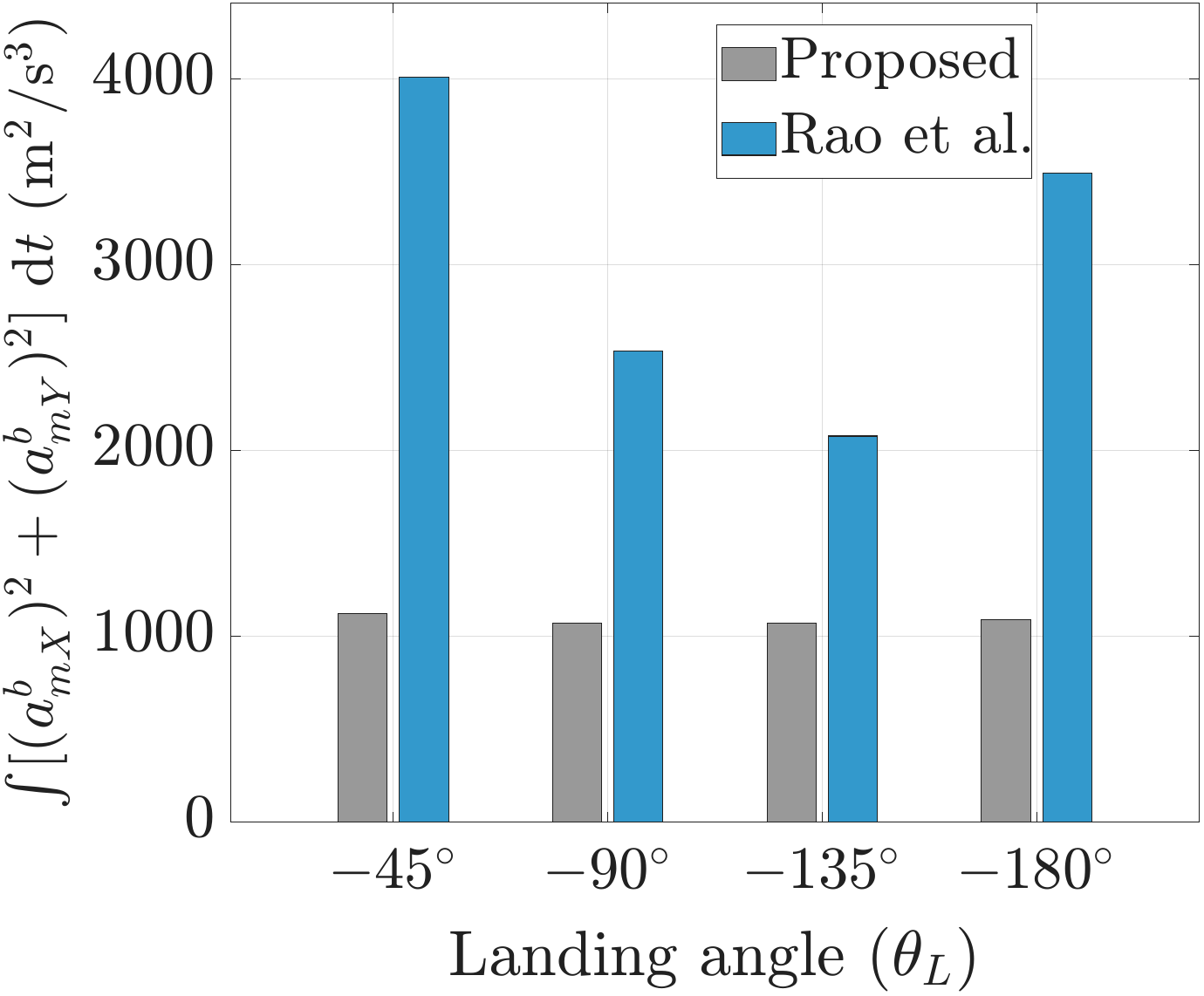}
		\caption{Control energy expended by UAV B.}
		\label{fig:Metric_E_U2}
	\end{subfigure}%
	\begin{subfigure}{0.33\linewidth}
		\centering
		\includegraphics[width=\linewidth]{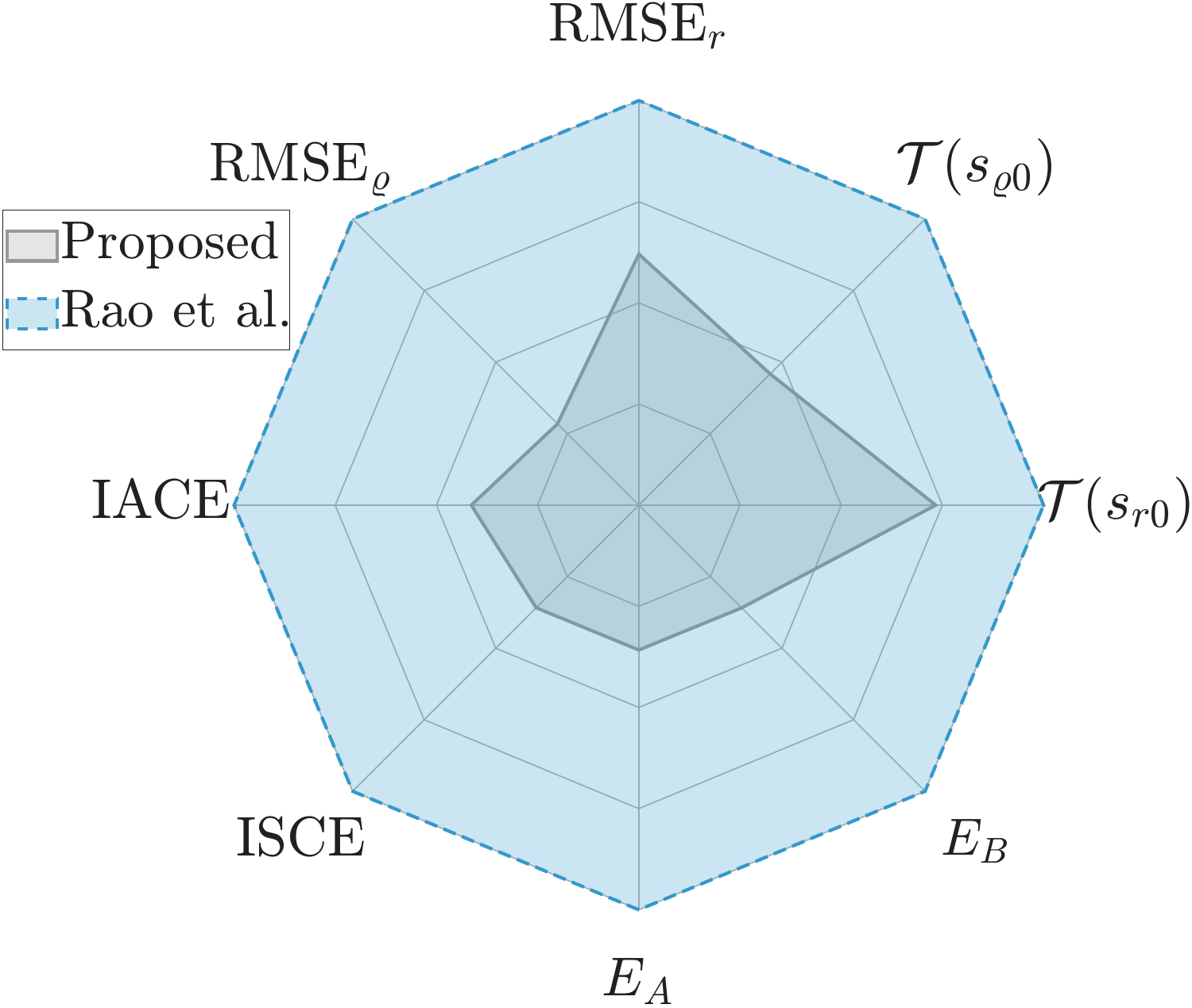}
		\caption{Normalized radar chart.}
		\label{fig:Metric_Spider}
	\end{subfigure}
	\caption{Comparative performance analysis: UAV delivering payload on a stationary platform.}
	\label{fig:stat_comp}
\end{figure}
\section{Conclusions}\label{ch:conclusions}
This paper presented a guidance framework for cooperative transportation of a rigid payload by a two-UAV system to stationary and maneuvering landing platforms. By exploiting the rigid-link geometry, the translational motion of the coupled UAV--payload system was represented through a virtual equivalent agent, allowing the payload-delivery problem to be formulated as a relative engagement problem with respect to the landing platform. This formulation provided a guidance-level description of the cooperative transportation task in terms of relative range and LOS dynamics, while retaining a systematic mapping to the individual UAVs. For a stationary platform, the payload can be delivered with an arbitrary prescribed landing angle by regulating the terminal LOS geometry. On the other hand, for a maneuvering platform, successful delivery with zero relative velocity requires velocity and heading synchronization between the equivalent agent and the platform, which consequently restricts the terminal landing angle to zero. Based on the equivalent-agent engagement formulation, fixed-time sliding-mode guidance laws were developed for the relative-range and LOS dynamics. For stationary platforms, the proposed strategy regulates the range and LOS angle variables toward the desired terminal configuration, while for maneuvering platforms, bounded platform acceleration is incorporated into the guidance design. A link-orientation controller and an acceleration-allocation scheme were subsequently developed to realize the virtual guidance commands at the individual UAVs while maintaining the prescribed rigid-link configuration. Numerical simulations demonstrated successful payload delivery for multiple prescribed landing angles on stationary platforms and under different maneuvering conditions of the landing platform. Comparative simulations further showed improved convergence and tracking performance, together with reduced integrated acceleration effort relative to the benchmark strategy considered in this work. A processor-in-the-loop implementation on a Raspberry Pi demonstrated that the proposed guidance architecture can be executed within the selected sampling interval, supporting its computational feasibility for real-time operation. Future work will focus on extending the proposed guidance framework to three-dimensional settings, directly incorporating actuator constraints into the guidance design, and leveraging heterogeneous UAV capabilities.
\bibliographystyle{aiaa}
\bibliography{references}
\end{document}